\renewcommand{\d}{\mathrm{d}}
\newcommand{\diag}{\operatorname{diag}}
\newcommand{\polylog}{\operatorname{poly}\log}
\newcommand{\poly}{\operatorname{poly}}
\renewcommand{\Tr}{\operatorname{Tr}}
\newcommand{\mc}[1]{\mathcal{#1}}
\newcommand{\wt}[1]{\widetilde{#1}}
\renewcommand{\abs}[1]{\left\lvert#1\right\rvert}
\renewcommand{\norm}[1]{\left\lVert#1\right\rVert}
\newcommand{\ud}{\,\mathrm{d}}
\newcommand{\Or}{\mathcal{O}}
\newcommand{\RR}{\mathbb{R}}
\newcommand{\CC}{\mathbb{C}}
\newcommand{\ZZ}{\mathbb{Z}}
\newcommand{\xx}{\bm{x}}
\newcommand{\zz}{\bm{z}}
\newcommand{\UU}{\bm{U}}
\newcommand{\bsigma}{\bm{\sigma}}
\newcommand{\bphi}{\bm{\phi}}
\newcommand{\bbeta}{\bm{\eta}}
\newtheorem{thm}{\protect\theoremname}
\theoremstyle{plain}
\theoremstyle{plain}
\newtheorem{lem}[thm]{\protect\lemmaname}
\theoremstyle{plain}
\newtheorem{rem}[thm]{\protect\remarkname}
\theoremstyle{plain}
\newtheorem*{lem*}{\protect\lemmaname}
\theoremstyle{plain}
\newtheorem{prop}[thm]{\protect\propositionname}
\theoremstyle{plain}
\newtheorem{defn}[thm]{\protect\definitionname}
\newtheorem{assump}[thm]{\protect\assumptionname}
\theoremstyle{definition}
\providecommand{\definitionname}{Definition}
\providecommand{\assumptionname}{Assumption}
\providecommand{\corollaryname}{Corollary}
\providecommand{\lemmaname}{Lemma}
\providecommand{\propositionname}{Proposition}
\providecommand{\remarkname}{Remark}
\providecommand{\examplename}{Example}
\providecommand{\theoremname}{Theorem}
\providecommand{\conjecturename}{Conjecture}
\providecommand{\assumptionname}{Assumption}
\renewcommand{\ketbra}[2]{\mathinner{|{#1}\rangle\!\langle{#2}|}}
\newcommand{\tildeL}{\widetilde{L}}
\begin{document}

\title{Operator-Level Quantum Acceleration of Non-Logconcave Sampling}
\author[1,2] {Jiaqi Leng}
\author[2,3] {Zhiyan Ding}
\author[2] {Zherui Chen}
\author[2,4,*] {Lin Lin}
\affil[1]{Simons Institute for the Theory of Computing, University of California, Berkeley, USA}
\affil[2]{Department of Mathematics, University of California, Berkeley, USA}
\affil[3]{Department of Mathematics, University of Michigan, Ann Arbor, USA}
\affil[4]{Applied Mathematics and Computational Research Division, Lawrence Berkeley National Laboratory, Berkeley, USA}

\date{}

\maketitle

\renewcommand{\thefootnote}{*}
\footnotetext{\href{mailto:linlin@math.berkeley.edu}{linlin@math.berkeley.edu}}
\renewcommand{\thefootnote}{\arabic{footnote}} 

\begin{abstract} 
  Sampling from probability distributions of the form $\sigma \propto e^{-\beta V}$, where $V$ is a continuous potential, is a fundamental task across physics, chemistry, biology, computer science, and statistics. However, when $V$ is non-convex, the resulting distribution becomes non-logconcave, and classical methods such as Langevin dynamics often exhibit poor performance. We introduce the first quantum algorithm that provably accelerates a broad class of continuous-time sampling dynamics.  For Langevin dynamics, our method encodes the target Gibbs measure into the amplitudes of a quantum state, identified as the kernel of a block matrix derived from a factorization of the Witten Laplacian operator. This connection enables Gibbs sampling via singular value thresholding and yields up to a \textit{quartic} quantum speedup over best-known classical Langevin-based methods in the non-logconcave setting.
  Building on this framework, we further develop the first quantum algorithm that accelerates replica exchange Langevin diffusion, a widely used method for sampling from complex, rugged energy landscapes. 
\end{abstract}

\vspace{4mm}
\textit{\textbf{Significance Statement:}
Sampling from Gibbs distributions under continuous, often non-convex, potentials is a fundamental challenge, which hinders classical methods such as the Langevin dynamics. Existing quantum approaches primarily rely on quantum walks to accelerate classical sampling algorithms, which limits the design space of quantum algorithms to the choice of classical counterparts and inherits technical difficulties in error analysis due to temporal discretization.
We introduce a new, versatile framework for accelerating general sampling processes using quantum computers. In particular, our quantum algorithm is constructed at the operator level without relying on time discretization, thereby providing a new path for quantizing continuous-time processes.}

\section*{Introduction}
Given a continuous potential function $V: \mathbb{R}^d \to \mathbb{R}$ and an inverse temperature $\beta$, sampling from the Gibbs distribution $\sigma \propto e^{-\beta V}$ (sometimes called classical Gibbs sampling) 
is an important task in computational sciences. Originally rooted in
statistical physics, Gibbs sampling has found widespread applications in
computational chemistry~\cite{FrenkelSmit2002},
statistics~\cite{geman1984stochastic,kruschke2010bayesian,RobertCasellaCasella1999}, and learning
theory~\cite{MacKay2003,neal2012bayesian}. 
The Langevin dynamics is a standard continuous-time process for Gibbs sampling.
The overdamped Langevin equation is given by
\begin{equation}\label{eqn:langevin}
\ud X_t=-\nabla V(X_t) \ud t + \sqrt{2/\beta} \ud W_t,
\end{equation}
where $W_t$ is the standard $d$-dimensional Brownian motion. The Gibbs measure $\sigma$ is the stationary distribution of~\cref{eqn:langevin}.
This equation can be approximately solved using a simple scheme known as the Euler-Maruyama discretization, or the unadjusted Langevin algorithm (ULA):
\begin{equation}\label{eqn:em_langevin}
X_{n+1}=-\nabla V(X_n) \Delta t_n + \sqrt{2/\beta} \Delta W_n,
\end{equation}
where $\Delta t_n=t_{n+1}-t_n$ is a time step, and $\Delta W_n\sim \mathcal{N}(0, \Delta t_n I_d)$ follows the Gaussian distribution. While ULA approximates the continuous dynamics~\cref{eqn:langevin}, its discretization can be unstable, which requires a careful choice of time steps to accurately approximate the target stationary distribution. A variant of ULA, the Metropolis-Adjusted Langevin Algorithm (MALA), incorporates a Metropolis–Hastings correction step to ensure the correct equilibrium, yielding a high-accuracy and discrete-time Gibbs sampler~\cite{Bes_95,RR98}. 

The performance of discretized dynamics, such as ULA, MALA, and their variants, has been extensively studied~\cite{DMM19,VW19,DT12,DM17,DM19,Dal17a,Dal17b,Chen_2020,Chewi2020OptimalDD,Lee_2020,Wu_2022,chen2023,Chewi_2024}. Although the exponential convergence of the continuous dynamics~\cref{eqn:langevin} to its stationary distribution $\sigma$ can be established under the sole assumption that $\sigma$ satisfies a Poincar\'e inequality,  analyzing these discrete-time Markov chain Monte Carlo (MCMC) processes is often much more technical and relies on stronger assumptions~\cite{ChewiErdogduLiEtAl2024}.  

In practical applications such as molecular simulations and training generative models, the potential function $V$ is often non-convex, and the target distribution $\sigma$ becomes non-logconcave. Non-logconcave sampling presents a significant challenge for both continuous-time and discrete-time Langevin dynamics, as the convergence rates can become exponentially slow with respect to the barrier height of the potential $V$ and the inverse temperature $\beta$~\cite{RRT17,Che+18,VW19}. This phenomenon is known as metastability~\cite{BOV02,BOV04,BKG05} and is related to the task of rare event sampling~\cite{Bolhuis_2022,Barducci_2011,PhysRevB.66.052301,MARAGLIANO2006168,PMID:18999998,lu2015reactive}.
To overcome such challenges, advanced sampling algorithms such as the replica exchange algorithm (also known as parallel tempering) have been developed, see \cite{Swendsen1986Replica,HukushimaNemoto1996,sugita1999replica,earl2005parallel,doi:10.1021/ct800016r,doi:10.1021/ct100281c,lu2013infinite,lu2019methodological}.  These methods leverage two or a hierarchy of temperature levels to facilitate transitions over barriers that are otherwise prohibitive at low temperatures. 
While non-asymptotic convergence results of replica-exchange-type algorithms exist for certain highly structured models (e.g., Gaussian mixture~\cite{DongTong2022}, mean-field spins~\cite{madras2003swapping}, multimodal distributions~\cite{woodard2009conditions,lee2023improved}, etc.), a general understanding of theoretical efficiency guarantees remains largely open.

Quantum computers offer a promising alternative for accelerating classical sampling processes.
The quantum walk algorithm, first developed by 
Szegedy~\cite{szegedy2004quantum} two decades ago, can be used to achieve a quadratic speedup over classical discrete-time MCMC processes in query complexity. To our knowledge, existing quantum algorithms for classical Gibbs sampling also apply the quantum walk to quantumly encode classical discrete-time MCMC processes, such as  ULA, MALA, and their variants~\cite{childs2022quantum,ozgul2024stochastic,ozgul2025quantumspeedupsmarkovchain}. This restricts the design space of quantum algorithms to classical MCMC frameworks. In addition, the query complexity analysis highly relies on the complexity of the classical discrete-time MCMC process, which inherits the challenges of analyzing the convergence of these discrete-time schemes.

\begin{figure}[ht!]
    \centering
    \includegraphics[height=7cm,trim={{6cm} 0 {6cm} 0},clip]{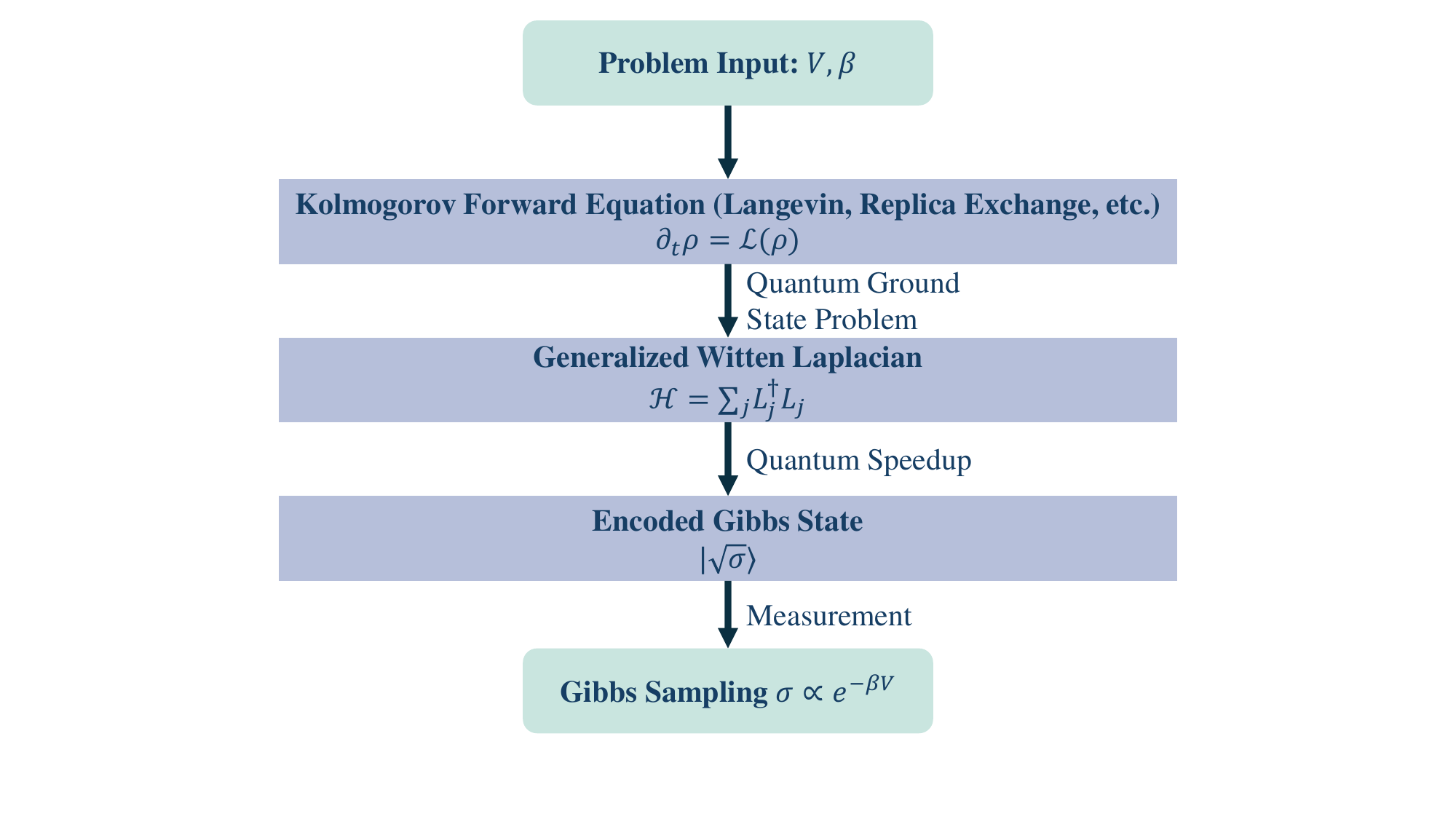}
    \caption{A schematic diagram of operator-level quantum acceleration of classical Gibbs sampling.}
    \label{fig:schematic}
\end{figure}

\vspace{4mm}
In this work, we propose a new framework for achieving quantum acceleration of general sampling processes with the Gibbs measure $\sigma$ as a stationary distribution.
Our quantum algorithm is constructed at the operator level \emph{without} relying on the dynamics generated by a discrete-time classical MCMC process. Take the Langevin dynamics \cref{eqn:langevin} for example. Our method starts from the Fokker--Planck equation (for general Markov processes, it is called the \textit{Kolmogorov forward equation}): 
\begin{equation}\label{eqn:FKPK}
\partial_t \rho = \mathcal{L}(\rho) \coloneqq \nabla \cdot (\rho(t,x)\nabla V(x)) + \beta^{-1} \Delta \rho(t,x).
\end{equation}
Here, $\rho(t,x)$ represents the probability density associated with the random variable $X_t$, and the Gibbs measure is a stationary point of the dynamics, satisfying $\mc{L}(\sigma) = 0$. We refer the reader to~\cref{append:math-prelim} for the relevant mathematical preliminaries.

After a similarity transformation, the generator $\mc{L}$ can be mapped to a quantum Hamiltonian $\mc{H}$, known as the Witten Laplacian operator~\cite{witten1982supersymmetry}, up to a constant scaling factor. The Gibbs measure is then encoded as the ground state of $\mathcal{H}$, denoted by $\ket{\sqrt{\sigma}}$. Measuring this ground state in the computational basis yields $\ket{x}$ with a probability proportional to $\sigma(x)$, thereby achieving Gibbs sampling (see~\cref{fig:schematic}).

In order to efficiently prepare the ground state, a key structural property of $\mathcal{H}$ is that it is \emph{frustration-free}, meaning that it can be decomposed as a sum of operators: $\mathcal{H} = \sum_{i} L_i^{\dag} L_i$, and the \emph{encoded Gibbs state} $\ket{\sqrt{\sigma}}$ is a simultaneous singular vector of all $L_i$'s, satisfying $L_i \ket{\sqrt{\sigma}} = 0$. 
Each $L_i$ is a first-order differential operator free of time-discretization errors, and can be constructed directly from the gradient $\nabla V(x)$.

We will demonstrate that this frustration-free structure provides a simple mechanism to achieve quantum speedup with respect to the continuous-time Langevin dynamics in the non-logconcave setting, whose efficiency depends solely on the Poincar\'e constant (\cref{thm:main}).
Because of the broad applications of continuous sampling problems and the effectiveness of Langevin dynamics, extensive research has examined the complexity of Langevin-based sampling algorithms under various geometric assumptions on the target distribution—such as strong log-convexity~\cite{Wu_2022,childs2022quantum}, log-Sobolev~\cite{osti_10276674,ozgul2024stochastic}, Cheeger~\cite{osti_10276674,ozgul2024stochastic}, and Poincar\'e typed inequalities~\cite{chewi2023log}—in both quantum and classical settings.
\cref{table:comparison} compares the query complexities of our algorithms with existing quantum and classical algorithms.
Although there is a vast body of literature on Langevin-based samplers, our algorithm is the first (classical or quantum) algorithm that provably achieves the $C_{\rm PI}^{1/2}$ scaling for a target distribution with a Poincar\'e constant $C_{\rm PI}$.
In contrast to previous quantum algorithms~\cite{childs2022quantum,ozgul2024stochastic}, our operator-level formulation bypasses time-discretization error analysis and enables a direct continuous analysis via the Witten Laplacian. As a result, we obtain complexity bounds expressed directly in terms of the Poincar\'e constant of the invariant measure, without relying on additional geometric or isoperimetric assumptions such as Cheeger-type inequalities. We do not expect the geometric-constant dependence appearing in~\cite{childs2022quantum,ozgul2024stochastic} to admit substantial improvement, since in the strongly convex regime one has the identity $\gamma^{-1}=C_{\rm CG}^2=C_{\rm PI}$. 
On the other hand, for a broad class of highly non-convex potentials, $C_{\mathrm{PI}}=\wt{\Theta}(C_{\mathrm{CG}})$, where $C_{\rm CG}$ is the Cheeger constant (defined in~\cref{sec:Re}~\cref{eqn:cheeger}), implying that our quantum algorithm can achieve a \textit{quartic} speedup over MALA for non-logconcave sampling tasks.\footnote{Meanwhile, it is worth noting that the $C^2_{\rm CG}$ dependence in classical MALA is highly technical and may be subject to improvement in future work.}
Beyond Langevin-based sampling algorithms, the proximal method achieves a query complexity linear in $C_{\rm PI}$ by doubling the sampling space and employing a careful resampling process~\cite{chewi2023log}. Obtaining a comparable complexity analysis within the framework of Langevin dynamics, to the best of our knowledge, remains an open problem.
A more detailed discussion and comparison can be found in \cref{sec:Re}.

\begin{table}[ht!]
\centering
\scalebox{0.8}{
\begin{tabular}{|c|c|c|c|c|}
\hline
\textbf{Algorithms} & \textbf{Platform} & \textbf{Assumption}& \textbf{Complexity} &\textbf{Warm start}\\
\hline
\textbf{Ours} (\cref{thm:main})&Quantum&\textcolor{green}{$C_{\rm PI}$-Poincar\'e}& $\widetilde{\mathcal{O}}\left(d^{1/2}\textcolor{green}{C^{1/2}_{\rm PI}}\right)$& Yes\\
Quantum MALA v1~\cite[Theorem C.7]{childs2022quantum}&Quantum&$\gamma$-strongly convex& $\widetilde{\mathcal{O}}\left(\textcolor{green}{d^{1/4}}/\gamma^{1/2}\right)$& Yes\\
Quantum MALA v2~\cite[Theorem 5, Remark 27]{ozgul2024stochastic}&Quantum&$C_{\rm CG}$-Cheeger& $\widetilde{\mathcal{O}}\left(d^{1/2}C_{\rm CG}\right)$&  Yes\\
\hline
ULA~\cite[Theorem 6.2.9]{chewi2023log} &Classical& \textcolor{green}{$C_{\rm PI}$-Poincar\'e} & $\widetilde{\mathcal{O}}\left(dC_{\rm PI}^2/\epsilon^{2}\right)$ & No\\
Proximal~\cite[Corollary 8.6.3]{chewi2023log}&Classical&\textcolor{green}{$C_{\rm PI}$-Poincar\'e}& $\widetilde{\mathcal{O}}\left(d^{1/2}C_{\rm PI}\right)$& No\\
MALA~\cite[Theorem 1]{Wu_2022}&Classical
&$\gamma$-strongly convex& $\widetilde{\mathcal{O}}\left(d^{1/2}/\gamma\right)$& Yes \\
MALA~\cite[Lemma 6.5]{osti_10276674}&Classical
&$C_{\rm CG}$-Cheeger& $\widetilde{\mathcal{O}}\left(dC^2_{\rm CG}\right)$& No \\
\hline
\end{tabular}}
\caption{Summary of query complexities of some quantum and classical algorithms for solving the sampling problem when $\beta=1$. The precision parameter $\epsilon$ measures the difference (e.g., TV distance) between the target Gibbs measure and the output random variable. Here, $\widetilde{\Or}$ hides the logarithmic dependence on the precision parameter $\epsilon$. Our warm start assumption is consistent with~\cite[Remark 27]{ozgul2024stochastic} and strictly weaker than that in~\cite[(C.15)]{childs2022quantum}, as discussed in~\cref{rem:warm-start}.
}
\label{table:comparison}
\end{table}

To ensure high success probability, our algorithm requires a good initial state, or a ``warm start'', which has a constant $L^2$-overlap with the target state $\ket{\sqrt{\sigma}}$.
This is a standard assumption in the quantum sampling literature~\cite{childs2022quantum,ozgul2024stochastic}, and strictly weaker than common warm-start assumptions (e.g, $\chi^2$-divergence) in classical literature~\cite{Wu_2022,chewi2023log}.
Several established strategies exist for preparing such states, including quantum simulated annealing~\cite{somma2007quantum,ozgul2024stochastic} and variational quantum circuits. 
In this work, we propose a new approach to prepare this warm-start state based on Lindblad dynamics.
Specifically, we observe that a Lindblad master equation employing the $L_j$ operators (i.e., factors of the Witten Laplacian) as jump operators naturally recovers the Fokker-Planck equation. For potential functions $V$ with moderately simple energy landscapes (for example, those with a constant number of local minima), short-time evolution under this Lindblad dynamics suffices to prepare a warm-start state.

The framework described above can be extended to a broad class of stochastic processes whose infinitesimal generators admit a similar quantization via a similarity transformation, thereby providing a natural and efficient means to accelerate the continuous-time sampling methods without introducing time discretization error. In such cases, the quantum Hamiltonian $\mathcal{H}$ is referred to as the \emph{generalized Witten Laplacian}.
As a concrete demonstration, we apply our method to accelerate the replica exchange Langevin diffusion (\cref{thm:main-2}), which leads to the first provable quantum speedup of continuous-time enhanced sampling dynamics.

\subsection*{Poincar\'e constant}

The generator $\mc{L}$ of the Langevin dynamics \cref{eqn:FKPK} is detailed balanced, namely, the adjoint of the operator $\mathcal{L}$, denoted as $\mathcal{L}^\dagger$, is self-adjoint with respect to an inner product related to the stationary distribution $\sigma$, defined as $\langle f, g \rangle_\sigma \coloneqq \int_{\mathbb{R}^d} \left(f g\right) \sigma\d x$. The smallest positive eigenvalue of $-\mathcal{L}^\dagger$ is called the \emph{spectral gap}, denoted as $\mathrm{Gap}\left(\mathcal{L}^\dagger\right)$. The inverse of the spectral gap, denoted as $C_{\rm PI}$, is the \textit{Poincar\'e constant}. A small spectral gap (or a large Poincar\'e constant) is a strong indicator that the mixing process can be slow. 

\begin{figure}[ht!]
    \centering
    \begin{subfigure}[t]{0.47\linewidth}
        \centering
        \includegraphics[width=\linewidth,trim={1.5cm 0 1.5cm 0},clip]{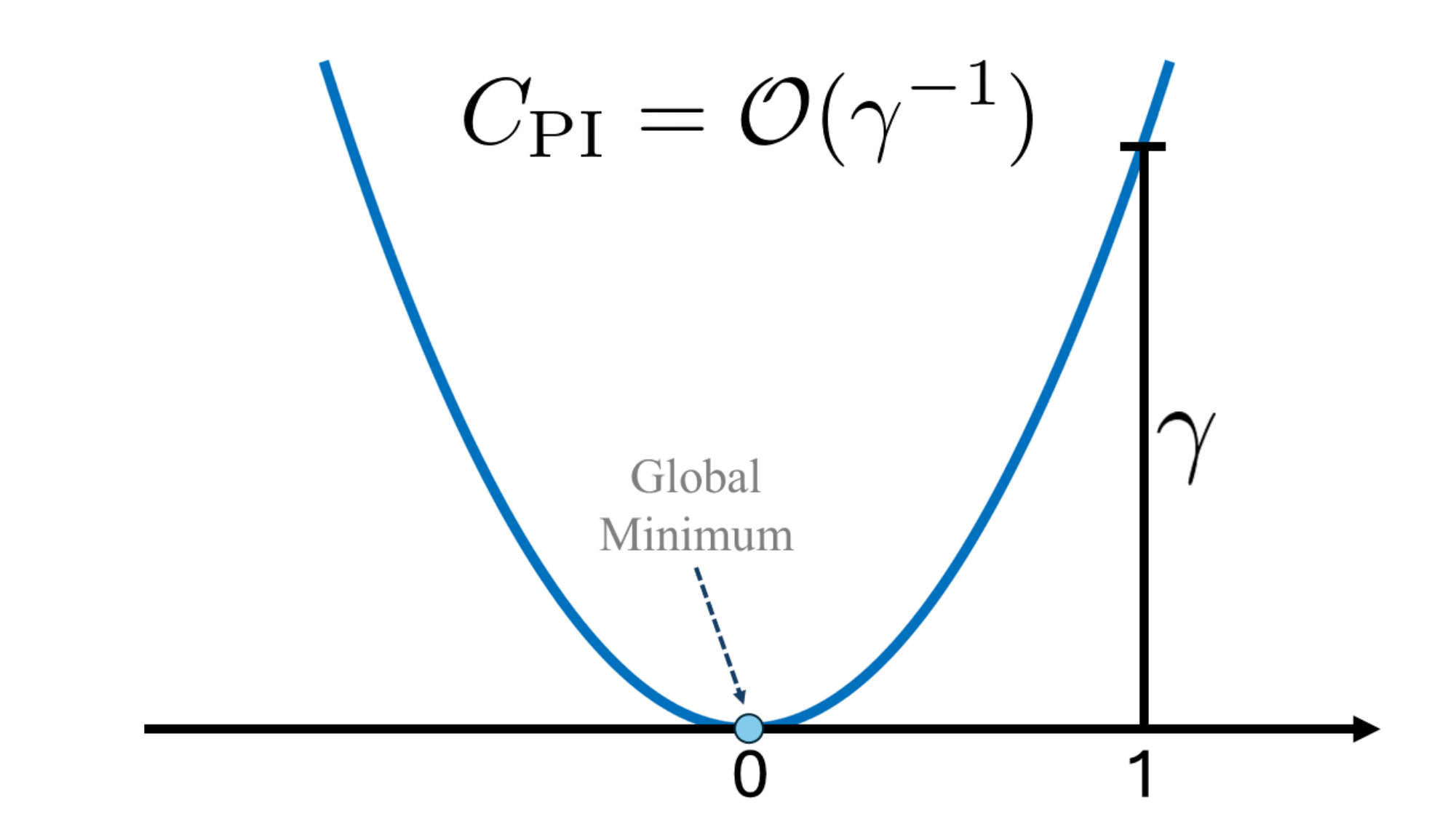}
        \caption{Convex Potential}
        \label{fig:poincare-convex}
    \end{subfigure}
    \hfill
    \begin{subfigure}[t]{0.47\linewidth}
        \centering
        \includegraphics[width=\linewidth,trim={1.5cm 0 1.5cm 0},clip]{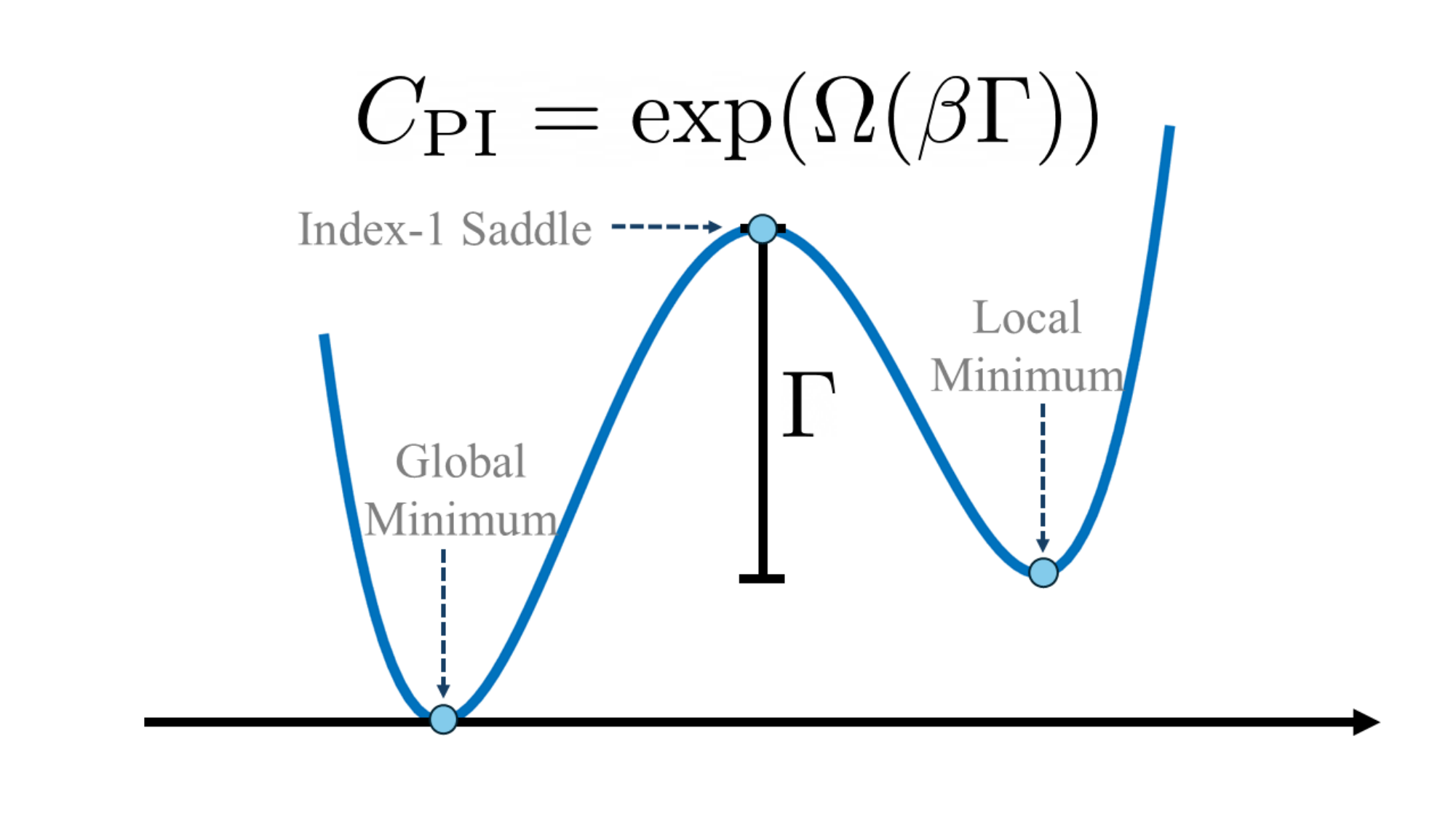}
        \caption{Non-convex Potential}
        \label{fig:poincare-nonconvex}
    \end{subfigure}
    \caption{For Langevin dynamics, the Poincaré constant $C_{\rm PI}$ scales as $\Or(\gamma^{-1})$ for the convex potential, but as $\exp(\Omega(\beta\Gamma))$ for the non-convex potential. The mixing time of the \emph{continuous-time} Langevin dynamics scales linearly in  $C_{\rm PI}$.}
    \label{fig:poincare-figs}
\end{figure}

The spectral gap and the Poincar\'e constant have direct geometric interpretations. If the potential $V(x)$ is $\gamma$-strongly convex (in this case, sampling from $\sigma$ is referred to as \emph{logconcave sampling}), the smallest eigenvalue for the Hessian of $V$ is lower bounded by some constant $\gamma > 0$, i.e., $\nabla^2 V \succeq \gamma I_d$, then  $C_{\rm PI} = \Or(\gamma^{-1})$~\cite[Chapter 9]{villani2009optimal}~\cite[Theorem 4.8.5]{bakry2014analysis}. Thus, the Poincar\'e constant decreases as the potential becomes more convex, causing the Gibbs distribution to concentrate more sharply around the global minimum (\cref{fig:poincare-convex}).

\section*{Operator-Level Acceleration of Langevin Dynamics}
\subsection*{Witten Laplacian}

From the generator $\mc{L}$ in~\cref{eqn:FKPK}, we define a new operator ($\sigma^{\pm 1/2}$ act as multiplicative operators):
\begin{equation}\label{eqn:witten_lap}
    \mc{H} = -\sigma^{-1/2} \circ\mc{L} \circ \sigma^{1/2}=-\frac{1}{\beta} \Delta  + \left(\frac{\beta\|\nabla V\|^2}{4} - \frac{1}{2}\Delta V\right).
\end{equation}
Since $\mc{L}^\dagger$ is self-adjoint with respect to the $\sigma$-weighted inner product, $\mc{H}$ is self-adjoint with respect to the standard $L^2$ inner product.  The operator $\mc{H}$ is called the \emph{Witten Laplacian}\footnote{Some literature defines the Witten Laplacian as  $4\beta^{-1}\mathcal{H}$. Such a constant scaling factor does not affect our analysis. }, which was first proposed by Witten in his analytical proof of the Morse inequalities~\cite{witten1982supersymmetry}. Later, Witten Laplacian became a fundamental tool in the study of metastability~\cite{HelfferNier2004,BKG05}. We may readily check that when $V(x)=\gamma x^2/2$, the Witten Laplacian $\mc{H}$ is the Hamiltonian of a quantum harmonic oscillator with a gap $\gamma$, which is independent of $\beta$.

Due to the similarity transformation, the spectrum of $\mc{H}$ is the same as that of $-\mc{L}$ (and thus of $-\mc{L}^\dagger$), which belongs to the non-negative real axis. 
The kernel of $\mc{H}$ is given by the \emph{encoded Gibbs state} $\ket{\sqrt{\sigma}}$ with $\braket{x}{\sqrt{\sigma}}=\sqrt{\sigma(x)}$, which is a normalized state with respect to the standard $L^2$ inner product. In other words, if the Langevin dynamics is ergodic, then $\ket{\sqrt{\sigma}}$ is the unique ground state of $\mc{H}$ with a spectral gap $\mathrm{Gap}\left(\mc{H}\right)=\mathrm{Gap}\left(\mc{L}^{\dag}\right)$. 
From the encoded Gibbs state we can readily evaluate any classical observable $O(x)$ according to $\bra{\sqrt{\sigma}} O \ket{\sqrt{\sigma}} = \int O \sigma \d x$. Here, we extend the standard $L^2$ inner product to complex numbers and use the bra-ket notation $\langle f | g \rangle:=\langle f, g \rangle=\int \bar{f} g \ud x$, where $\bar{f}$ is the complex conjugation of $f$, and $f,g$ can be interpreted as unnormalized quantum states.

From a PDE perspective, the operators $\mc{L}$ and $\mc{H}$ are related as follows:
\begin{equation}\label{eqn:H_formula}
    \partial_t\rho=\mathcal{L}\rho\ \Leftrightarrow\ \partial_t u=-\mc{H}(u):=\beta^{-1} \Delta u - \left(\frac{\beta\|\nabla V\|^2}{4} - \frac{1}{2}\Delta V\right) u,\quad \rho=\sqrt{\sigma}u\,.
\end{equation}
Since both operators have the same spectral gap, corresponding to~\cref{eqn:chi_square_convergence} in \cref{append:math-prelim}, we have
\begin{equation}\label{eqn:L_square_convergence}
    \chi^2(\rho(t),\sigma)=\left\|u(t)-\sqrt{\sigma}\right\|^2\leq \exp\left(-2 \mathrm{Gap}\left(\mc{H}\right)t\right)\left\|u(0)-\sqrt{\sigma}\right\|^2=\exp\left(-2 \mathrm{Gap}\left(\mc{H}\right)t\right)\chi^2(\rho(0),\sigma).
\end{equation}
Here, we use the fact that $\left\|u(t)-\sqrt{\sigma}\right\|^2$ is equal to the $\chi^2$-divergence. Therefore, $u(t)$ converges exponentially to the encoded Gibbs state in $\chi^2$-divergence.

\subsection*{Quantum algorithm}

The mapping from the generator of Langevin dynamics $\mathcal{L}$ to the Witten Laplacian $\mathcal{H}$ allows us to tackle the Gibbs sampling problem as a ground state preparation problem, which is a standard task in quantum computing~\cite{ge2019faster,lin2020near,papageorgiou2014estimating,augustino2023quantum}.
However, such algorithms inevitably require querying the potential term,  $\frac{\beta\|\nabla V\|^2}{4} - \frac{1}{2}\Delta V$ in~\cref{eqn:H_formula}, which depends on both the first- and second-order derivatives of the potential $V$. This is undesirable since the classical Langevin dynamics only require first-order information. 

The Witten Laplacian admits the following factorization~\cite{witten1982supersymmetry}:
\begin{equation}\label{eqn:H-witten-lap_1}
    \mathcal{H} = \sum^d_{j=1}L^\dagger_j L_j,\quad L_j \coloneqq -i\frac{1}{\sqrt{\beta}} \partial_{x_j} - i\frac{\sqrt{\beta}}{2}\partial_{x_j}V \quad \forall j \in [d]=\{1,2,\dots,d\}.\
\end{equation}
The encoded Gibbs state is simultaneously annihilated by all $L_j$'s, i.e., $L_j \ket{\sqrt{\sigma}}=0$. In other words, $\ket{\sqrt{\sigma}}$ is the ground state of each individual Hamiltonian term $L_i^{\dag} L_i$. Such a Hamiltonian $\mc{H}$ is called \emph{frustration-free}.\footnote{Formally, a Hamiltonian $H = \sum_j H_j$ is frustration-free if the ground state of $H$ is also the ground state for all $H_j$.} 
By concatenating all operators $L_i$ together as 
\begin{equation} 
    \mathbb{L} \coloneqq [L^\top_1, L^\top_2,\dots, L^\top_d]^\top,
    \label{eqn:L-block-matrix}
\end{equation}
\cref{eqn:H-witten-lap_1} can be compactly written as $\mathcal{H} = \mathbb{L}^\dagger \mathbb{L}$. Furthermore, constructing $\mathbb{L}$ only requires the first-order information of the potential $V$. 

Based on this observation, we propose a quantum algorithm for the Gibbs sampling task without solving either the Fokker--Planck equation or the dynamics \cref{eqn:H_formula}. The main idea is to prepare the encoded Gibbs state $\ket{\sqrt{\sigma}}$ as the unique \emph{right singular vector} of $\mathbb{L}$ associated with the zero singular value. The singular value gap of $\mathbb{L}$ is equal to $\mathrm{Gap}\left(\mathbb{L}\right)=\sqrt{\mathrm{Gap}\left(\mc{H}\right)}=1/\sqrt{C_{\rm PI}}$. This provides a source for quantum speedup at the operator level, without either spatial or temporal discretization.

To implement this algorithm on the quantum computer, quantities such as $\ket{\sqrt{\sigma}}$ and $\mathbb{L}$ must be \emph{spatially} discretized (see~\cref{append:pseudo-diff-discretize}). With some abuse of notation, the spatially discretized quantities are still denoted by the same symbols. The spatial discretization error depends on the smoothness property of $V$, which can be systematically controlled (see \cref{append:spatial_discretize_Lj}), and the discussion is omitted here for simplicity. We emphasize that, unlike classical MCMC processes, our algorithm does not require temporal discretization, which often presents significant challenges in the analysis. 

Below, we briefly describe the major steps of our quantum algorithm.

\begin{enumerate}
    \item Prepare an initial state $\ket{\phi}$ satisfying the \emph{warm start condition}, i.e., $\abs{\braket{\phi}{\sqrt{\sigma}}}=\Omega(1)$.
    \item Block-encode the (spatially discretized) matrix $\mathbb{L}$ using a quantum circuit.
    \item Apply a quantum singular value thresholding (SVT) algorithm to $\ket{\phi}$ to filter out contributions in $\ket{\phi}$ corresponding to non-zero singular values of $\mathbb{L}$. This succeeds in preparing a state $\ket{g}\approx \ket{\sqrt{\sigma}}$ with a constant success probability.
    \item Measure the resulting state in the computational basis\footnote{More precisely, the prepared state $\ket{g}$ needs to undergo post-processing to boost the resolution before measurement in the computational basis, see~\cref{lem:high-accuracy-interpolation} for details.} to obtain a sample $x$ that approximately follows the Gibbs distribution $\sigma$.
\end{enumerate}

We now discuss the singular value thresholding algorithm in more detail. The spatially discretized matrix $\mathbb{L}$ is represented by its singular value decomposition (SVD) as $\mathbb{L}=W \Sigma V^{\dag}$, where $V$ is a unitary matrix, $\Sigma$ is a diagonal matrix with non-negative diagonal entries, and $W$ is a rectangular matrix of orthogonal columns. This immediately gives the eigenvalue decomposition of $\mc{H}=V \Sigma^2 V^{\dag}$, i.e., the right singular vectors of $\mathbb{L}$ are the eigenvectors of $\mathcal{H}$, and we identify $\ket{\sqrt{\sigma}}$ with the first column of $V$ corresponding to the unique zero singular value. With the warm start assumption, we can write $\ket{\phi}=Vc$ with $\abs{c_1}=\Omega(1)$. 
Then, we can leverage the Quantum Singular Value Transformation (QSVT) algorithm~\cite{gilyen2019quantum} to implement singular value thresholding. 
Suppose we have access to a block-encoding of $\mathbb{L}$ with a normalization factor $\alpha$. 
We can construct an \emph{even} polynomial $p(s)$ that is approximately equal to $1$ for $0 \le s \le \mathrm{Gap}\left(\mathbb{L}\right)/4\alpha$,  and approximately equal to $0$ for $3\mathrm{Gap}\left(\mathbb{L}\right)/4\alpha \le s \le 1$, see~\cref{fig:rectangle}. 
Applying QSVT with this even polynomial $p(s)$ to the block-encoded $\mathbb{L}$ filters out all singular vectors with singular values greater than $3\mathrm{Gap}\left(\mathbb{L}\right)/4\alpha $, thereby leaving only the encoded Gibbs state. The details of the quantum algorithm, including background on the quantum singular value thresholding algorithm, are provided in~\cref{append:meta-algorithm-qsvt}.

\begin{figure}
    \centering
    \includegraphics[width=0.5\linewidth]{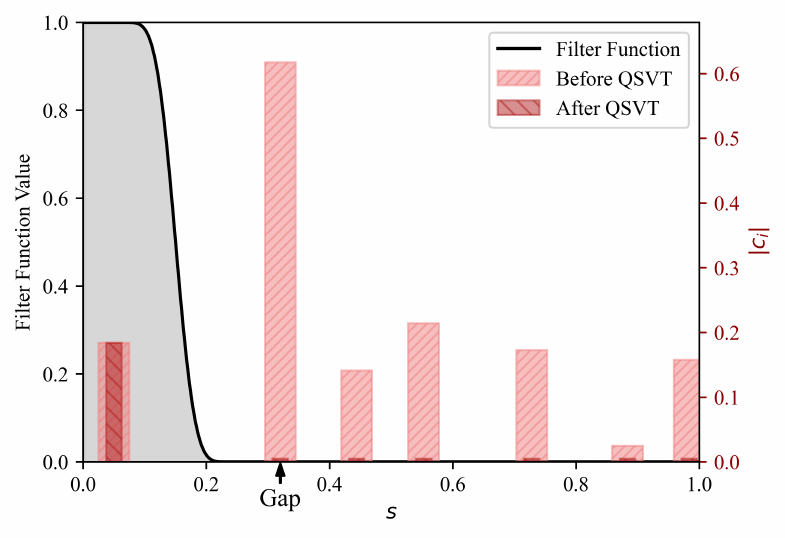}
    \caption{Left y-axis: Value of the filter function $p(s)$ that is approximately equal to $1$ on $[0,\mathrm{Gap}/4]$, and approximately equal to $0$ on $[3\mathrm{Gap}/4, 1]$. Right y-axis: Projection $|c_i|$ to each right singular vector before and after QSVT. Here $\mathrm{Gap}=\mathrm{Gap}\left(\mathbb{L}\right)/\alpha$, and $\alpha$ is the block encoding subnormalization factor for $\mathbb{L}$.}
    \label{fig:rectangle}
\end{figure}

\paragraph{Complexity analysis.}
Our quantum algorithm leverages two main subroutines: constructing the block-encoding of $\mathbb{L}$ and implementing singular value thresholding. 
Each $L_j$ can be represented as a pseudo-differential operator and discretized into a finite-dimensional matrix. We first truncate the space $\RR^d$ into a finite-sized domain $\Omega = [-a,a]^d$ and discretize it using a uniform grid of $N$ points per dimension. Therefore, the dimension of each discretized $L_j$ operator is $N^d$. 
For a smooth potential $V$ with certain growth conditions (e.g., $|V(x)| \ge \gamma \|x\|^2$ for some large $x$ with $\gamma > 0$), to achieve a target sampling accuracy $\epsilon$ in the TV distance, we can choose $a = \mathcal{O}(\log(d/\epsilon))$ and $N = a\cdot \polylog(1/\epsilon)$.
Since the value of the Gibbs measure is very close to zero on the boundary of $\Omega$, we assume a periodic boundary condition to facilitate an efficient block-encoding of $L_j$ using Quantum Fourier Transforms, as detailed in~\cref{append:spatial_discretize_Lj}.
It turns out that the discretized $\mathbb{L}$ can be block-encoded using $2$ quantum queries to the gradient $\nabla V$ and $3$ ancilla qubits, see~\cref{prop:block-encode-A}.
For $\beta \ge 1$, the normalization factor of the block-encoding of $\mathbb{L}$ is equal to $\alpha = \widetilde{O}(\sqrt{\beta d})$, where $\widetilde{\mathcal{O}}(\cdot)$ suppresses poly-logarithmic factors in $d$ and $\epsilon$.
Finally, we employ QSVT to implement singular value thresholding, which can be executed using $\widetilde{O}(\alpha/\mathrm{Gap}\left(\mathbb{L}\right)) = \widetilde{O}(\sqrt{\beta d C_{\rm PI}})$ queries to the block-encoded matrix $\mathbb{L}$. 
The overall complexity of our quantum algorithm is summarized as follows:

\begin{thm}[Informal]\label{thm:main}
Assuming access to a warm start state $\ket{\phi}$, for a sufficiently large $\beta$, there exists a quantum algorithm that outputs a random variable distributed according to $\eta$ such that $\mathrm{TV}(\eta, \sigma) \le \epsilon$ using $\sqrt{\beta d C_{\rm PI}}\cdot \polylog(d,1/\epsilon)$ quantum queries to the gradient $\nabla V$.
\end{thm}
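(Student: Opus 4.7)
My plan is to realize the four-step schematic of the algorithm as a chain of quantitative reductions, each contributing a piece of the final TV error bounded by $\Theta(\epsilon)$, with the entire query cost absorbed into the QSVT thresholding step. The reductions are: (i) truncation of $\RR^d$ to a box $\Omega = [-a,a]^d$ with periodic boundary conditions; (ii) spatial discretization of $\mathbb{L}$ on a uniform $N^d$-grid together with its block-encoding; (iii) singular-value thresholding via QSVT with an even rectangular-filter polynomial $p$, followed by warm-start amplification; and (iv) computational-basis measurement plus the interpolation post-processing of \cref{lem:high-accuracy-interpolation}.

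For (i), I would choose $a = \Or(\log(d/\epsilon))$ using the quadratic growth hypothesis on $V$ so that $\sigma$ puts mass at most $\epsilon/4$ outside $\Omega$; periodic boundary conditions then incur no additional error because $\sqrt{\sigma}$ is exponentially small at $\partial\Omega$. For (ii), taking $N = a\cdot\polylog(1/\epsilon)$, the pseudo-differential form of \cref{eqn:H-witten-lap_1} lets me diagonalize each $\partial_{x_j}$ via a QFT while keeping $\partial_{x_j}V$ diagonal in the position basis, yielding the block-encoding of \cref{prop:block-encode-A} with $2$ queries to $\nabla V$, $3$ ancillas, and normalization $\alpha = \widetilde{\Or}(\sqrt{d\beta})$. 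I would then invoke the discretization estimates of \cref{append:spatial_discretize_Lj} to show that the discretized $\mathbb{L}$ has a unique zero singular vector $\ket{\sqrt{\sigma_h}}$ that is $\Or(\epsilon)$-close in $\ell^2$ to the sampled continuous encoded Gibbs state, and whose smallest nonzero singular value is $(1\pm o(1))\,\mathrm{Gap}(\mathbb{L}) = (1\pm o(1))/\sqrt{C_{\rm PI}}$, so that the ratio $\alpha/\mathrm{Gap}(\mathbb{L}) = \widetilde{\Or}(\sqrt{d\beta C_{\rm PI}})$ survives discretization.

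For (iii), I would build an even polynomial $p(s)$ of degree $D = \widetilde{\Or}(\alpha/\mathrm{Gap}(\mathbb{L}))$ that is $\epsilon/8$-close to $1$ on $[0,\mathrm{Gap}(\mathbb{L})/(4\alpha)]$ and $\epsilon/8$-close to $0$ on $[3\mathrm{Gap}(\mathbb{L})/(4\alpha),1]$ via a shifted-erf construction as in~\cite{gilyen2019quantum}, then apply QSVT to the block-encoded $\mathbb{L}$ with $p$. Expanding the warm start as $\ket{\phi} = c_1\ket{\sqrt{\sigma_h}} + \sum_{k\ge 2}c_k\ket{v_k}$ with $\abs{c_1} = \Omega(1)$ and $s_k \ge \mathrm{Gap}(\mathbb{L})$ for $k\ge 2$, the post-QSVT state takes the form $c_1\ket{\sqrt{\sigma_h}} + \ket{r}$ with $\norm{\ket{r}} = \Or(\epsilon)$; fixed-point amplitude amplification on the block-encoding flag boosts the success probability to $1-\Or(\epsilon)$ with $\Or(1)$ further QSVT rounds, producing $\ket{g}$ within $\Or(\epsilon)$ of $\ket{\sqrt{\sigma_h}}$ in $\ell^2$ and a total query cost of $\widetilde{\Or}(\sqrt{d\beta C_{\rm PI}})$ to $\nabla V$. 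Step (iv) then uses the standard bound that the TV distance between the computational-basis measurement distribution of $\ket{g}$ and $\sigma_h$ is at most $\norm{\ket{g}-\ket{\sqrt{\sigma_h}}}$ (a one-line Cauchy--Schwarz argument), together with \cref{lem:high-accuracy-interpolation} to lift the grid sample to a continuous sample in $\Omega$ without further loss.

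The main obstacle, I expect, is step (ii): simultaneously proving that the spatially discretized $\mathbb{L}$ (a) admits a block-encoding with normalization $\alpha = \widetilde{\Or}(\sqrt{d\beta})$ independent of $N$, (b) retains the (discretized) encoded Gibbs state as its unique kernel vector to TV-level accuracy, and (c) inherits the continuous Poincar\'e-based gap up to constant factors. Once (a)--(c) are in place, the QSVT polynomial, the warm-start amplification, and the measurement analysis assemble from standard lemmas, and the advertised $\sqrt{d\beta C_{\rm PI}}\cdot\polylog(d,1/\epsilon)$ query complexity follows by multiplying the two block-encoding queries through the QSVT depth $D$.
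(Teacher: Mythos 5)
Your four-step chain --- truncation, DFT-based spatial discretization and block-encoding, QSVT thresholding, and the trigonometric interpolation post-processing of \cref{lem:high-accuracy-interpolation} --- is exactly the route the paper follows in \cref{thm:main-formal}, and your arithmetic combining $\alpha = \widetilde{\Or}(\sqrt{d\beta})$ with QSVT degree $\widetilde{\Or}(\alpha/\mathrm{Gap}(\mathbb{L}))$ reproduces the stated query bound. One caveat is worth flagging: you correctly identify items (b) and (c) of your ``main obstacle'' --- that the discretized $\mathbb{L}$ retains the encoded Gibbs state as its approximate kernel and inherits the continuous Poincar\'e gap up to constants --- but the paper does \emph{not} prove these. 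They are postulated wholesale in \cref{assump:discretization} and therefore form an implicit hypothesis of the informal \cref{thm:main}. Your plan to ``invoke the discretization estimates of \cref{append:spatial_discretize_Lj}'' will find only assertions there; an actual proof would require a spectral-accuracy argument for the Fourier collocation of the Witten Laplacian together with an eigenvalue-gap perturbation bound, neither of which the paper supplies. A secondary, smaller point on your step (iv): do not read \cref{lem:high-accuracy-interpolation} as a lossless lift from grid to continuum. Measuring the discrete state and smearing mass uniformly over grid cells would leave an $\Or(1/N)$ TV floor and force $N = \Omega(1/\epsilon)$, defeating the high-accuracy claim; the lemma's Fourier zero-padding interpolation is precisely what converts the $L^2$ closeness of $I_N\ket{g}$ to $\sqrt{\sigma}$ into a TV bound (the Cauchy--Schwarz step you describe lives \emph{inside} that lemma's proof, after the embedding), and it is essential for retaining the overall $\polylog(1/\epsilon)$ dependence.
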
      

A rigorous statement and proof of the theorem above are provided in~\cref{thm:main-formal} of~\cref{append:proof-main}. 
To the best of our knowledge, this is the first algorithm that provably achieves a query complexity of $\widetilde{\mathcal{O}}(\sqrt{C_{\rm PI}})$ for general potentials. In the presence of metastability, $C_{\rm PI}$ will grow exponentially with $\beta \Gamma$, where $\Gamma$ denotes the barrier height of the potential $V(x)$. This implies that our quantum algorithm can be used to e.g., simulate systems at a lower temperature.

\section*{Operator-Level Acceleration of Replica Exchange}

In this section, we extend the operator-level quantum algorithms to accelerate the replica exchange Langevin diffusion (RELD).

\subsection*{Replica Exchange Langevin Diffusion}

For rugged (non-convex) energy landscapes, Langevin dynamics may require prohibitively long times to traverse energy barriers separating local minima.
Replica exchange (or parallel tempering) is a widely adopted strategy to mitigate this difficulty.
It involves simulating multiple replicas at different temperatures, where high-temperature replicas explore the global landscape and low-temperature replicas concentrate near local minima.
Through carefully designed swap moves between replicas, replica exchange enables barrier-crossing events that are otherwise rare at low temperatures.
Its ability to accelerate overdamped Langevin diffusion is well established numerically~\cite{sindhikara2010exchange,lu2013infinite,lu2019methodological,chen2020accelerating} and supported to a lesser extent by theoretical results, for instance in the setting of Gaussian mixtures~\cite{dong2022spectral}.

For clarity, we focus on the case of replica exchange with two temperatures. The continuous-time formulation, called the replica exchange Langevin diffusion (RELD), can be modeled by two correlated Langevin dynamics with inverse temperatures $\beta > \beta'>0$, respectively:
\begin{align}\label{eqn:Re_Langevin}
    \d X_t = - \nabla V(X_t) \d t + \sqrt{2/\beta}\d W_t,\quad \d Y_t = - \nabla V(Y_t) \d t + \sqrt{2/\beta'}\d W'_t,
\end{align}
where $X_t$ and $Y_t$ denote the positions of the particle undergoing the low- and high-temperature Langevin dynamics, respectively. $W_t$ and $W'_t$ represent two independent standard Brownian motions.
The invariant distribution of the variable $\{Z_t = (X_t, Y_t)\}_{t\ge 0}$ is a joint Gibbs measure:
\begin{align}
    \bsigma(x,y) \propto \exp\left(-\beta V(x) - \beta' V(y)\right).
\end{align}
The marginal distribution of $x$ under $\bsigma$ recovers the target distribution $\exp(-\beta V(x)) / Z_\beta$.

Without correlation, the mixing time of~\cref{eqn:Re_Langevin} is governed by the low-temperature ($\beta$) system, resulting in no acceleration. In RELD, an additional swap mechanism is introduced to accelerate convergence of~\cref{eqn:Re_Langevin} toward equilibrium. These swap events occur according to a Poisson clock and enable configuration exchange between replicas at different temperatures. Let $\mu > 0$ denote a swapping intensity, and we assume the sequential swapping events take place according to an exponential clock with a rate $\mu$. At a swapping event time, the two particles swap their positions with a Metropolis-Hasting type probability:
\begin{align}\label{eq:mh-swal}
    s(X(t), Y(t)) = \min\left(1, \frac{\bsigma(y,x)}{\bsigma(x,y)}\right).
\end{align}
Note that the Metropolis-Hasting type swapping does not change the invariant Gibbs measure $\bsigma$.

To obtain a continuous-time dynamics that includes the exchange mechanism, let $\rho(t,x,y)$ denote the probability density of the random variables $(X_t, Y_t)$ for time $t$.
The evolution of $\rho(t,x,y)$ can be characterized by the forward Kolmogorov equation $\partial_t \rho = \mathcal{L}(\rho)$, where the generator $\mathcal{L}$ captures both the Langevin dynamics and the swapping mechanism. In particular, $\mathcal{L}$ takes the following form:
\begin{equation}
    \mathcal{L} = \mathcal{L}_1 + \mathcal{L}_2 + \mathcal{L}_s,
\end{equation}
where $\mathcal{L}_1$ and $\mathcal{L}_2$ are the Fokker--Planck generators in~\cref{eqn:FKPK} with inverse temperatures $\beta$ and $\beta'$, respectively. $\mathcal{L}_s$ corresponds to the Metropolis-Hasting swapping described by~\cref{eq:mh-swal} with intensity $\mu$ as
\begin{equation}
    \mathcal{L}_s(\rho(x,y)) \coloneqq \mu \left[s(y,x)\rho(y,x) - s(x,y)\rho(x,y)\right].
\end{equation}
Under mild conditions, the dynamics can be shown to be ergodic, and the joint Gibbs measure $\bsigma$ is the unique invariant measure of the generator $\mathcal{L}$~\cite{DupuisLiuPlattnerEtAl2012}.

For non-logconcave sampling, low-temperature Langevin dynamics often become trapped near metastable configurations for an exponentially long time. Through the swapping process, well-explored configurations in the high-temperature system can be transferred to the low-temperature system, thereby significantly reducing the overall mixing time (see \cite{DongTong2022} for an analysis of RELD for Gaussian mixture models). 

While the swapping mechanism enables replica exchange to effectively overcome energy barriers in the potential landscape and thus achieve faster mixing, it could still suffer from the curse of dimensionality. For example, \cite{woodard2009conditions,lee2023improved} suggest that in the presence of multiple modes, the spectral gap of replica exchange can decay rapidly as the problem dimension increases. Thus it is desirable to design  a quantum algorithm with provable speedup over RELD.

\subsection*{Generalized Witten Laplacian of RELD}

Now, we consider the operator transformation
\begin{equation}\label{eq:reld-witten-laplacian}
    \mathcal{H} \coloneqq - \bsigma^{-1/2}\circ \mathcal{L} \circ \bsigma^{1/2},
\end{equation}
which is referred to as the \textit{generalized Witten Laplacian} of RELD. 
Due to the similarity transformation, the spectrum of $\mathcal{H}$ is the same as that of $-\mathcal{L}$, and the kernel of $\mathcal{H}$ is given by the extended \textit{encoded Gibbs state} $\ket{\sqrt{\bsigma}}$ with $\braket{x,y}{\sqrt{\bsigma}} = \sqrt{\bsigma(x,y)}$. Note that $\ket{\sqrt{\bsigma}}$ is a product state, and the Gibbs distribution $\sigma \propto e^{-\beta V}$ can be recovered by measuring $\ket{\sqrt{\bsigma}}$ in the $x$-variable register.
Under this transformation, the two Fokker--Planck-type operators $\mathcal{L}_1$ and $\mathcal{L}_2$ are mapped to two Witten Laplacians, denoted by $\mathcal{H}_1$ and $\mathcal{H}_2$. Similar to~\cref{eqn:H-witten-lap_1}, each Witten Laplacian can be factorized as the sum of $d$ non-negative operators of the form $L^\dagger L$.

The transformation of the swap operator $\mathcal{L}_s$ is more involved. First, we rewrite the operator $\mathcal{L}_s = \mu (W - I)\circ S$, where $I$ is the identity operator, $W(\psi(x,y)) \coloneqq \psi(y,x)$ interchanges the $x$ and $y$ variables in a function, and $S(\psi(x,y)) \coloneqq s(x,y)\psi(x,y)$ represents the multiplication with the function $s$.
It can be readily verified that $\sqrt{s(x,y)\bsigma(x,y)} = \sqrt{s(y,x)\bsigma(y,x)}$. In other words, $W$ commutes with $S^{1/2}\bsigma^{1/2}$, i.e., $W S^{1/2}\bsigma^{1/2} = S^{1/2}\bsigma^{1/2} W$.
Therefore, we can rewrite the transformed swap operator
\begin{align}\label{eqn:L_s_similarity}
    \mathcal{H}_s \coloneqq - \bsigma^{-1/2}\circ \mathcal{L}_s \circ \bsigma^{1/2} = \mu S^{1/2}(I-W) S^{1/2} = L^\dagger_s L_s,\quad L_s = \sqrt{\mu/2}(I-W) S^{1/2}.
\end{align}
Note that we use $W^2=I$ in the last step.
It turns out that the generalized Witten Laplacian of RELD consists of three component operators $\mathcal{H} = \mathcal{H}_1+\mathcal{H}_2+\mathcal{H}_s$. Moreover, $\mathcal{H}$ admits a factorization of the form 
\begin{equation}\label{eqn:L_RE}
    \mathcal{H} = \mathbb{L}_{\mathrm{RE}}^\dagger \mathbb{L}_{\mathrm{RE}},\quad \mathbb{L}_{\mathrm{RE}} = [L^\top_1, \dots, L^\top_d, (L'_1)^\top,\dots, (L'_d)^\top, L^\top_s]^\top,
\end{equation}
where the block matrix $\mathbb{L}_{\mathrm{RE}}$ encompasses $(2d+1)$ operators: $\{L_j\}^d_{j=1}$ and $\{L'_j\}^d_{j=1}$ correspond to the Langevin dynamics with inverse temperature $\beta$ and $\beta'$, respectively; $L_s$ corresponds to the swap mechanism. More details of $\mathbb{L}_{\mathrm{RE}}$ can be found in~\cref{append:symmetrize-reld}.
We refer to $\mathcal{H}$ as the \textit{generalized Witten Laplacian} of RELD, and it enables us to apply our operator-level quantum sampling algorithm.

\subsection*{Quantum algorithms and complexity analysis}

The encoded (joint) Gibbs state $\ket{\sqrt{\bsigma}}$ can be prepared as the ground state of the generalized Witten Laplacian $\mathcal{H}$~\cref{eqn:L_RE}, or equivalently, the right singular vector of $\mathbb{L}_{\rm RE}$ associated with the zero singular value. This can be realized via the singular value thresholding algorithm as discussed before. 

To construct a block-encoding of $\mathbb{L}_{\rm RE}$, we need to perform spatial discretization for the component operators $L_j$ in $\mathbb{L}_{\rm RE}$ for each $j \in [2d+1]$. Note that the component operator $L_j$ in $\mathbb{L}_{\rm RE}$ acts on functions in $\RR^{2d}$. Similar to the previous section, we can first truncate the space $\RR^{2d}$ into a finite-size numerical domain $\Omega = [-a,a]^{2d}$ and discretize it using a uniform grid with $N^{2d}$ points. The discretized $L_j$ operator is a matrix of dimension $N^{2d}$. For a smooth potential with certain growth conditions, to achieve a target sampling accuracy $\epsilon$ in the TV distance, we can choose $a = \mathcal{O}(\log(d/\epsilon))$ and $N = a\cdot\polylog(1/\epsilon)$.

Since the first $2d$ component operators in $\mathbb{L}_{\rm RE}$ are essentially the same as in the Langevin dynamics case, they can be block-encoded using $2$ queries to $\nabla V$ through the same technique as mentioned before.
The last operator $L_s \propto (I-W)\circ S^{1/2}$ can be implemented by concatenating two quantum circuits, one for $I-W$ and another for the scaler multiplication $S^{1/2}$. The exchange operator $W$ can be realized by $d\lceil\log_2(N)\rceil$ SWAP gates; the scalar multiplication operator $S^{1/2}$ can be directly implemented using an arithmetic circuit and queries to the function $V$.
Overall, we can build a block-encoding of $\mathbb{L}_{\mathrm{RE}}$ with $2$ quantum queries to the gradient $\nabla V$, $4$ quantum queries to the function value of $V$, and $10$ ancilla qubits, as detailed in~\cref{prop:block-encoding-L-RE}. 
The normalization factor of this block-encoding is equal to $\alpha \le \widetilde{\mathcal{O}}(\sqrt{\beta d})$ (suppressing poly-logarithmic factors in $d$ and $\epsilon$).
Finally, we can implement QSVT for singular value thresholding, which creates a projection onto the encoded Gibbs state $\ket{\sqrt{\bsigma}}$ using $\widetilde{\mathcal{O}}\left(\sqrt{\beta d/\mathrm{Gap}\left(\mathcal{L}^\dagger\right)}\right)$ queries to the block-encoded matrix $\mathbb{L}_{\rm RE}$. Here, $\mathrm{Gap}\left(\mathcal{L}^\dagger\right)$ denotes the spectral gap of the generator of RELD, i.e., the smallest positive eigenvalue of $-\mc{L}^\dagger$. The overall complexity of the quantum algorithm is summarized below:

\begin{thm}[Informal]\label{thm:main-2}
    Assuming access to a warm start state $\ket{\bphi}$, and let $\beta'$, $\mu$ be constant independent of $d$ and $\beta$. 
    For a sufficiently large $\beta$, there exists a quantum algorithm that outputs a random variable $(X,Y)\in \RR^{2d}$ distributed according to $\bbeta$ such that $TV(\bbeta, \bsigma) \le \epsilon$
    using $\sqrt{\beta d/\mathrm{Gap}\left(\mathcal{L}^\dagger\right)} \cdot \polylog(d, 1/\epsilon)$ quantum queries to the function value $V$ and the gradient $\nabla V$, respectively. Consequently, the distribution of $X$ with a marginal law $\eta$ satisfies $TV(\eta, \sigma) \le \epsilon$, where $\sigma \propto e^{-\beta V}$.
\end{thm}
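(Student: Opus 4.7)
The plan is to mirror the proof strategy of~\cref{thm:main} but adapted to the joint state $\ket{\sqrt{\bsigma}}$ on the combined $(x,y)$ register, leveraging the factorization $\mathcal{H} = \mathbb{L}_{\rm RE}^\dagger \mathbb{L}_{\rm RE}$ from~\cref{eqn:L_RE}. The algorithm has four steps: (a) spatially discretize $\mathbb{L}_{\rm RE}$, (b) build a block-encoding of the discrete matrix, (c) apply singular value thresholding via QSVT to filter a warm start $\ket{\bphi}$ onto the zero singular vector, and (d) measure the $x$-register in the computational basis (with the interpolation of~\cref{lem:high-accuracy-interpolation}) to obtain a sample from the marginal. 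For (a), I would truncate $\RR^{2d}$ to $\Omega = [-a,a]^{2d}$ with $a = \mathcal{O}(\log(d/\epsilon))$, using the growth hypothesis on $V$ to bound the tails of $\bsigma$, and impose periodic boundary conditions on a uniform grid of $N = a\cdot\polylog(1/\epsilon)$ points per axis so that the QFT-based differentiation analysis of~\cref{append:spatial_discretize_Lj} transfers coordinate-by-coordinate. The resulting discrete $\mathbb{L}_{\rm RE}$ has a kernel whose amplitudes are $(\epsilon/2)$-close to $\sqrt{\bsigma}$ at the grid points.

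For (b), the $2d$ Langevin components act on the $x$- and $y$-registers separately and can be jointly block-encoded using a single shared oracle call to $\nabla V$ (conditioned on an ancilla selecting the active register), inheriting the $\widetilde{\mathcal{O}}(\sqrt{\beta d})$ subnormalization from~\cref{thm:main} since $\beta' = \mathcal{O}(1)$. The swap component $L_s = \sqrt{\mu/2}(I-W)\circ S^{1/2}$ factorizes into a register permutation $W$, implemented by $d\lceil\log_2 N\rceil$ SWAP gates that do not query $V$, and a diagonal multiplication $S^{1/2}$ whose values $\sqrt{\min(1,\exp(-(\beta-\beta')(V(x)-V(y))))}$ lie in $[0,1]$ and can be loaded via an arithmetic circuit using $\mathcal{O}(1)$ queries to $V$. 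An LCU that combines the Langevin and swap parts with weights $(\sqrt{\beta d},\sqrt{\beta' d},\sqrt{\mu})$ then delivers $\mathbb{L}_{\rm RE}$ with subnormalization $\alpha = \widetilde{\mathcal{O}}(\sqrt{\beta d})$ and the resource count recorded in~\cref{prop:block-encoding-L-RE}.

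For (c)--(d), the similarity~\cref{eq:reld-witten-laplacian} gives $\mathrm{Gap}(\mathbb{L}_{\rm RE}) = \sqrt{\mathrm{Gap}(\mathcal{L}^\dagger)}$. Applying QSVT to the block-encoded $\mathbb{L}_{\rm RE}$ with the even polynomial filter of~\cref{fig:rectangle} rescaled to transition at $\mathrm{Gap}(\mathbb{L}_{\rm RE})/(4\alpha)$ projects $\ket{\bphi}$ onto $\ket{\sqrt{\bsigma}}$ with constant success probability (thanks to $|\braket{\bphi}{\sqrt{\bsigma}}| = \Omega(1)$) at a query cost $\widetilde{\mathcal{O}}(\alpha/\mathrm{Gap}(\mathbb{L}_{\rm RE})) = \widetilde{\mathcal{O}}(\sqrt{d\beta/\mathrm{Gap}(\mathcal{L}^\dagger)})$ calls to the block-encoding, which in turn makes $\mathcal{O}(1)$ queries each to $V$ and $\nabla V$. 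A computational-basis measurement on the $x$-register then samples $\eta$ with $\mathrm{TV}(\eta,\sigma) \le \epsilon$ via the data-processing inequality applied to the marginal map $(x,y)\mapsto x$.

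The hardest step will be (b), specifically controlling the contribution of $L_s$. Because $s(x,y)$ mixes $V$ across both registers, a naive composition of $(I-W)$, $S^{1/2}$, and the Langevin block-encodings could inflate $\alpha$ by factors that erase the $\sqrt{\beta}$ savings of the operator-level approach. Keeping $\alpha = \widetilde{\mathcal{O}}(\sqrt{\beta d})$ requires packaging $L_s$ as a separate LCU summand, which relies on the symmetrized factorization $\mathcal{H}_s = L_s^\dagger L_s$ from~\cref{eqn:L_s_similarity} --- itself derived from the commutation $W S^{1/2}\bsigma^{1/2} = S^{1/2}\bsigma^{1/2} W$ that encodes detailed balance of the swap move. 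A secondary technicality is verifying that this commutation persists after spatial discretization, which reduces to checking that the grid values of $s(x,y)\bsigma(x,y)$ remain invariant under register swap --- immediate when both registers use identical grids.
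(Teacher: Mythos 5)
Your proposal follows essentially the same route as the paper's proof of \cref{thm:main-2-formal}: spectral discretization of $\mathbb{L}_{\rm RE}$ on a periodic grid (codified in the paper as \cref{assump:discretization-2}), the stacked block-encoding with $\alpha = \widetilde{\mathcal{O}}(\sqrt{\beta d})$ from \cref{prop:block-encoding-L-RE} (where the swap block $L_s$ costs only $\mathcal{O}(1)$ queries to $V$ precisely because $s\in[0,1]$), QSVT-based singular value thresholding via \cref{thm:main-qsvt}, the resolution-boosting measurement of \cref{lem:high-accuracy-interpolation}, and finally the marginal/data-processing argument for the $x$-register. Your closing observation that $\alpha$ could blow up without the symmetrized factorization $\mathcal{H}_s = L_s^\dagger L_s$, and that one should verify this factorization survives identical-grid discretization, is exactly the right technical worry, which the paper implicitly absorbs into \cref{assump:discretization-2}.
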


A rigorous statement and proof of the theorem above are provided in~\cref{thm:main-2-formal} of~\cref{append:proof-main-2}.
The complexity in the above theorem scales as $\mathcal{O}\left(\sqrt{1/\mathrm{Gap}}\right)$, where $\mathrm{Gap}$ denotes the spectral gap of the continuous-time RELD.

\section*{Lindblad Dynamics for Warm-Start Preparation}
We consider a Lindblad master equation given by:
\begin{equation}\label{eqn:lindbladian}
    \partial_t \rho = \mathfrak{L}[\rho], \quad \mathfrak{L}[\rho] \coloneqq  \sum^d_{j=1}\left(2 L_j\rho L^\dagger_j - \left\{L^\dagger_j L_j, \rho\right\}\right),
\end{equation}
where $\rho(t)$ is a time-dependent density operator and the jump operators $\{L_j\}^d_{j=1}$ are the same as in~\cref{eqn:H-witten-lap_1}. Here, $\{A,B\} = AB + BA$ represents the anti-commutator.
It is straightforward to verify that the encoded Gibbs state is a fixed point of the Lindbladian $\mathfrak{L}$, since $L_j \ket{\sqrt{\sigma}} = 0$ for all $j$. In general, $\rho(t)$ is a mixed state. However,  the evolution of its ``diagonal elements'' in the computational basis reduces to the Fokker--Planck equation in~\cref{eqn:FKPK}; see \cref{lem:weak-convergence}.

Although the Lindblad dynamics is not asymptotically faster in convergence, it can rapidly prepare a quantum representation of a metastable state.  This behavior arises because, while the spectral gaps between metastable states are typically small, which leads to a large Poincar\'e constant and hence slow overall convergence, the metastable subspace itself is often well separated from the rest of the Lindbladian spectrum. This can happen for potential functions $V$ with a relatively small number of local minima and for temperatures that are not too low.
As a result, the system quickly relaxes into this low-lying subspace, yielding an intermediate state with a sufficiently large overlap with the target Gibbs state, which serves as an effective warm start for the subsequent quantum acceleration stage.

Therefore, by simulating the Lindblad equation on a quantum computer, which uses the same block-encoding of $\mathbb{L}$~\cite{li2022simulating}, we can efficiently obtain a warm-start state for our SVT-based quantum algorithm.
A numerical demonstration of this two-phase behavior is presented in the next section, and further technical details are provided in Appendix~\ref{append:lindblad-warm-start}.

\section*{Numerical Experiments}
In this section, we numerically demonstrate our quantum algorithms to accelerate Langevin dynamics (LD) and replica exchange Langevin diffusion (RELD) for non-logconcave sampling. Our results show that the quantum algorithms exhibit significant speedups over their classical counterparts. To simulate our quantum algorithms, we numerically implement singular value thresholding. The details of the numerical implementation can be found in Appendix \ref{append:detail_num}.

\subsection*{Quantum-accelerated Langevin dynamics}

\begin{figure}[ht!]
    \centering
    \includegraphics[width=0.95\linewidth]{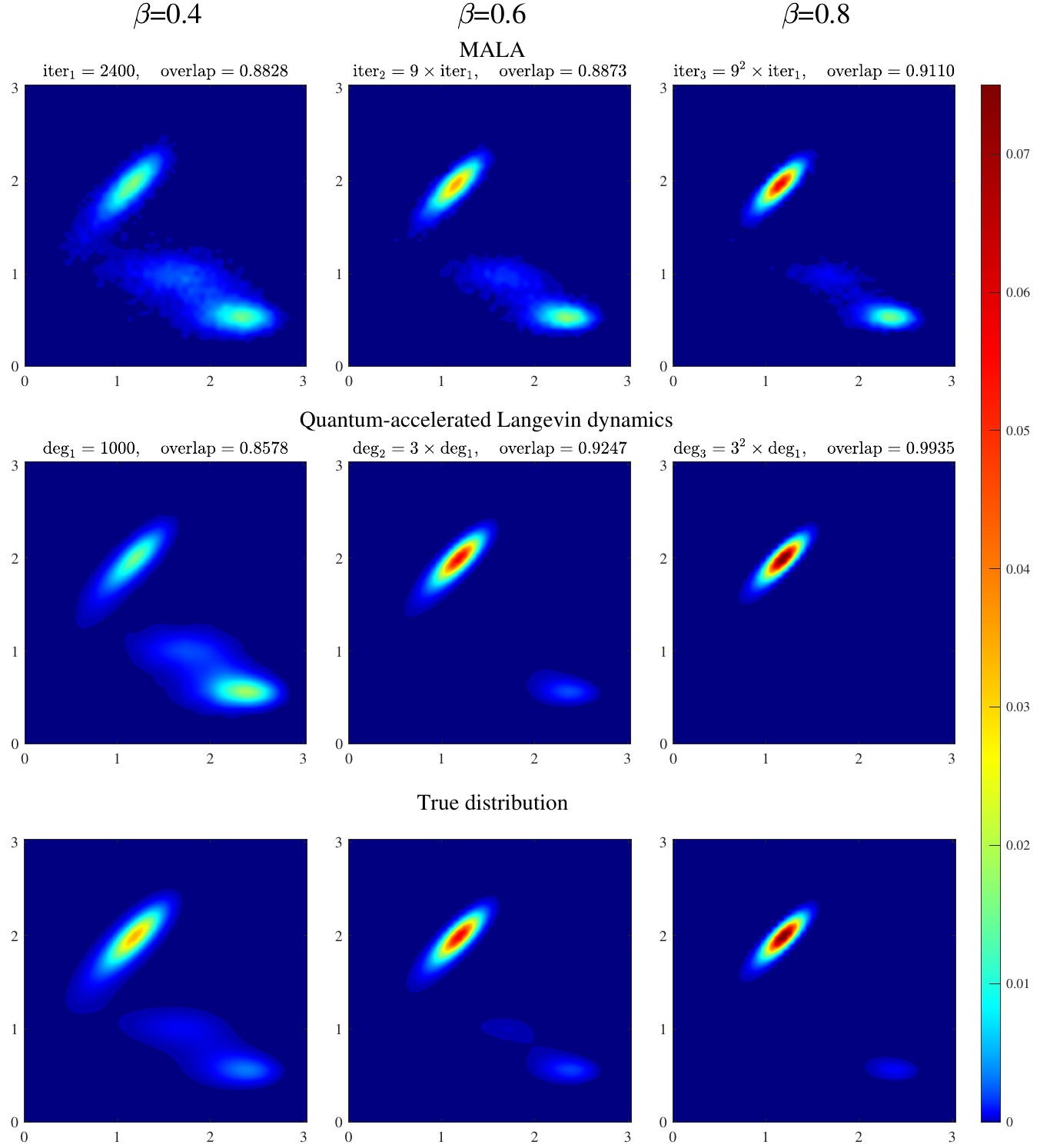}
    \caption{Quantum acceleration of Langevin dynamics. Each column corresponds to a different inverse temperature $\beta$.
    The overlap $\abs{\bra{\sqrt{\sigma}} \ket{\phi}}$ measures the similarity between the sampled and true distribution, where $\ket{\sqrt{\sigma}}$ is the encoded Gibbs state and $\ket{\phi}$ represents either (1) the square root of the output probability density by MALA, or (2) the output pure state by our method. Note that the overlap definition can be generalized to accommodate mixed states; see~\cref{eqn:mixed-state-overlap}.
    Top row: Sample distributions obtained using MALA with the number of iterations increasing by a factor of 9. Middle row: Distributions obtained by our quantum algorithm with the degree of polynomials increasing by a factor of 3. Bottom row: True distribution (Gibbs state).
    }
    \label{fig:qsvt-muller-brown}
\end{figure}

We compare the performance of our quantum-accelerated Langevin sampling with that of MALA using the \textit{M\"uller-Brown} potential~\cite{MuellerBrown1979}, which is often used as a proof of concept for rare event sampling in computational chemistry.
The M\"uller-Brown potential is characterized by a highly non-convex energy surface with three local minima. Details of this potential function, including its analytical expression and energy landscape, are provided in~\cref{append:detail_QALD}.

To conduct a fair comparison, both algorithms start from the same Gaussian distribution $\rho_0$ (or the corresponding Gaussian state $\ket{\sqrt{\rho_0}}$) centered at one of the local minima of the potential function.
We choose three inverse temperatures: $\beta = 0.4, 0.6, 0.8$ for the experiment.
For all tested inverse temperatures, the overlap between the initial distribution $\rho_0$ and the target Gibbs state $\sigma$, i.e., $\abs{\braket{\sqrt{\rho_0}}{\sqrt{\sigma}}}$, is approximately 0.1, indicating a reasonable warm start.

To reflect the computational complexity of the algorithms, we report the number of MALA iterations and the polynomial degree of the filter function in QSVT, both of which directly determine the number of queries to $\nabla V$.
Due to the non-convexity of the potential, the Arrhenius law predicts that the mixing time of MALA scales exponentially with the inverse temperature. For an ascending sequence $\beta \in \{0.4, 0.6, 0.8\}$, the numerical results show that MALA requires approximately 9 times more iterations for every $0.2$ increase in $\beta$, as illustrated in the top row of~\cref{fig:qsvt-muller-brown}.
On the other hand, the middle row of~\cref{fig:qsvt-muller-brown} shows that our quantum algorithm outputs a sample distribution with a comparable (or even slightly better) accuracy by increasing the QSVT polynomial degrees by a factor of $3$ for the same increment in $\beta$.
This observation highlights the efficiency of our quantum algorithm and demonstrates the desired speedup in query complexity over classical MALA.

\subsection*{Quantum-accelerated replica exchange}
We numerically demonstrate the quantum-accelerated RELD using a 1-dimensional potential:
\begin{align}
    V(x) = \cos(\pi x)^2 + 0.25x^4,
    \label{eq:1D_hard_2850}
\end{align}
which is a non-convex function with $4$ local minima, as illustrated in~\cref{fig:1D-hard}.
Even though this experiment is conducted on a one-dimensional example, we expect that similar behavior can be observed in a high dimensional setting, when the energy barrier can be identified along a ``reaction coordinate'' as is often the case in computational chemistry~\cite{ERenVanden-Eijnden2005,ChipotPohorille2007}.

\begin{figure}[ht!]
    \centering
    \begin{subfigure}[t]{0.38\linewidth}
        \centering
        \includegraphics[trim={6cm 0 6cm 0},clip, width=1\linewidth]{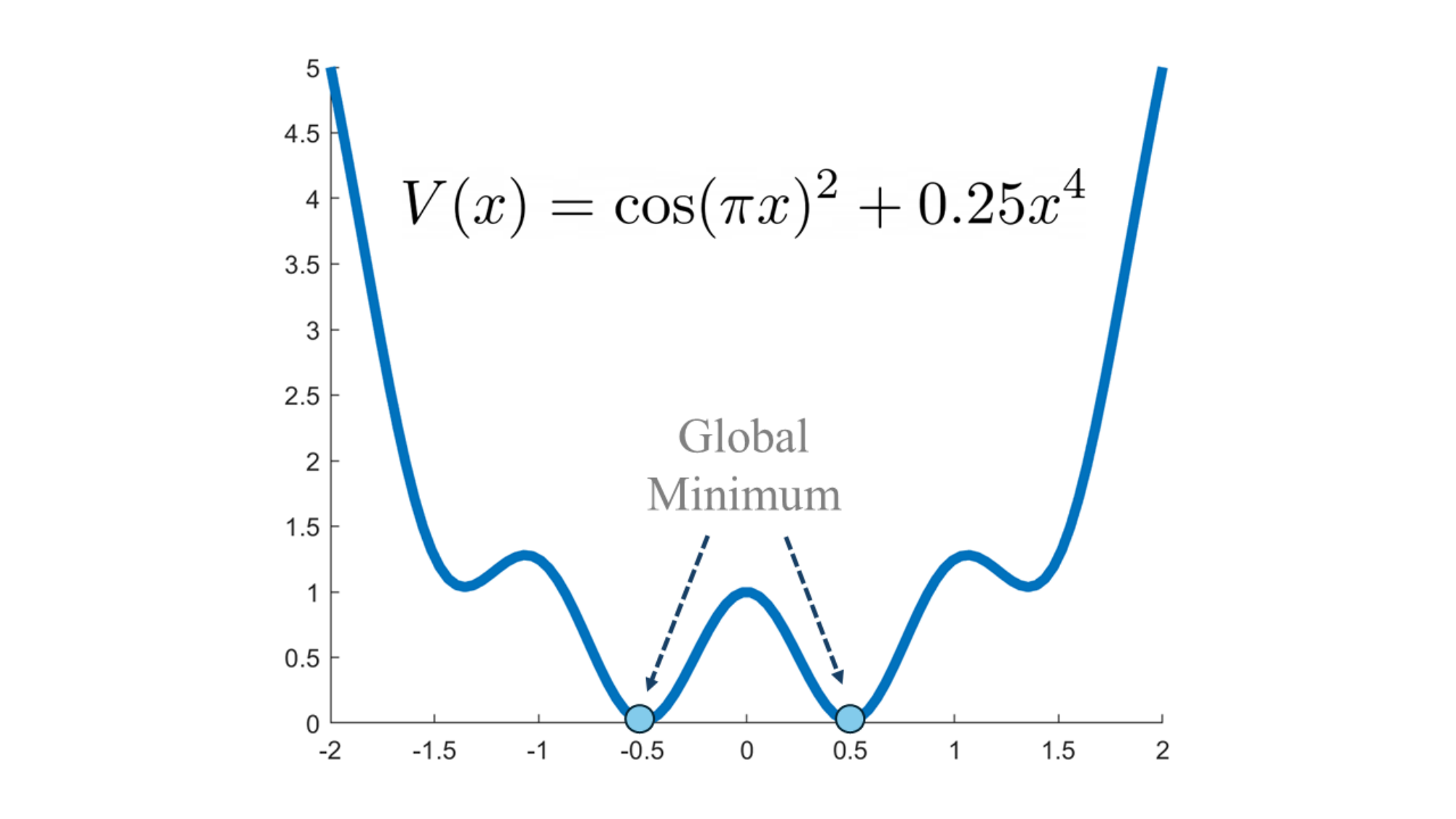}
        \caption{Energy Surface of $V(x)$}
        \label{fig:1D-hard}
    \end{subfigure}
    \hfill
    \begin{subfigure}[t]{0.54\linewidth}
        \centering
        \includegraphics[width=\linewidth]{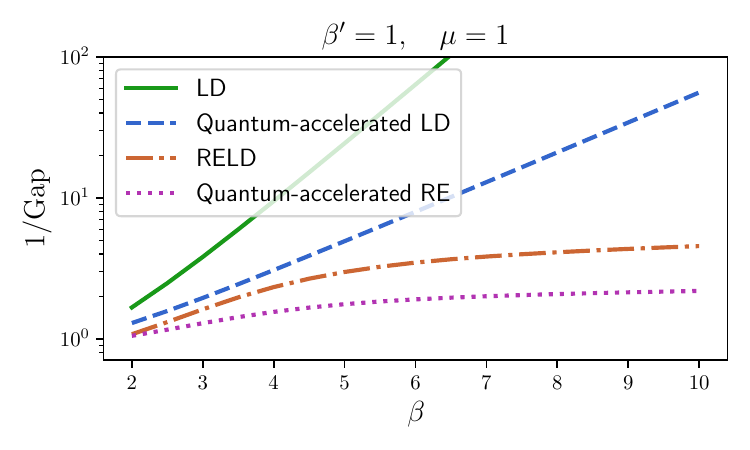}
        \caption{Inverse spectral gaps of different dynamics}
        \label{fig:re-spec-gap}
    \end{subfigure}
    \caption{Quantum acceleration of replica exchange. Left: a non-convex 1D potential. Right; Inverse spectral gaps for LD, RELD, and their quantum counterparts as functions of $\beta$. The spectral gap of LD decays exponentially in $\beta$, while that of RELD shows a much milder decay. The quantum algorithms achieve square-root improvements in the gaps in both cases.}
    \label{fig:combined-fig}
\end{figure}

In~\cref{fig:re-spec-gap}, we illustrate the inverse spectral gaps of two classical dynamics (LD and RELD) and their quantum counterparts. The inverse spectral gap indicates the scaling of the query complexity (for both quantum and classical algorithms) as $\beta$ grows. The spectral gaps of LD and RELD are computed based on the Witten Laplacians, i.e., \cref{eqn:witten_lap} and \cref{eq:reld-witten-laplacian}, respectively. For the two corresponding quantum algorithms, the singular value gaps are computed using~\cref{eqn:L-block-matrix} and~\cref{eqn:L_RE}, respectively. For RELD, we set $\beta' = \mu = 1$ for all choices of $\beta$.

We observe that the spectral gap of Langevin dynamics (LD) decays exponentially as $\beta$ increases, while RELD exhibits a much milder decay, illustrating the advantage of RELD for non-logconcave sampling. Furthermore, we find that the singular value gaps of the quantum-accelerated algorithms are approximately\footnote{Due to spatial discretization, the singular value gap in the quantum algorithms can be slightly different from the square root of the Witten Laplacian gap. This error can be exponentially small thanks to the pseudo-spectral method.} the square root of the spectral gaps of their classical counterparts, consistent with our theoretical analysis and confirming that our quantum algorithms achieve a quadratic speedup as $\beta$ becomes large.

In~\cref{fig:combined-fig-lindblad-warm-start}, we demonstrate the effectiveness of the Lindbladian-based warm-start preparation method using the same 1D potential as in~\cref{fig:1D-hard}.
We evolve the Lindblad dynamics starting from an initial state with negligible overlap with the encoded Gibbs state (see~\cref{fig:warminit}).
The overlap between the solution state $\rho(t)$ and the encoded Gibbs state is defined as 
\begin{equation}\label{eqn:mixed-state-overlap}
    \sqrt{\Tr\left[\ketbra{\sqrt{\sigma}}{\sqrt{\sigma}}\rho(t)\right]},
\end{equation}
and~\cref{fig:warmVsMix} illustrates how the overlap changes over time.
We observe that a short-time Lindbladian evolution ($t=1$) already yields a warm-start state with a constant ($\ge 0.2$) overlap.
Beyond this point, the dynamics requires an evolution time at least an order of magnitude longer to fully converge to the target Gibbs state. This later stage can be significantly accelerated using our quantum-accelerated LD or RELD algorithms.
\begin{figure}[ht!]
    \centering
    \begin{subfigure}[t]{0.45\linewidth}
        \centering
        \includegraphics[width=\linewidth]{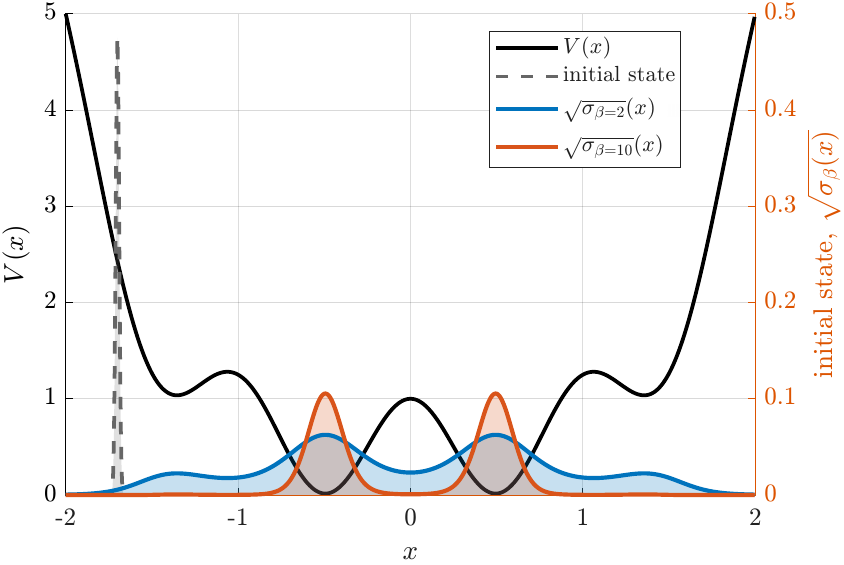}
        \caption{Initial state and encoded Gibbs states}
    \label{fig:warminit}
    \end{subfigure}
    \hfill
    \begin{subfigure}[t]{0.52\linewidth}
        \centering
        \includegraphics[width=\linewidth]{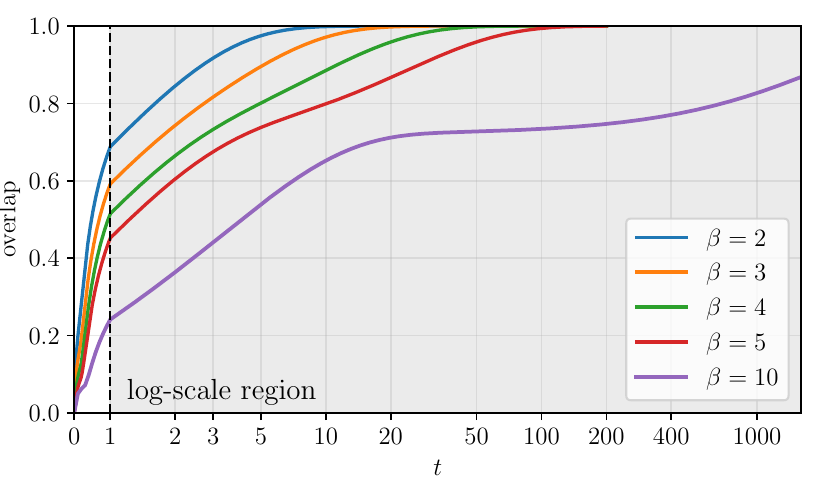}
        \caption{Overlap with respect to time}
    \label{fig:warmVsMix}
    \end{subfigure}
    \caption{Lindbladian-based warm-start generation. Left: Initial state for the Lindblad dynamics, and encoded Gibbs states at $\beta=2$ and $\beta=10$. Right: The overlap between the quantum state evolved under the Lindblad dynamics and the target Gibbs state, defined as in~\cref{eqn:mixed-state-overlap}. A short evolution time ($t \le 1$) is enough to prepare warm-start states with constant ($\ge 0.2$) overlap even at low temperatures.}
    \label{fig:combined-fig-lindblad-warm-start}
\end{figure}

\section*{Discussion}
In this work, we present a novel framework for accelerating probabilistic sampling using quantum computers. By associating the Gibbs sampling task with the preparation of the encoded Gibbs state, we leverage quantum singular value thresholding to accelerate classical sampling. Our method operates directly at the operator level, without requiring explicit time discretization of stochastic processes, thereby circumventing existing classical algorithms based on simulating discrete-time Markov chains. To the best of our knowledge, this provides the first provable quantum speedup for a broad class of sampling processes, directly in terms of the spectral gap of the infinitesimal generator, which is a type of speedup with no classical counterpart.

Our quantum algorithm relies on preparing the encoded Gibbs state $\ket{\sqrt{\sigma}}$, which corresponds to the stationary state of a Markov chain (i.e., a forward Kolmogorov equation). In the literature, the quantum state corresponding to the stationary distribution of a classical Markov chain is often referred to as a ``qsample.'' Generating a qsample is a long-standing challenge in quantum computing and is widely regarded as a significantly more powerful task than its classical counterpart~\cite{aharonov2003adiabatic,szegedy2004quantum,temme2025quantized}. In our algorithm, the state $\ket{\sqrt{\sigma}}$ is fully characterized as the ground state of the (generalized) Witten Laplacian $\mathcal{H}$.
In fact, given any target probability distribution with density $\rho$, as long as the quantum state $\ket{\sqrt{\rho}}$ can be characterized as an eigenstate of a linear operator, a similar singular value thresholding approach may apply to facilitate the sampling from $\rho$. For example, the ground state of the index-1 Witten Laplacian (the standard Witten Laplacian is also called the index-0 Witten Laplacian) encodes a distribution that locates the metastable configurations over a non-convex potential function~\cite{HelfferNier2004,lelievre2024using}. In such cases, our framework may be extended to enable sampling from non-Gibbs distributions, for which few efficient classical algorithms are known.

Our construction of block-encodings for the factor operators $L_j$ relies on the quantum implementation of the Fourier transform in $\RR^d$. Extending these techniques to more general manifolds or to non-differentiable settings remains an open question and an interesting avenue for future research.

In the limit of zero temperature (i.e., $\beta \to \infty$), non-logconcave sampling reduces to a non-convex optimization problem. Recent works~\cite{leng2023quantum,chen2025quantum,leng2025quantum} propose to leverage quantum dynamics (e.g., Hamiltonian or Lindbladian evolution) for solving non-convex optimization problems. These dynamics-based quantum optimization algorithms operate independently of classical processes and can exhibit super-polynomial speedups over all classical algorithms for some problem classes~\cite{leng2025sub}. This raises a natural question: can certain quantum dynamics be harnessed to achieve large (i.e., super-quadratic) acceleration for non-logconcave sampling? If such methods exist, they would greatly expand the design space of quantum algorithms with practical utility.

Quantum Gibbs sampling, which prepares a thermal state $\rho \propto e^{-\beta H}$ for a quantum Hamiltonian $H$, has seen rapid progress in algorithmic developments and analysis in recent years~\cite{RallWangWocjan2023,chen2023quantum,ding2025efficient,gilyen2024quantum,BardetCapelGaoEtAl2023,rouz2024,rouze2024optimal,kochanowski2024rapid,tong2024fast}. Since classical Gibbs sampling corresponds to the special case where $H$ is diagonal in the computational basis,  we hope that the interplay between classical and quantum perspectives may help uncover further quantum advantages in Gibbs sampling tasks.

\section*{Acknowledgments}
This work is partially supported by the Simons Quantum Postdoctoral Fellowship, DOE QSA grant \#FP00010905 (J.L.), and a Simons Investigator Award in Mathematics through Grant No. 825053 (J.L., L.L.). Support is also acknowledged from the U.S. Department of Energy, Office of Science, National Quantum Information Science Research Centers, Quantum Systems Accelerator (Z.D., Z.C.) and the U.S. Department of Energy, Office of Science, Accelerated Research in Quantum Computing Centers, Quantum Utility through Advanced Computational Quantum Algorithms, grant no. DE-SC0025572 (L.L.). We thank David Limmer, Jianfeng Lu, Lexing Ying and Ruizhe Zhang for helpful suggestions.

\section*{Author Contributions}
J.L., Z.D., and L.L. conceived the original study and carried out the theoretical analysis to support the study. J.L. and Z.C. carried out numerical calculations to support the study. All authors, J.L., Z.D., Z.C., and L.L. discussed the results of the manuscript and contributed to the writing of the manuscript.

\bibliographystyle{myhamsplain}
\bibliography{refs}

@article{madras2003swapping,
  title={On the swapping algorithm},
  author={Madras, Neal and Zheng, Zhongrong},
  journal={Random Structures \& Algorithms},
  volume={22},
  number={1},
  pages={66--97},
  year={2003},
  publisher={Wiley Online Library}
}

@article{dong2021efficient,
  title={Efficient phase-factor evaluation in quantum signal processing},
  author={Dong, Yulong and Meng, Xiang and Whaley, K Birgitta and Lin, Lin},
  journal={Physical Review A},
  volume={103},
  number={4},
  pages={042419},
  year={2021},
  publisher={APS}
}

@article{sindhikara2010exchange,
  title={Exchange often and properly in replica exchange molecular dynamics},
  author={Sindhikara, Daniel J and Emerson, Daniel J and Roitberg, Adrian E},
  journal={Journal of Chemical Theory and Computation},
  volume={6},
  number={9},
  pages={2804--2808},
  year={2010},
  publisher={ACS Publications}
}

@article{dong2022spectral,
  title={Spectral gap of replica exchange langevin diffusion on mixture distributions},
  author={Dong, Jing and Tong, Xin T},
  journal={Stochastic Processes and their Applications},
  volume={151},
  pages={451--489},
  year={2022},
  publisher={Elsevier}
}

@book{levin2017markov,
  title     = {Markov Chains and Mixing Times},
  author    = {Levin, David A. and Peres, Yuval and Wilmer, Elizabeth L.},
  edition   = {Second},
  year      = {2017},
  publisher = {American Mathematical Society},
  address   = {Providence, RI},
  series    = {American Mathematical Society Textbooks},
  isbn      = {978-1-4704-2962-1}
}

@article{Bolhuis_2022,
   author = "Bolhuis, Peter G. and Chandler, David and Dellago, Christoph and Geissler, Phillip L.",
   title = "TRANSITION
PATH SAMPLING:
Throwing Ropes Over Rough Mountain Passes, in the Dark", 
   journal= "Annual Review of Physical Chemistry",
   year = "2002",
   volume = "53",
   number = "Volume 53, 2002",
   pages = "291-318",
  }

@article{MARAGLIANO2006168,
title = {A temperature accelerated method for sampling free energy and determining reaction pathways in rare events simulations},
journal = {Chemical Physics Letters},
volume = {426},
number = {1},
pages = {168-175},
year = {2006},
author = {Luca Maragliano and Eric Vanden-Eijnden},
}

@article{lu2015reactive,
  title={Reactive trajectories and the transition path process},
  author={Lu, Jianfeng and Nolen, James},
  journal={Probability Theory and Related Fields},
  volume={161},
  number={1},
  pages={195--244},
  year={2015},
}

@article {PMID:18999998,
  Title = {Transition-path theory and path-finding algorithms for the study of rare events},
  Author = {E, Weinan and Vanden-Eijnden, Eric},
  Volume = {61},
  Year = {2010},
  Journal = {Annual Review of Physical Chemistry},
}

@article{PhysRevB.66.052301,
  title = {String method for the study of rare events},
  author = {E, Weinan and Ren, Weiqing and Vanden-Eijnden, Eric},
  journal = {Phys. Rev. B},
  volume = {66},
  issue = {5},
  pages = {052301},
  numpages = {4},
  year = {2002},
  month = {Aug},
  publisher = {American Physical Society},
  doi = {10.1103/PhysRevB.66.052301},
  url = {https://link.aps.org/doi/10.1103/PhysRevB.66.052301}
}

@article{Barducci_2011,
author = {Barducci, Alessandro and Bonomi, Massimiliano and Parrinello, Michele},
title = {Metadynamics},
journal = {WIREs Computational Molecular Science},
volume = {1},
number = {5},
pages = {826-843},
year = {2011}
}

@inproceedings{renyi1961measures,
  author    = {Alfr\'ed Rényi},
  title     = {On Measures of Entropy and Information},
  booktitle = {Proceedings of the Fourth Berkeley Symposium on Mathematical Statistics and Probability},
  volume    = {1},
  pages     = {547--561},
  publisher = {University of California Press},
  year      = {1961},
}

@article{vaserstein1969markov,
  author    = {L. N. Vaserstein},
  title     = {Markov Processes over Denumerable Products of Spaces, Describing Large Systems of Automata},
  journal   = {Problemy Peredachi Informatsii},
  volume    = {5},
  pages     = {64--72},
  year      = {1969},
}

@article{Kantorovich_1960,
author = {Kantorovich, L. V.},
title = {Mathematical Methods of Organizing and Planning Production},
journal = {Management Science},
volume = {6},
number = {4},
pages = {366-422},
year = {1960},
}

@book{vanKampen1992stochastic,
  author    = {N. G. van Kampen},
  title     = {Stochastic Processes in Physics and Chemistry},
  publisher = {North-Holland},
  address   = {Amsterdam},
  year      = {1992},
  edition   = {2nd},
}

@book{ohagan2004kendall,
  author    = {Anthony O'Hagan and Jonathan J. Forster},
  title     = {Kendall's Advanced Theory of Statistics, Volume 2B: Bayesian Inference},
  edition   = {2nd},
  publisher = {Arnold},
  year      = {2004},
  address   = {London},
}

@article{Csiszar_1975,
author = {I. Csiszar},
title = {{$I$-Divergence geometry of probability distributions and minimization problems}},
volume = {3},
journal = {The Annals of Probability},
number = {1},
publisher = {Institute of Mathematical Statistics},
pages = {146 -- 158},
year = {1975},
}

@article{Morimoto_1963,
author = {Morimoto ,Tetsuzo},
title = {Markov Processes and the H-Theorem},
journal = {Journal of the Physical Society of Japan},
volume = {18},
number = {3},
pages = {328-331},
year = {1963},
}

@article{Ali_1966,
 author = {S. M. Ali and S. D. Silvey},
 journal = {Journal of the Royal Statistical Society. Series B (Methodological)},
 number = {1},
 pages = {131--142},
 publisher = {[Royal Statistical Society, Oxford University Press]},
 title = {A General Class of Coefficients of Divergence of One Distribution from Another},
 urldate = {2025-03-24},
 volume = {28},
 year = {1966}
}

@book{billing_1999,
  address = {New York},
  author = {Billingsley, Patrick},
  isbn = {0-471-19745-9},
  keywords = {probability_theory reference},
  mrclass = {60B10 (28A33 60F17)},
  mrnumber = {MR1700749 (2000e:60008)},
  note = {A Wiley-Interscience Publication},
  pages = {x+277},
  publisher = {John Wiley \& Sons Inc.},
  series = {Wiley Series in Probability and Statistics: Probability and
              Statistics},
  title = {Convergence of probability measures},
  year = 1999
}

@book{Leo_1992,
author = {Breiman, Leo},
title = {Probability},
publisher = {Society for Industrial and Applied Mathematics},
year = {1992},
doi = {10.1137/1.9781611971286},
address = {},
edition   = {},
URL = {https://epubs.siam.org/doi/abs/10.1137/1.9781611971286},
eprint = {https://epubs.siam.org/doi/pdf/10.1137/1.9781611971286}
}

@article{Che+18,
  title={Sharp Convergence Rates for {Langevin} Dynamics in the Nonconvex Setting},
  author={Xiang Cheng and Niladri S. Chatterji and Yasin Abbasi-Yadkori and Peter L. Bartlett and Michael I. Jordan},
  journal={arXiv preprint abs/1805.01648},
  year={2018},
}

@InProceedings{RRT17,
  title = 	 {Non-convex learning via Stochastic Gradient {Langevin} Dynamics: a nonasymptotic analysis},
  author = 	 {Raginsky, Maxim and Rakhlin, Alexander and Telgarsky, Matus},
  booktitle = 	 {Proceedings of the 2017 Conference on Learning Theory},
  pages = 	 {1674--1703},
  year = 	 {2017},
  volume = 	 {65},
  series = 	 {Proceedings of Machine Learning Research},
  month = 	 {07--10 Jul},
  publisher =    {PMLR},
}

@article{BKG05,
author = {Bovier, Anton and Klein, Markus and Gayrard, Véronique},
year = {2005},
month = {10},
pages = {69-99},
title = {Metastability In Reversible Diffusion Processes {II}: precise Asymptotics For Small Eigenvalues},
volume = {7},
journal = {JEMS},
}

@article{BOV04,
author = {Bovier, Anton and Eckhoff, Michael and Gayrard, Véronique and Klein, Markus},
journal = {JEMS},
number = {4},
pages = {399-424},
title = {Metastability in reversible diffusion processes {I}: sharp asymptotics for capacities and exit times},
volume = {6},
year = {2004},
}

@article{BOV02,
author = {Bovier, Anton and Eckhoff, Michael and Gayrard, Véronique and Klein, Markus},
year = {2002},
month = {05},
pages = {219-255},
title = {Metastability and Low Lying Spectra in Reversible Markov Chains},
volume = {228},
journal = {Comm. Math. Phys.},
}

@article{DMM19,
  title={Analysis of {Langevin Monte Carlo} via Convex Optimization},
  author={Alain Durmus and Szymon Majewski and Błażej Miasojedow},
  journal={J. Mach. Learn. Res.},
  year={2018},
  volume={20},
  pages={73:1-73:46},
}

@inproceedings{VW19,
 author = {Vempala, Santosh and Wibisono, Andre},
 booktitle = {Advances in Neural Information Processing Systems},
 pages = {},
 publisher = {Curran Associates, Inc.},
 title = {Rapid Convergence of the Unadjusted {Langevin Algorithm}: Isoperimetry Suffices},
 volume = {32},
 year = {2019}
}

@article{DM19,
  title={High-dimensional {Bayesian} inference via the unadjusted {Langevin} algorithm},
  author={Alain Durmus and {\'E}ric Moulines},
  journal={Bernoulli},
  year={2019},
}

@article{DM17,
 ISSN = {10505164},
 author = {Alain Durmus and {\'E}ric Moulines},
 journal = {Ann. Appl. Probab.},
 number = {3},
 pages = {1551--1587},
 publisher = {Institute of Mathematical Statistics},
 title = {Nonasymptotic convergence analysis for the unadjusted {Langevin} algorithm},
 urldate = {2025-03-12},
 volume = {27},
 year = {2017}
}

@article{Dal17b,
 author = {Arnak S. Dalalyan},
 journal = {J. R. Stat. Soc. Ser. B Methodol.},
 number = {3},
 pages = {651--676},
 publisher = {[Royal Statistical Society, Wiley]},
 title = {Theoretical guarantees for approximate sampling from smooth and log-concave densities},
 urldate = {2025-03-12},
 volume = {79},
 year = {2017}
}

@InProceedings{Dal17a,
  title = 	 {Further and stronger analogy between sampling and optimization: 
              {Langevin Monte Carlo} and gradient descent},
  author = 	 {Dalalyan, Arnak},
  booktitle = 	 {Proceedings of the 2017 Conference on Learning Theory},
  pages = 	 {678--689},
  year = 	 {2017},
  volume = 	 {65},
  series = 	 {Proceedings of Machine Learning Research},
  month = 	 {07--10 Jul},
  publisher =    {PMLR},
}

@article{DT12,
title = {Sparse regression learning by aggregation and {Langevin Monte-Carlo}},
journal = {JCSS},
volume = {78},
number = {5},
pages = {1423-1443},
year = {2012},
author = {A.S. Dalalyan and A.B. Tsybakov},
}

@article{chen2020accelerating,
  title={Accelerating nonconvex learning via replica exchange {L}angevin diffusion},
  author={Chen, Yi and Chen, Jinglin and Dong, Jing and Peng, Jian and Wang, Zhaoran},
  journal={arXiv preprint arXiv:2007.01990},
  year={2020}
}

@InProceedings{ozgul2024stochastic,
  title = 	 {Stochastic Quantum Sampling for Non-Logconcave Distributions and Estimating Partition Functions},
  author =       {Ozgul, Guneykan and Li, Xiantao and Mahdavi, Mehrdad and Wang, Chunhao},
  booktitle = 	 {Proceedings of the 41st International Conference on Machine Learning},
  pages = 	 {38953--38982},
  year = 	 {2024},
  editor = 	 {Salakhutdinov, Ruslan and Kolter, Zico and Heller, Katherine and Weller, Adrian and Oliver, Nuria and Scarlett, Jonathan and Berkenkamp, Felix},
  volume = 	 {235},
  series = 	 {Proceedings of Machine Learning Research},
  month = 	 {21--27 Jul},
  publisher =    {PMLR},
}

@article{chewi2023log,
  title={Log-concave sampling},
  author={Chewi, Sinho},
  journal={Book draft available at https://chewisinho. github. io},
  year={2023}
}

@InProceedings{Chen_2022,
  title = 	 {Improved analysis for a proximal algorithm for sampling},
  author =       {Chen, Yongxin and Chewi, Sinho and Salim, Adil and Wibisono, Andre},
  booktitle = 	 {Proceedings of Thirty Fifth Conference on Learning Theory},
  pages = 	 {2984--3014},
  year = 	 {2022},
  editor = 	 {Loh, Po-Ling and Raginsky, Maxim},
  volume = 	 {178},
  series = 	 {Proceedings of Machine Learning Research},
  month = 	 {02--05 Jul},
  publisher =    {PMLR},
}

@InProceedings{Lee_2021,
  title = 	 {Structured Logconcave Sampling with a Restricted {Gaussian} Oracle},
  author =       {Lee, Yin Tat and Shen, Ruoqi and Tian, Kevin},
  booktitle = 	 {Proceedings of Thirty Fourth Conference on Learning Theory},
  pages = 	 {2993--3050},
  year = 	 {2021},
  editor = 	 {Belkin, Mikhail and Kpotufe, Samory},
  volume = 	 {134},
  series = 	 {Proceedings of Machine Learning Research},
  month = 	 {15--19 Aug},
  publisher =    {PMLR},
}

@article{Titsias_2018,
    author = {Titsias, Michalis K. and Papaspiliopoulos, Omiros},
    title = {Auxiliary Gradient-Based Sampling Algorithms},
    journal = {Journal of the Royal Statistical Society Series B: Statistical Methodology},
    volume = {80},
    number = {4},
    pages = {749-767},
    year = {2018},
    month = {03},
}

@article{Chewi_2024,
author = {Altschuler, Jason M. and Chewi, Sinho},
title = {Faster High-accuracy Log-concave Sampling via Algorithmic Warm Starts},
year = {2024},
issue_date = {June 2024},
publisher = {Association for Computing Machinery},
address = {New York, NY, USA},
volume = {71},
number = {3},
issn = {0004-5411},
journal = {J. ACM},
month = jun,
articleno = {24},
numpages = {55},
keywords = {Metropolis-adjusted Langevin algorithm, log-concave sampling, proximal sampler, shifted divergence, underdamped Langevin, warm start}
}

@misc{chen2023,
      title={When does {M}etropolized {H}amiltonian {M}onte {C}arlo provably outperform {M}etropolis-adjusted {L}angevin algorithm?}, 
      author={Yuansi Chen and Khashayar Gatmiry},
      year={2023},
      eprint={2304.04724},
      archivePrefix={arXiv},
      primaryClass={stat.CO},
      url={https://arxiv.org/abs/2304.04724}, 
}

@article{Wu_2022,
author = {Wu, Keru and Schmidler, Scott and Chen, Yuansi},
title = {{Minimax mixing time of the Metropolis-Adjusted Langevin algorithm for log-concave sampling}},
year = {2022},
issue_date = {January 2022},
publisher = {JMLR.org},
volume = {23},
number = {1},
issn = {1532-4435},
journal = {J. Mach. Learn. Res.},
month = jan,
articleno = {270},
numpages = {63},
keywords = {Langevin algorithms, MCMC algorithms, Hamiltonian dynamics, computational complexity, Bayesian computation}
}

@InProceedings{Lee_2020,
  title = 	 {{Logsmooth gradient concentration and tighter runtimes for Metropolized Hamiltonian Monte Carlo}},
  author =       {Lee, Yin Tat and Shen, Ruoqi and Tian, Kevin},
  booktitle = 	 {Proceedings of Thirty Third Conference on Learning Theory},
  pages = 	 {2565--2597},
  year = 	 {2020},
  editor = 	 {Abernethy, Jacob and Agarwal, Shivani},
  volume = 	 {125},
  series = 	 {Proceedings of Machine Learning Research},
  month = 	 {09--12 Jul},
  publisher =    {PMLR},
}

@inproceedings{Chewi2020OptimalDD,
  title={{Optimal dimension dependence of the Metropolis-Adjusted Langevin Algorithm}},
  author={Sinho Chewi and Chen Lu and Kwangjun Ahn and Xiang Cheng and Thibaut Le Gouic and Philippe Rigollet},
  booktitle={Annual Conference Computational Learning Theory},
  year={2020},
  url={https://api.semanticscholar.org/CorpusID:229363679}
}

@article{Chen_2020,
author = {Chen, Yuansi and Dwivedi, Raaz and Wainwright, Martin J. and Yu, Bin},
title = {Fast mixing of {M}etropolized {H}amiltonian {M}onte {C}arlo: benefits of multi-step gradients},
year = {2020},
issue_date = {January 2020},
publisher = {JMLR.org},
volume = {21},
number = {1},
issn = {1532-4435},
journal = {J. Mach. Learn. Res.},
month = jan,
articleno = {92},
numpages = {71}
}

@InProceedings{Dwivedi_2018,
  title = 	 {Log-concave sampling: {Metropolis-Hastings} algorithms are fast!},
  author =       {Dwivedi, Raaz and Chen, Yuansi and Wainwright, Martin J and Yu, Bin},
  booktitle = 	 {Proceedings of the 31st  Conference On Learning Theory},
  pages = 	 {793--797},
  year = 	 {2018},
  editor = 	 {Bubeck, Sébastien and Perchet, Vianney and Rigollet, Philippe},
  volume = 	 {75},
  series = 	 {Proceedings of Machine Learning Research},
  month = 	 {06--09 Jul},
  publisher =    {PMLR},
}

@Article{ChewiErdogduLiEtAl2024,
  author  = {Chewi, Sinho and Erdogdu, Murat A and Li, Mufan and Shen, Ruoqi and Zhang, Matthew S},
  title   = {{Analysis of Langevin Monte Carlo from Poincar\'e to log-Sobolev}},
  journal = {Foundations of Computational Mathematics},
  year    = {2024},
  pages   = {1--51},
}

@article{RR98,
    author = {Roberts, Gareth O. and Rosenthal, Jeffrey S.},
    title = {Optimal Scaling of Discrete Approximations to {Langevin diffusions}},
    journal = {Journal of the Royal Statistical Society Series B: Statistical Methodology},
    volume = {60},
    number = {1},
    pages = {255-268},
    year = {2002},
    month = {01},
}

@article{Bes_95,
author = {Julian Besag and Peter Green and David Higdon and Kerrie Mengersen},
title = {{Bayesian Computation and Stochastic Systems}},
volume = {10},
journal = {Statistical Science},
number = {1},
publisher = {Institute of Mathematical Statistics},
pages = {3 -- 41},
keywords = {Agricultural field experiments, Bayesian inference, Conditional distributions, Deconvolution, gamma-camera imaging, Gibbs sampler, Hastings algorithms, image analysis, logistic regression, Markov chain Monte Carlo, Markov random fields, Metropolis method, prostate cancer, simultaneous credible regions, spatial statistics, time reversibility, unobserved covariates, variety trials},
year = {1995},
}

@article{Cao_2021,
   title={Complexity of randomized algorithms for underdamped {L}angevin dynamics},
   volume={19},
   ISSN={1945-0796},
   url={http://dx.doi.org/10.4310/CMS.2021.v19.n7.a4},
   DOI={10.4310/cms.2021.v19.n7.a4},
   number={7},
   journal={Communications in Mathematical Sciences},
   publisher={International Press of Boston},
   author={Cao, Yu and Lu, Jianfeng and Wang, Lihan},
   year={2021},
   pages={1827–1853} }

@inproceedings{Shen_2019,
 author = {Shen, Ruoqi and Lee, Yin Tat},
 booktitle = {Advances in Neural Information Processing Systems},
 editor = {H. Wallach and H. Larochelle and A. Beygelzimer and F. d\textquotesingle Alch\'{e}-Buc and E. Fox and R. Garnett},
 pages = {},
 publisher = {Curran Associates, Inc.},
 title = {The Randomized Midpoint Method for Log-Concave Sampling},
 url = {https://proceedings.neurips.cc/paper_files/paper/2019/file/eb86d510361fc23b59f18c1bc9802cc6-Paper.pdf},
 volume = {32},
 year = {2019}
}

@article{MA_2021,
author = {Yi-An Ma and Niladri S. Chatterji and Xiang Cheng and Nicolas Flammarion and Peter L. Bartlett and Michael I. Jordan},
title = {{Is there an analog of Nesterov acceleration for gradient-based MCMC?}},
volume = {27},
journal = {Bernoulli},
number = {3},
publisher = {Bernoulli Society for Mathematical Statistics and Probability},
pages = {1942 -- 1992},
keywords = {accelerated gradient descent, Langevin Monte Carlo, Markov chain Monte Carlo, sampling algorithms},
year = {2021},
doi = {10.3150/20-BEJ1297},
URL = {https://doi.org/10.3150/20-BEJ1297}
}

@inproceedings{Wibisono2018,
  title={Sampling as optimization in the space of measures: The {Langevin dynamics} as a composite optimization problem},
  author={Andre Wibisono},
  booktitle={Annual Conference Computational Learning Theory},
  year={2018},
  url={https://api.semanticscholar.org/CorpusID:3625774}
}

@article{JKO98,
author = {Jordan, Richard and Kinderlehrer, David and Otto, Felix},
title = {The Variational Formulation of the {Fokker--Planck} Equation},
journal = {SIAM Journal on Mathematical Analysis},
volume = {29},
number = {1},
pages = {1-17},
year = {1998},
doi = {10.1137/S0036141096303359},
}

@article{childs2022quantum,
  title={Quantum algorithms for sampling log-concave distributions and estimating normalizing constants},
  author={Childs, Andrew M and Li, Tongyang and Liu, Jin-Peng and Wang, Chunhao and Zhang, Ruizhe},
  journal={Advances in Neural Information Processing Systems},
  volume={35},
  pages={23205--23217},
  year={2022}
}

@InProceedings{motamedigibbs,
  title = 	 {{G}ibbs Sampling of Continuous Potentials on a Quantum Computer},
  author =       {Motamedi, Arsalan and Ronagh, Pooya},
  booktitle = 	 {Proceedings of the 41st International Conference on Machine Learning},
  pages = 	 {36322--36371},
  year = 	 {2024},
  editor = 	 {Salakhutdinov, Ruslan and Kolter, Zico and Heller, Katherine and Weller, Adrian and Oliver, Nuria and Scarlett, Jonathan and Berkenkamp, Felix},
  volume = 	 {235},
  series = 	 {Proceedings of Machine Learning Research},
  month = 	 {21--27 Jul},
  publisher =    {PMLR},
  pdf = 	 {https://raw.githubusercontent.com/mlresearch/v235/main/assets/motamedi24a/motamedi24a.pdf},
  url = 	 {https://proceedings.mlr.press/v235/motamedi24a.html},
  abstract = 	 {Gibbs sampling from continuous real-valued functions is a challenging problem of interest in machine learning. Here we leverage quantum Fourier transforms to build a quantum algorithm for this task when the function is periodic. We use the quantum algorithms for solving linear ordinary differential equations to solve the Fokker–Planck equation and prepare a quantum state encoding the Gibbs distribution. We show that the efficiency of interpolation and differentiation of these functions on a quantum computer depends on the rate of decay of the Fourier coefficients of the Fourier transform of the function. We view this property as a concentration of measure in the Fourier domain, and also provide functional analytic conditions for it. Our algorithm makes zeroeth order queries to a quantum oracle of the function and achieves polynomial quantum speedups in mean estimation in the Gibbs measure for generic non-convex periodic functions. At high temperatures the algorithm also allows for exponentially improved precision in sampling from Morse functions.}
}

@article{chen2023quantum,
  title={Quantum thermal state preparation},
  author={Chen, Chi-Fang and Kastoryano, Michael J and Brand{\~a}o, Fernando GSL and Gily{\'e}n, Andr{\'a}s},
  journal={arXiv preprint arXiv:2303.18224},
  year={2023}
}

@article{li2022simulating,
  title={Simulating Markovian open quantum systems using higher-order series expansion},
  author={Li, Xiantao and Wang, Chunhao},
  journal={arXiv preprint arXiv:2212.02051},
  year={2022}
}

@book{villani2009optimal,
  author    = {C{\'e}dric Villani},
  title     = {Optimal Transport: Old and New},
  series    = {Grundlehren der mathematischen Wissenschaften},
  volume    = {338},
  publisher = {Springer},
  year      = {2009},
  address   = {Berlin, Heidelberg},
}

@book{ledoux2001concentration,
  author    = {Michel Ledoux},
  title     = {The Concentration of Measure Phenomenon},
  series    = {Mathematical Surveys and Monographs},
  volume    = {89},
  publisher = {American Mathematical Society},
  year      = {2001},
  address   = {Providence, RI},
}

@book{bakry2014analysis,
  author    = {Dominique Bakry and Ivan Gentil and Michel Ledoux},
  title     = {Analysis and Geometry of {M}arkov Diffusion Operators},
  series    = {Grundlehren der mathematischen Wissenschaften},
  volume    = {348},
  publisher = {Springer},
  year      = {2014},
  address   = {Cham},
}

@book{villani2009hypocoercivity,
  title={Hypocoercivity},
  author={Villani, C{\'e}dric},
  volume={202},
  number={950},
  year={2009},
  publisher={American Mathematical Society}
}

@inproceedings{gilyen2019quantum,
  title={{Quantum singular value transformation and beyond: exponential improvements for quantum matrix arithmetics}},
  author={Gily{\'e}n, Andr{\'a}s and Su, Yuan and Low, Guang Hao and Wiebe, Nathan},
  booktitle={Proceedings of the 51st Annual ACM SIGACT Symposium on Theory of Computing},
  pages={193--204},
  year={2019}
}

@article{leng2023quantum,
  title={Quantum {H}amiltonian Descent},
  author={Leng, Jiaqi and Hickman, Ethan and Li, Joseph and Wu, Xiaodi},
  journal={arXiv preprint arXiv:2303.01471},
  year={2023}
}

@article{witten1982supersymmetry,
  title={Supersymmetry and {M}orse theory},
  author={Witten, Edward},
  journal={Journal of differential geometry},
  volume={17},
  number={4},
  pages={661--692},
  year={1982},
  publisher={Lehigh University}
}

@article{doi:10.1021/ct100281c,
author = {Sindhikara, Daniel J. and Emerson, Daniel J. and Roitberg, Adrian E.},
title = {Exchange Often and Properly in Replica Exchange Molecular Dynamics},
journal = {Journal of Chemical Theory and Computation},
volume = {6},
number = {9},
pages = {2804-2808},
year = {2010},
}

@inproceedings{szegedy2004quantum,
  title={Quantum speed-up of {M}arkov chain based algorithms},
  author={Szegedy, Mario},
  booktitle={45th Annual IEEE Symposium on Foundations of Computer Science (FOCS)},
  pages={32--41},
  year={2004},
  organization={IEEE}
}

@article{doi:10.1021/ct800016r,
author = {Abraham, Mark J. and Gready, Jill E.},
title = {Ensuring Mixing Efficiency of Replica-Exchange Molecular Dynamics Simulations},
journal = {Journal of Chemical Theory and Computation},
volume = {4},
number = {7},
pages = {1119-1128},
year = {2008},
doi = {10.1021/ct800016r},
}

@Article{DupuisLiuPlattnerEtAl2012,
  author    = {Dupuis, Paul and Liu, Yufei and Plattner, Nuria and Doll, Jimmie D},
  title     = {On the infinite swapping limit for parallel tempering},
  journal   = {Multiscale Modeling \& Simulation},
  year      = {2012},
  volume    = {10},
  number    = {3},
  pages     = {986--1022},
  publisher = {SIAM},
}

@article{sugita1999replica,
  title={Replica-exchange molecular dynamics method for protein folding},
  author={Sugita, Yuji and Okamoto, Yuko},
  journal={Chemical Physics Letters},
  volume={314},
  number={1-2},
  pages={141--151},
  year={1999},
  publisher={Elsevier},
  doi={10.1016/S0009-2614(99)01123-9}
}

@article{Swendsen1986Replica,
  title={Replica {M}onte {C}arlo simulation of spin-glasses},
  author={Swendsen, Robert H. and Wang, Jian-Sheng},
  journal={Physical Review Letters},
  volume={57},
  number={21},
  pages={2607--2609},
  year={1986},
  publisher={American Physical Society},
  doi={10.1103/PhysRevLett.57.2607}
}

@Article{DongTong2022,
  author  = {Jing Dong and Xin T. Tong},
  title   = {Spectral gap of replica exchange {L}angevin diffusion on mixture distributions},
  journal = {Stoch. Process. their Appl.},
  year    = {2022},
  volume  = {151},
  pages   = {451-489},
}

@article{geman1984stochastic,
  title={{Stochastic relaxation, Gibbs distributions, and the Bayesian restoration of images}},
  author={Geman, Stuart and Geman, Donald},
  journal={IEEE Transactions on pattern analysis and machine intelligence},
  number={6},
  pages={721--741},
  year={1984},
  publisher={IEEE}
}

@article{kruschke2010bayesian,
  title={Bayesian data analysis},
  author={Kruschke, John K},
  journal={Wiley Interdisciplinary Reviews: Cognitive Science},
  volume={1},
  number={5},
  pages={658--676},
  year={2010},
  publisher={Wiley Online Library}
}

@book{neal2012bayesian,
  title={Bayesian learning for neural networks},
  author={Neal, Radford M},
  volume={118},
  year={2012},
  publisher={Springer Science \& Business Media}
}

@article{earl2005parallel,
  title={Parallel tempering: Theory, applications, and new perspectives},
  author={Earl, David J and Deem, Michael W},
  journal={Physical Chemistry Chemical Physics},
  volume={7},
  number={23},
  pages={3910--3916},
  year={2005},
  publisher={Royal Society of Chemistry}
}

@InProceedings{PB_2020,
  title = 	 {Statistical Estimation of the {P}oincar\'e constant and Application to Sampling Multimodal Distributions},
  author =       {Pillaud-Vivien, Loucas and Bach, Francis and Lelièvre, Tony and Rudi, Alessandro and Stoltz, Gabriel},
  booktitle = 	 {Proceedings of the Twenty Third International Conference on Artificial Intelligence and Statistics},
  pages = 	 {2753--2763},
  year = 	 {2020},
  volume = 	 {108},
}

@article{augustino2023quantum,
  title={A quantum central path algorithm for linear optimization},
  author={Augustino, Brandon and Leng, Jiaqi and Nannicini, Giacomo and Terlaky, Tam{\'a}s and Wu, Xiaodi},
  journal={arXiv preprint arXiv:2311.03977},
  year={2023}
}

@article{papageorgiou2014estimating,
  title={{Estimating the ground state energy of the Schr{\"o}dinger equation for convex potentials}},
  author={Papageorgiou, Anargyros and Petras, Iasonas},
  journal={Journal of Complexity},
  volume={30},
  number={4},
  pages={469--494},
  year={2014},
  publisher={Elsevier}
}

@article{ge2019faster,
  title={Faster ground state preparation and high-precision ground energy estimation with fewer qubits},
  author={Ge, Yimin and Tura, Jordi and Cirac, J Ignacio},
  journal={Journal of Mathematical Physics},
  volume={60},
  number={2},
  year={2019},
  publisher={AIP Publishing}
}

@article{ozgul2025quantumspeedupsmarkovchain,
      title={Quantum Speedups for {Markov chain Monte Carlo} Methods with Application to Optimization}, 
      author={Guneykan Ozgul and Xiantao Li and Mehrdad Mahdavi and Chunhao Wang},
      year={2025},
      eprint={2504.03626},
      journal={arXiv/2504.03626},
}

@article{lin2020near,
  title={Near-optimal ground state preparation},
  author={Lin, Lin and Tong, Yu},
  journal={Quantum},
  volume={4},
  pages={372},
  year={2020},
  publisher={Verein zur F{\"o}rderung des Open Access Publizierens in den Quantenwissenschaften}
}

@article{somma2007quantum,
  title={Quantum simulated annealing},
  author={Somma, Rolando and Boixo, Sergio and Barnum, Howard},
  journal={arXiv preprint arXiv:0712.1008},
  year={2007}
}

@article{eremenko2006uniform,
  title={Uniform approximation of sgn(x) by polynomials and entire functions},
  author={Eremenko, Alexandre and Yuditskii, Peter},
  journal={arXiv preprint math/0604324},
  year={2006}
}

@article{coppersmith2002approximate,
  title={{An approximate Fourier transform useful in quantum factoring}},
  author={Coppersmith, Don},
  journal={arXiv preprint quant-ph/0201067},
  year={2002}
}

@incollection{ledoux2004spectral,
  author    = {Michel Ledoux},
  title     = {Spectral gap, logarithmic {Sobolev} constant, and geometric bounds},
  booktitle = {Surveys in Differential Geometry},
  volume    = {9},
  pages     = {219--240},
  publisher = {International Press},
  year      = {2004},
  doi       = {10.4310/SDG.2004.v9.n1.a6},
  url       = {https://www.intlpress.com/site/pub/files/_fulltext/journals/sdg/2004/0009/0001/SDG-2004-0009-0001-a006.pdf}
}

@incollection{cheeger1969lower,
  author = {Cheeger, Jeff},
  title = {A lower bound for the smallest eigenvalue of the {L}aplacian},
  booktitle = {Problems in Analysis (Papers Dedicated to Salomon Bochner)},
  editor = {Gunning, Robert C.},
  pages = {195--199},
  publisher = {Princeton University Press},
  address = {Princeton, NJ},
  year = {1969}
}

@Article{HelfferNier2004,
  author  = {Helffer, Bernard and Nier, Francis},
  title   = {Quantitative analysis of metastability in reversible diffusion processes via a {Witten} complex approach: the case with boundary},
  journal = {Matem{\'a}tica contempor{\^a}nea},
  year    = {2004},
  volume  = {26},
  pages   = {41--85},
}

@article{milman2009role,
  author = {Milman, Emanuel},
  title = {On the role of convexity in isoperimetry, spectral gap and concentration},
  journal = {Inventiones mathematicae},
  volume = {177},
  pages = {1--43},
  year = {2009},
  doi = {10.1007/s00222-009-0175-9},
  url = {https://doi.org/10.1007/s00222-009-0175-9}
}

@article{lee2023improved,
  title={Improved bound for mixing time of parallel tempering},
  author={Lee, Holden and Shen, Zeyu},
  journal={arXiv preprint arXiv:2304.01303},
  year={2023}
}

@article{osti_10276674,
place = {Country unknown/Code not available}, 
title = {Faster Convergence of Stochastic Gradient {Langevin} Dynamics for Non-Log-Concave Sampling}, 
url = {https://par.nsf.gov/biblio/10276674}, 
abstractNote = {}, 
journal = {Uncertainty in artificial intelligence}, 
author = {Zou, Difan and Xu, Pan and Gu, Quanquan}
}

@article{woodard2009conditions,
  title={Conditions for rapid mixing of parallel and simulated tempering on multimodal distributions},
  author={Woodard, Dawn B and Schmidler, Scott C and Huber, Mark},
  year={2009}
}

@inproceedings{tang2024cs,
  title={A {CS} guide to the quantum singular value transformation},
  author={Tang, Ewin and Tian, Kevin},
  booktitle={2024 Symposium on Simplicity in Algorithms (SOSA)},
  pages={121--143},
  year={2024},
  organization={SIAM}
}

@book{nielsen2010quantum,
  title={Quantum computation and quantum information},
  author={Nielsen, Michael A and Chuang, Isaac L},
  year={2010},
  publisher={Cambridge university press}
}

@Article{ERenVanden-Eijnden2005,
  author  = {E, Weinan and Ren, Weiqing and Vanden-Eijnden, Eric},
  title   = {Transition pathways in complex systems: Reaction coordinates, isocommittor surfaces, and transition tubes},
  journal = {Chem. Phys. Lett.},
  year    = {2005},
  volume  = {413},
  number  = {1-3},
  pages   = {242--247},
}

@Book{ChipotPohorille2007,
  title     = {Free energy calculations},
  publisher = {Springer},
  year      = {2007},
  author    = {Chipot, Christophe and Pohorille, Andrew},
  volume    = {86},
}

@Article{MuellerBrown1979,
  author    = {M{\"u}ller, Klaus and Brown, Leo D},
  title     = {Location of saddle points and minimum energy paths by a constrained simplex optimization procedure},
  journal   = {Theoretica chimica acta},
  year      = {1979},
  volume    = {53},
  pages     = {75--93},
  publisher = {Springer},
}

@article{chen2025quantum,
  title={Quantum {L}angevin dynamics for optimization},
  author={Chen, Zherui and Lu, Yuchen and Wang, Hao and Liu, Yizhou and Li, Tongyang},
  journal={Communications in Mathematical Physics},
  volume={406},
  number={3},
  pages={52},
  year={2025},
  publisher={Springer}
}

@article{leng2025quantum,
  title={Quantum Hamiltonian Descent for Non-smooth Optimization},
  author={Leng, Jiaqi and Zheng, Yufan and Jia, Zhiyuan and Fan, Lei and Zhao, Chaoyue and Peng, Yuxiang and Wu, Xiaodi},
  journal={arXiv preprint arXiv:2503.15878},
  year={2025}
}

@article{leng2025sub,
  title={{(Sub)Exponential Quantum Speedup for Optimization}},
  author={Leng, Jiaqi and Wu, Kewen and Wu, Xiaodi and Zheng, Yufan},
  journal={arXiv preprint arXiv:2504.14841},
  year={2025}
}

@article{ding2025efficient,
  title={{Efficient quantum Gibbs samplers with Kubo--Martin--Schwinger detailed balance condition}},
  author={Ding, Zhiyan and Li, Bowen and Lin, Lin},
  journal={Communications in Mathematical Physics},
  volume={406},
  number={3},
  pages={67},
  year={2025},
  publisher={Springer}
}

@article{gilyen2024quantum,
  title={{Quantum generalizations of Glauber and Metropolis dynamics}},
  author={Gily{\'e}n, Andr{\'a}s and Chen, Chi-Fang and Doriguello, Joao F and Kastoryano, Michael J},
  journal={arXiv preprint arXiv:2405.20322},
  year={2024}
}

@article{lelievre2024using,
  title={Using {Witten Laplacians} to locate index-1 saddle points},
  author={Leli{\`e}vre, Tony and Parpas, Panos},
  journal={SIAM Journal on Scientific Computing},
  volume={46},
  number={2},
  pages={A770--A797},
  year={2024},
  publisher={SIAM}
}

@Book{MacKay2003,
  title     = {Information theory, inference and learning algorithms},
  publisher = {Cambridge Univ. Pr.},
  year      = {2003},
  author    = {MacKay, David JC},
}

@Book{RobertCasellaCasella1999,
  title     = {{Monte Carlo} statistical methods},
  publisher = {Springer},
  year      = {1999},
  author    = {Robert, Christian P and Casella, George and Casella, George},
  volume    = {2},
}

@Article{HukushimaNemoto1996,
  author    = {Hukushima, Koji and Nemoto, Koji},
  title     = {Exchange {Monte Carlo} method and application to spin glass simulations},
  journal   = {Journal of the Physical Society of Japan},
  year      = {1996},
  volume    = {65},
  number    = {6},
  pages     = {1604--1608},
  publisher = {The Physical Society of Japan},
}

@Article{DongLinTong2022,
  author  = {Dong, Yulong and Lin, Lin and Tong, Yu},
  title   = {Ground-State Preparation and Energy Estimation on Early Fault-Tolerant Quantum Computers via Quantum Eigenvalue Transformation of Unitary Matrices},
  journal = {PRX Quantum},
  year    = {2022},
  volume  = {3},
  pages   = {040305},
}

@Article{RallWangWocjan2023,
  author  = {Rall, Patrick and Wang, Chunhao and Wocjan, Pawel},
  title   = {Thermal state preparation via rounding promises},
  journal = {Quantum},
  year    = {2023},
  volume  = {7},
  pages   = {1132},
}

@Article{BardetCapelGaoEtAl2023,
  author    = {Bardet, Ivan and Capel, {\'A}ngela and Gao, Li and Lucia, Angelo and P{\'e}rez-Garc{\'\i}a, David and Rouz{\'e}, Cambyse},
  title     = {Rapid thermalization of spin chain commuting {Hamiltonians}},
  journal   = {Phys. Rev. Lett.},
  year      = {2023},
  volume    = {130},
  number    = {6},
  pages     = {060401},
  publisher = {APS},
}

@article{rouz2024,
      title={Efficient thermalization and universal quantum computing with quantum {G}ibbs samplers}, 
      author={Cambyse Rouz\'e and Daniel Stilck Franca and \'Alvaro M. Alhambra},
      year={2024},
      eprint={2403.12691},
      journal={arXiv preprint arXiv:2403.12691},
}

@article{kochanowski2024rapid,
  title={Rapid thermalization of dissipative many-body dynamics of commuting {H}amiltonians},
  author={Kochanowski, Jan and Alhambra, Alvaro M and Capel, Angela and Rouz{\'e}, Cambyse},
  journal={arXiv preprint arXiv:2404.16780},
  year={2024}
}

@article{rouze2024optimal,
  title={Optimal quantum algorithm for {Gibbs} state preparation},
  author={Rouz{\'e}, Cambyse and Fran{\c{c}}a, Daniel Stilck and Alhambra, {\'A}lvaro M},
  journal={arXiv preprint arXiv:2411.04885},
  year={2024}
}

@article{tong2024fast,
  title={Fast mixing of weakly interacting fermionic systems at any temperature},
  author={Tong, Yu and Zhan, Yongtao},
  journal={arXiv preprint arXiv:2501.00443},
  year={2024}
}

@article{lu2013infinite,
  title={Infinite swapping replica exchange molecular dynamics leads to a simple simulation patch using mixture potentials},
  author={Lu, Jianfeng and Vanden-Eijnden, Eric},
  journal={The Journal of chemical physics},
  volume={138},
  number={8},
  year={2013},
  publisher={AIP Publishing}
}

@article{lu2019methodological,
  title={Methodological and computational aspects of parallel tempering methods in the infinite swapping limit},
  author={Lu, Jianfeng and Vanden-Eijnden, Eric},
  journal={Journal of Statistical Physics},
  volume={174},
  pages={715--733},
  year={2019},
  publisher={Springer}
}

@Book{FrenkelSmit2002,
  Title                    = {{Understanding Molecular Simulation: From Algorithms to Applications}},
  Author                   = {Frenkel, D. and Smit, B.},
  Publisher                = {Academic Press},
  Year                     = {2002}
}

@inproceedings{aharonov2003adiabatic,
  title={Adiabatic quantum state generation and statistical zero knowledge},
  author={Aharonov, Dorit and Ta-Shma, Amnon},
  booktitle={Proceedings of the thirty-fifth annual ACM symposium on Theory of computing},
  pages={20--29},
  year={2003}
}

@article{temme2025quantized,
  title={Quantized Markov chain couplings that prepare Qsamples},
  author={Temme, Kristan and Wocjan, Pawel},
  journal={arXiv preprint arXiv:2504.02651},
  year={2025}
}

\newpage
\appendix
\begin{center}
    {\huge Appendices}
\end{center}
\tableofcontents
\setcounter{figure}{0}
\renewcommand{\thefigure}{S\arabic{figure}}

\section{Mathematical Preliminaries on Markov Processes}\label{append:math-prelim}

\paragraph{Notation.}
We consider two types of inner products. 
The first is the standard $L^2$ inner product $\langle f, g \rangle \coloneqq \int_{\mathbb{R}^d} f g \d x$, where we omit the argument $x$ for simplicity. The second is the $\sigma$-weighted inner product, defined as $\langle f, g \rangle_\sigma \coloneqq \int_{\mathbb{R}^d} \left(f g\right) \sigma\d x$. We also define $\norm{f}^2_{L^2}=\langle f, f \rangle$ and $\norm{f}^2_\sigma=\langle f, f \rangle_\sigma$.
For any operator $\mc{L}$ acting on a dense subset of $L^2(\RR^d)$, let $\mathcal{L}^\dagger$ be the adjoint with respect to the standard $L^2$ inner product, i.e., $\langle f, \mc{L}^{\dag}g \rangle=\langle \mc{L} f, g \rangle$ for all proper functions $f, g$.

\vspace{4mm}
In this work, for simplicity, we only consider absolutely continuous probability measures that admit a density, and \emph{ergodic} dynamics with a unique fixed point $\sigma$. Starting from certain initial distribution $\rho(0)$ from a properly chosen set $\mathcal{S}$, the \emph{mixing time} is defined as: 
\begin{equation}
    t^{[\cdot]}_{\rm mix}(\epsilon) = \inf_{t\ge 0} \sup_{\rho(0)\in \mc{S}}\left\{ [\cdot] \left( \rho(t), \sigma \right) \leq \epsilon \right\}.
\end{equation}
Here, $[\cdot]$ measures the discrepancy between the densities $\rho$ and $\sigma$. 
A standard choice is the total variation (TV) distance~\cite{Leo_1992,billing_1999}:
\begin{equation}\label{eqn:tv-distance-defn}
    \mathrm{TV}(\rho, \sigma) = \frac{1}{2}\int_{\RR^d} |\rho(x) - \sigma(x)|~\d x.
\end{equation}
Besides the TV distance, $[\cdot]$ can be taken to be the $\chi^2$-divergence (see below), Wasserstein-$2$ ($W_2$) distance~\cite{vaserstein1969markov,Kantorovich_1960}, among others~\cite{Morimoto_1963,renyi1961measures,Ali_1966,Csiszar_1975}.

Let $\mathcal{L}$ be the infinitesimal generator of a continuous-time process. We say the process satisfies the detailed balance condition~\cite{vanKampen1992stochastic,ohagan2004kendall} if $\mathcal{L}^\dagger$ is self-adjoint with respect to the $\sigma$-weighted inner product, i.e.,
\begin{equation}\label{eqn:detailed_balance}
\langle f, \mathcal{L}^\dagger g \rangle_\sigma = \langle \mathcal{L}^\dagger f, g \rangle_\sigma.
\end{equation} 
For instance, for the Fokker--Planck equation \cref{eqn:FKPK}, we have $\mc{L}^{\dag}=-\nabla V\cdot \nabla+ \beta^{-1} \Delta$. Direct calculation shows that $\mc{L}^{\dag}$ satisfies the detailed balance condition.

When the detailed balance condition is satisfied, all the eigenvalues of $\mathcal{L}^\dagger$ are real and furthermore non-positive. Since $\mc{L}^{\dag}(1)=0$ always holds, the process is ergodic if and only if $\ker(\mathcal{L}^\dagger)$ is one-dimensional~\cite{villani2009hypocoercivity}. For an ergodic dynamics, all other eigenvectors of $\mc{L}^{\dag}$ are orthogonal to $1$ with respect to the $\sigma$-weighted inner product. Thus the \emph{spectral gap} of $\mc{L}^{\dag}$ has the following variational characterization 
\begin{equation}\label{eq:abstract-L-gap}
\mathrm{Gap}\left(\mathcal{L}^\dagger\right):=\inf_{f\notin \ker(\mathcal{L}^\dagger)}\frac{\langle f, -\mathcal{L}^\dagger f \rangle_\sigma}{\|f-\int f \sigma \d x\|^2_{\sigma}}.
\end{equation} 

Define the variance of a function $f$ as $\mathrm{Var}_\sigma(f):=\|f-\int f \sigma \d x\|^2_{\sigma}$. Let $f(x,t)=\rho(x,t)/\sigma(x)$. Then 
the discrepancy between $\rho(t)$ and $\sigma$ can be measured by the variance $\mathrm{Var}_\sigma(f)=\|f-1 \|^2_{\sigma}=\chi^2(\rho(t),\sigma)$, where  $\chi^2(\rho,\sigma)=\|\rho/\sigma-1\|^2_{\sigma}$ is called the $\chi^2$-divergence and we have used $\int f \sigma \d x=1$. Notice that 
\begin{equation}\label{eqn:var_reduction}
    \partial_t \mathrm{Var}_\sigma(f)=2\langle f-1, \partial_t f\rangle_\sigma=2\langle f-1, \mathcal{L}(f\sigma)\rangle=-2\langle- \mc{L}^\dagger(f), f\rangle_\sigma\le -2 \mathrm{Gap}\left(\mathcal{L}^\dagger\right)  \mathrm{Var}_\sigma(f).
\end{equation}
If the spectral gap is positive, then $\rho(t)$ converges exponentially to $\sigma$ in $\chi^2$-divergence:
\begin{equation}\label{eqn:chi_square_convergence}
\chi^2(\rho(t),\sigma)\leq \exp\left(-2  \mathrm{Gap}\left(\mathcal{L}^\dagger\right)  t\right)\chi^2(\rho(0),\sigma).
\end{equation}
The last inequality in \cref{eqn:var_reduction} is also called the \emph{Poincar\'e inequality} with a \emph{Poincar\'e constant} $C_{\rm PI}=1/\mathrm{Gap}\left(\mathcal{L}^\dagger\right)$~\cite{villani2009optimal,ledoux2001concentration,bakry2014analysis}. 

The Poincar\'e inequality immediately leads to the exponential convergence of $\rho(t)$ to the stationary measure $\sigma$ in $\chi^2$-divergence~\cite[Theorem 1.2.21]{chewi2023log}:
\begin{equation}
    \chi^2(\rho(t),\sigma) \le e^{-\frac{2}{C_{\rm PI}}t}\chi^2(\rho(0),\sigma).
\end{equation}

Therefore, a large spectral gap (i.e., a small Poincar\'e constant) together with a mild initial $\chi^2$-divergence $\chi^2(\rho(0),\sigma)$ implies a fast convergence (i.e., mixing of the process). On the other hand, a small spectral gap is a strong indicator that the mixing process can be slow.

\section{Related works}\label{sec:Re}
In this section, we briefly review previous works on continuous sampling in both quantum and classical literature. For simplicity, we focus on a target distribution $\sigma \propto e^{-V}$, i.e., we set $\beta=1$. A comparison of different results is drawn in~\cref{table:comparison}.

Solving continuous sampling problems on a quantum computer is a relatively recent development. The current state-of-the-art approaches are based on quantization of classical sampling algorithms using quantum walks \cite{childs2022quantum,ozgul2024stochastic,ozgul2025quantumspeedupsmarkovchain}. Ref. \cite{childs2022quantum} proposes several quantum algorithms based on classical sampling algorithms such as ULA and MALA, and achieves quadratic speedup for logconcave distributions. Specifically, assuming the potential $V$ is $\gamma$-strongly convex and a warm start initial state, the complexity of the quantum MALA algorithm is $\mathcal{O}(d^{1/4}/\gamma^{1/2})$~\cite{childs2022quantum}. The quantum algorithm developed in Ref.~\cite{ozgul2024stochastic} (also termed quantum MALA)  is applicable to a broader class of non-logconcave distributions. Given a non-logconcave distribution with density $\sigma$, the Cheeger constant~\cite{cheeger1969lower,levin2017markov} is defined as
\begin{equation}\label{eqn:cheeger}
C_{\rm CG}=\inf_{A\in\mathbb{R}^d}\frac{\liminf_{h\rightarrow0^+}\frac{1}{h}\int_{A_h\setminus A}\sigma\mathrm{d} x}{\min\left\{\int_{A}\sigma\mathrm{d} x,\int_{A^c}\sigma\mathrm{d}x\right\}}\,,
\end{equation}
where $A_h=\left\{x \colon \exists y\in A,\ \|x-y\|\leq h\right\}$. 
Starting from a warm start initial state, the complexity is $\mathcal{O}(d^{1/2}C_{\rm CG})$. While the Cheeger constant can be lower bounded by the Poincar\'e constant following Cheeger's inequality $C_{\mathrm{PI}} \leq 4C^2_{\mathrm{CG}}$~\cite[Theorem 13.10]{levin2017markov}, a Busher type inequality states that when $V(x)$ is smooth and $\nabla V$ is $L$-Lipschitz, $C_{\rm PI}$ also produces an upper bound for $C_{\rm CG}$ as $\min\left\{C_{\mathrm{CG}}/(6\sqrt{L}),\,C_{\mathrm{CG}}^2/36\right\}\leq C_{\mathrm{PI}} \leq 4C_{\mathrm{CG}}^2$ (\cite[Theorem 5.2]{ledoux2004spectral}). In particular, for a class of highly nonconvex potentials (such as double well potentials), both $C_{\mathrm{CG}}$ and $C_{\mathrm{PI}}$ can be large and  $C_{\mathrm{PI}}=\wt{\Theta}(C_{\mathrm{CG}})$. In this case, our complexity achieves a quadratic improvement in terms of the Poincar\'e constant compared to the results presented in~\cite{ozgul2024stochastic}. Ref.~\cite{ozgul2025quantumspeedupsmarkovchain} mainly focus on improving the low-accuracy quantum sampler and achieves the quantum speed up when applied to optimization.    

In Refs.~\cite{childs2022quantum,ozgul2024stochastic}, the warm-started initial state can be obtained using annealing techniques~\cite[Section B.3]{childs2022quantum},~\cite[Section 3]{ozgul2024stochastic}. Specifically, when the potential $V$ is $\gamma$-strongly convex, the complexity of Quantum Annealed MALA is $\mathcal{O}(d/\gamma^{1/2})$~\cite[Theorem C.7]{childs2022quantum}. When the target distribution satisfies the Log-Sobolev inequality with constant $C_{\rm LSI}$~\footnote{We note that the Log-Sobolev is stronger than the Poincar{\'e} inequality; see the remarks under \cref{table:comparison}. For logconcave distributions, these inequalities are nearly equivalent~\cite{milman2009role}.}, the complexity becomes $\mathcal{O}(d C_{\rm LSI} C_{\rm CG})$~\cite[Theorem 6]{ozgul2024stochastic}. The annealing techniques can also be used in our method to achieve similar complexity when the target distribution is logconcave or satisfies the Log-Sobolev inequality.

For comparison, continuous Gibbs samplers on classical computers can be broadly classified into two categories: low-accuracy samplers, whose complexity scales inversely polynomially with the precision $\epsilon$, and high-accuracy samplers, whose complexity scales as $\mathrm{polylog}(1/\epsilon)$. Low-accuracy samplers typically arise from discretizations of stochastic processes whose stationary distributions converge to the target distribution $\sigma$, such as overdamped or underdamped Langevin dynamics~\cite{JKO98,Wibisono2018,MA_2021}. In our work, we focus specifically on high-accuracy samplers; therefore, we omit a detailed review of complexity results for low-accuracy samplers. Interested readers are referred to~\cite{Shen_2019,Cao_2021,ChewiErdogduLiEtAl2024,chewi2023log} for a detailed review.
High-accuracy samplers, such as MALA and MHMC, often incorporate a Metropolis–Hastings correction step~\cite{Bes_95,RR98}, which ensures the algorithm is unbiased. This correction step enables the use of longer time steps in the algorithm, which reduces the complexity of the algorithm with respect to the precision parameter~\cite{Dwivedi_2018,Chen_2020,Chewi2020OptimalDD,Lee_2020,Wu_2022,chen2023,Chewi_2024}. Most of the analysis of these high-accuracy samplers requires a warm start assumption and/or strongly log-concavity. Assuming warm start assumption and $V$ is $\gamma$-strongly convex, MALA achieves $\epsilon$ accuracy in TV, $\sqrt{\mathrm{KL}}$, $\sqrt{\chi^2}$, $\sqrt{\gamma}W_2$ distance with complexity scaling as $\widetilde{\Or}\left(d^{1/2}/\gamma\right)$, which is proved to be optimal for MALA~\cite{Wu_2022}. Relaxing the $\gamma$-strongly convex condition to isoperimetric bounds in the case of MALA become more complicated, and the question of deriving analogous results for the Poincar\'e constant remains open, to the best of our knowledge. For general isoperimetric bounds, instead of considering MALA,~\cite[Chapter 8.6]{chewi2023log} prove that using the proximal sampler~\cite{Titsias_2018,Lee_2021,Chen_2022}, it is possible to achieve $\epsilon$-precision in various distance with complexity scaling as $\widetilde{\Or}\left(d^{1/2}C_{\rm ISB}\right)$, where $C_{\rm ISB}$ is the Log-Sobolev or Poincar\'e constant. \cite[Lemma 6.5]{osti_10276674} analyzes the complexity of projected MALA, where sampling is constrained to a bounded domain with radius independent of $d$. 

In terms of different assumptions, we note that if a function $V$ is $\gamma$-strongly convex, then $\exp(-V)$ has $1/\gamma$-Log-Sobolev constant. In addition, Log-Sobolev inequality and Cheeger's inequality are stronger than Poincar{\'e} inequality. Specifically, if a distribution satisfies Log-Sobolev inequality with parameter $C_{\rm LSI}$ then this distribution also satisfies Poincar\'e inequality with the same constant $C_{\rm LSI}$. Furthermore, if a distribution satisfies Cheeger's inequality with parameter $C_{\rm CG}$ then this distribution also satisfies Poincar\'e inequality with the same constant $4C_{\rm CG}^2$~\cite{cheeger1969lower}. For the classical samplers, we note that by combining a low-accuracy sampler, based on underdamped Langevin Monte Carlo and a proximal sampler with MALA, and assuming access to a warm stationary point of $V$, the MALA algorithm in~\cite[Theorem 5.4]{Chewi_2024} achieves the same scaling with the $C_{\rm PI}$-Poincar{\'e} constant, without requiring a warm start for initialization. Related developments for a particular class of non-convex potentials called Gaussian mixture models can be found in~\cite{PB_2020,DongTong2022}.

\section{Ground State preparation with $\mathcal{H} = \sum_j L^\dagger_j L_j$}\label{append:meta-algorithm-qsvt}

In this section, we introduce a quantum algorithm that prepares the ground state of a quantum Hamiltonian of the form $\mathcal{H} = \sum_j L^\dagger_j L_j$. In our algorithm, this operator $\mathcal{H}$ represents the (spatially discretized) generalized Witten Laplacian.

\subsection{Problem formulation}
Suppose that we have an $N$-by-$N$ operator of the form
\begin{align}\label{eqn:A-def}
    \mathcal{H} = \sum^p_{j=1} L^\dagger_j L_j,
\end{align}
where each $L_j\colon \CC^N \to \CC^N$ is a complex-valued matrix. Since $\mathcal{H}$ is Hermitian and non-negative, by the spectral theorem, all eigenvalues of $\mathcal{H}$ are non-negative real numbers.
We denote the eigenvalues of $\mathcal{H}$ as $0 \le \lambda_1 < \lambda_2 \le \dots \le \lambda_N$.
Moreover, we assume that there is a positive $D > 0$ such that
\begin{align}\label{eqn:H-abstract-spectra}
    0 \le \lambda_1 < \frac{D}{16},\quad \lambda_k \ge \frac{9D}{16} \quad \forall k=1,2,\dots
\end{align}
In other words, the spectral gap (i.e., the difference between the first two eigenvalues) of $\mathcal{H}$ is of the order $\Theta(D)$.
The eigenvector of $\mathcal{H}$ associated with the smallest eigenvalue $\lambda_1$ is referred to as the \textit{ground state} of $\mathcal{H}$, denoted by $\ket{g}$. We can write $\mathcal{H} = \mathbb{L}^\dagger\mathbb{L}$, where the block matrix
\begin{equation}\label{eqn:L-abstract}
    \mathbb{L} \coloneqq [L^\top_1, L^\top_2,\dots, L^\top_d]^\top.
\end{equation}

\begin{lem}\label{lem:eigen-singular}
    Suppose the matrix $\mathbb{L}$ has singular values $\{\sigma_k\}^N_{k=1}$, arranged in an ascending order.
    Then, we have
    \begin{align}\label{eqn:singular-value-bound}
        0 \le \sigma_1 \le \frac{\sqrt{D}}{4},\quad \sigma_k \ge \frac{3\sqrt{D}}{4}\quad \forall k = 2,\dots,N.
    \end{align}
    Moreover, the right singular vector of $\mathbb{L}$ associated with the  $\sigma_1$ is $\ket{g}$.
\end{lem}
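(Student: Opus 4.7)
The plan is to directly exploit the algebraic identity $\mathcal{H}=\mathbb{L}^\dagger\mathbb{L}$, which immediately links the singular values and right singular vectors of $\mathbb{L}$ to the eigenvalues and eigenvectors of $\mathcal{H}$. First I would take a (thin or full) singular value decomposition $\mathbb{L}=W\Sigma V^\dagger$, where $V\in\CC^{N\times N}$ is unitary, $\Sigma\in\RR^{pN\times N}$ has the singular values $\sigma_1\le\sigma_2\le\cdots\le\sigma_N$ on its diagonal and zeros elsewhere, and $W\in\CC^{pN\times N}$ has orthonormal columns. Substituting gives $\mathcal{H}=\mathbb{L}^\dagger\mathbb{L}=V\Sigma^\top W^\dagger W\Sigma V^\dagger=V\Sigma^\top\Sigma V^\dagger=V\,\mathrm{diag}(\sigma_1^2,\dots,\sigma_N^2)\,V^\dagger$.

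Since $V$ is unitary, this is an eigendecomposition of $\mathcal{H}$ with eigenvalues $\sigma_k^2$ (in ascending order, matching $\lambda_1\le\lambda_2\le\cdots\le\lambda_N$) and corresponding eigenvectors given by the columns of $V$, which are exactly the right singular vectors of $\mathbb{L}$. In particular the uniqueness of the bottom eigenvector $\ket{g}$ (guaranteed by the strict inequality $\lambda_1<\lambda_2$ coming from \cref{eqn:H-abstract-spectra}) identifies it with the right singular vector associated with $\sigma_1$, proving the final claim of the lemma.

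Next I would read off the singular value bounds by taking square roots in the identity $\sigma_k=\sqrt{\lambda_k}$. The hypothesis $\lambda_1<D/16$ yields $\sigma_1<\sqrt{D}/4$, and $\lambda_k\ge 9D/16$ for $k\ge 2$ yields $\sigma_k\ge 3\sqrt{D}/4$, matching \cref{eqn:singular-value-bound}.

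There is no real obstacle here: the result is essentially the standard correspondence between the SVD of $\mathbb{L}$ and the spectral decomposition of $\mathbb{L}^\dagger\mathbb{L}$, combined with monotonicity of $t\mapsto\sqrt{t}$ on $[0,\infty)$. The only mild point to be careful about is that $\mathbb{L}$ is a tall $pN\times N$ block matrix, so one should use the ``economy'' SVD (or equivalently note that $\Sigma^\top\Sigma$ is the $N\times N$ diagonal matrix of squared singular values) to make the identification with the eigendecomposition of $\mathcal{H}$ clean; this is a purely notational check rather than a substantive difficulty.
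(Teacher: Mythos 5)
Your proof is correct and follows essentially the same route as the paper: identify $\mathcal{H}=\mathbb{L}^\dagger\mathbb{L}=V\Sigma^\top\Sigma V^\dagger$ as an eigendecomposition via the SVD of $\mathbb{L}$, so that $\sigma_k=\sqrt{\lambda_k}$ and the bottom right singular vector is the ground state $\ket{g}$, then read off the bounds from \cref{eqn:H-abstract-spectra}. The only (welcome) extra care you take is writing $\Sigma^\top\Sigma$ rather than $\Sigma^2$ to correctly handle the tall rectangular shape of $\mathbb{L}$, which the paper glosses over.
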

\begin{proof}
    Suppose that the matrix $\mathbb{L}$ has a singular value decomposition (SVD) as $\mathbb{L} = U \Sigma V^\dagger$, then we have $\mathcal{H} = V \Sigma^2 V^\dagger$. The ground state of $\mathcal{H}$ corresponds to the first column in $V$, which is the singular vector of $\mathbb{L}$ associated with the smallest singular value.
    Also, there is an one-to-one correspondence between the singular values of $\mathbb{L}$ and the eigenvalues of $\mathcal{H}$: $\sigma_k = \sqrt{\lambda_k}$. Therefore,~\cref{eqn:singular-value-bound} is a direct consequence of~\cref{eqn:H-abstract-spectra}.
\end{proof}

In quantum numerical linear algebra, the block-encoded matrix is a standard input model that enables several powerful quantum algorithms, including Quantum Singular Value Transformation (QSVT). We now define the block-encoding of a rectangular matrix $A \in \CC^{2^n\times 2^p}$. We assume $n\ge p$ without loss of generality.

\begin{defn}[Block-encoding of a rectangular matrix]\label{defn:be-general}
    Given a matrix $A\in \CC^{2^n\times 2^p}$ with $n\ge p$, if we can find $\alpha, \epsilon > 0$, and a unitary matrix $U_A\in \CC^{2^{n+m}\times 2^{n+m}}$ such that
    \begin{equation}
        \|A - \alpha \left(\bra{0^m}\otimes I_{2^n}\right) U_A \left(\ket{0^m}\otimes I_{2^p}\right)\| \le \epsilon,
    \end{equation}
    then $U_A$ is called an $(\alpha, m, \epsilon)$-block-encoding of $A$. 
\end{defn}

Intuitively, the unitary operator $U_A$ encodes the matrix $A/\alpha$ in its upper left corner, up to an additive error $\epsilon$. 
The parameter $\alpha$ is referred to as the \textit{normalization factor}, which ensures the block-encoded matrix $A/\alpha$ has an operator norm no greater than $1$.
When the precision of the block encoding can be easily controlled, for simplicity we may set $\epsilon=0$. In this case, we have $A = \alpha \left(\bra{0^m}\otimes I_{2^n}\right) U_A \left(\ket{0^m}\otimes I_{2^p}\right)$, and $U_A$ is called an $(\alpha, m)$-block-encoding of $A$. We will extensively use the block encoding of $\mathbb{L}$.

The goal is to prepare a ground state of $\mathcal{H}$ using a quantum computer. Thanks to the factorization $\mathcal{H} = \mathbb{L}^\dagger\mathbb{L}$, the problem is equivalent to preparing the right singular vector of $\mathbb{L}$ associated with the smallest singular value $\sigma_1$. This can be achieved by applying a singular value thresholding (SVT) algorithm to an initial state $\ket{\phi}$ to filter out contributions in $\ket{\phi}$ corresponding to higher singular values of $\mathbb{L}$. When the initial state has an $\Omega(1)$ overlap with the target state $\ket{g}$, we will end up with the desired ground state with a constant success probability. We present the complexity analysis of the quantum algorithm in Appendix~\ref{append:proof-main-qsvt}. The singular value thresholding algorithm is implemented by QSVT, which is discussed in the next subsection.

\subsection{Singular value thresholding via QSVT}

In this section, we provide a brief introduction to the Quantum Singular Value Transformation (QSVT) algorithm\cite{gilyen2019quantum} for completeness.

For a matrix $A \in \CC^{2^n\times 2^p} (n\ge p)$, we consider its singular value decomposition:
\begin{equation}
    A = W \Sigma V^\dagger.
\end{equation}
The columns of $W, V$ are called the left and right singular vectors of $A$, respectively. $\Sigma$ is a $2^n\times 2^p$ matrix with the main diagonal given by the singular values $\{\sigma_1, \cdots, \sigma_{2^p}\}$. When $A$ is given as a block encoding, we must have $\|A\| \le 1$ and the singular values of $A$ are in the interval $[0,1]$.

Let $f\colon \RR \to \CC$ be a scaler function such that $f(\sigma_j)$ is well-defined for all singular values $\sigma_j$, we can define a \textit{right} generalized matrix function:
\begin{equation}
    f^{\triangleright}(A) = V f(\Sigma) V^\dagger,
\end{equation}
where $f(\Sigma) = \diag(f(\sigma_1),\dots, f(\sigma_N))$. Similarly, we can define a \textit{left} matrix function and a \textit{balanced} matrix function induced by $f$. They will not appear in this work. 

When the function $f$ is specified as an even polynomial of degree $d$, the matrix function $ f^{\triangleright}(A)$ can be implemented on a quantum computer via QSVT. 
For a square matrix $A$, QSVT can be implemented following \cite[Corollary 11]{gilyen2019quantum}. This result can be generalized to rectangular matrices (see e.g.,~\cite{tang2024cs}). The following theorem is adapted from~\cite[Theorem 2.3]{tang2024cs}.

\begin{thm}[QSVT with even polynomials]\label{thm:qsvt}
    Let $A \in \CC^{2^n\times 2^p}$ be encoded by its $(\alpha,m)$-block-encoding $U_A$. For an even polynomial $F(x) \in \RR[x]$ with degree $d$ and $|F(x)|\le 1$ for any $x \in [-1,1]$, we can implement a $(1,m+1)$-block-encoding of $F^{\triangleright}(A/\alpha)$ using $U_A$, $U^\dagger_A$, $m$-qubit controlled NOT, and single-qubit rotation gates for $\mathcal{O}(d)$ times. 
\end{thm}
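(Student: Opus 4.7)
The plan is to reduce the statement to the standard qubitization framework of Gilyén--Su--Low--Wiebe: decompose the block-encoding unitary $U_A$ into two-dimensional invariant subspaces indexed by the singular triples of $A$, apply quantum signal processing (QSP) on each such block to implement the scalar polynomial $F$, and then reassemble the result across blocks.

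First I would introduce the two projectors $\Pi = \ketbra{0^m}{0^m}\otimes I_{2^p}$ and $\tilde\Pi = \ketbra{0^m}{0^m}\otimes I_{2^n}$, so that by \cref{defn:be-general} the corner $\tilde\Pi U_A \Pi$ block-encodes $A/\alpha$. Applying the cosine--sine decomposition to the partial isometry $\tilde\Pi U_A \Pi$ yields an orthogonal direct-sum decomposition of the ambient Hilbert space in which $U_A$ is simultaneously block-diagonalized; on the generic two-dimensional invariant subspace associated with a right singular vector $v_k$ of $A$ (together with its companion generated by $U_A$), the unitary $U_A$ acts as a Givens rotation by angle $\theta_k = \arccos(\sigma_k/\alpha)$, while the two reflections $R_\Pi = 2\Pi - I$ and $R_{\tilde\Pi} = 2\tilde\Pi - I$ act as Pauli $Z$. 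Interleaving $U_A$ and $U_A^\dagger$ with phase-modulated reflections $e^{\mathrm{i}\phi_j R_\Pi}$ and $e^{\mathrm{i}\phi_j R_{\tilde\Pi}}$ therefore reduces, on each block, to the single-qubit QSP iterate of Low--Chuang, which for a sequence of length $d$ with suitably chosen real phases $\{\phi_j\}_{j=1}^d$ implements any real polynomial of degree $d$ with matching parity and modulus at most $1$ on $[-1,1]$.

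Because $F$ is even, the resulting alternating sequence maps each right-singular block back onto itself, and aggregating the action across all blocks shows the effective transformation on the right singular register is exactly $F^{\triangleright}(A/\alpha) = V F(\Sigma/\alpha) V^\dagger$. The extra ancilla in the $(1,m+1)$-block encoding is used to combine two QSP phase sequences via a Hadamard test / LCU-style linear combination, so that the output realizes the real polynomial $F$ rather than a generic complex signal-processed function; the existence of the required real phase sequence for any even polynomial bounded by $1$ is a purely classical constructive lemma from QSP theory. The overall circuit therefore consists of $d$ uses of $U_A$ and $U_A^\dagger$, together with $\mathcal{O}(d)$ reflections about $\ket{0^m}$ (each implementable by one $m$-qubit multi-controlled NOT) and $\mathcal{O}(d)$ single-qubit rotations, matching the claimed resource count.

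The main technical obstacle I anticipate is handling the degenerate one-dimensional invariant subspaces that appear in the CS decomposition when $n \neq p$ or when $A/\alpha$ possesses singular values exactly equal to $0$ or $1$: on these subspaces the two reflections coincide (or anticommute trivially) and the single-qubit QSP recursion collapses, so one must verify by a direct computation that the alternating sequence still outputs the correct scalars $F(0)$ and $F(1)$, in a manner consistent with the evenness of $F$ and the parity of the chosen phase sequence. Once this boundary case analysis is dispatched, the result follows by direct synthesis of the per-block QSP transformation and linearity of the decomposition.
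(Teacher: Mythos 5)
Your argument---cosine--sine decomposition of the block-encoding unitary into two-dimensional invariant subspaces, per-block reduction to single-qubit quantum signal processing, and a one-ancilla LCU of two complementary phase sequences to realize a real even polynomial---is the standard qubitization proof of this fact and is correct; the caveat you flag about degenerate one-dimensional invariant subspaces (when $n\neq p$ or when singular values hit $0$ or $\alpha$) is indeed the right boundary case to check and is dispatched exactly as you anticipate. The paper itself does not prove this theorem: it cites it directly as adapted from \cite[Theorem 2.3]{tang2024cs} (the rectangular-matrix generalization of \cite[Corollaries 11 and 18]{gilyen2019quantum}), and the proof given there is precisely the one you reconstruct. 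The one imprecision worth noting is the clause claiming that a single length-$d$ QSP phase sequence already implements any bounded real polynomial of matching parity; a single sequence only produces a complex QSP polynomial $P$ with $\Re P = F$ (and an unavoidable imaginary part), and the realness of the output is precisely what the LCU over two conjugate phase sequences buys, as you correctly state in the next sentence of your own proof.
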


To implement the singular value thresholding algorithm for $A$, we want to filter out all singular values that are greater than or equal to $\sigma_2$. Suppose that $0 \le \sigma_1 \le s_1 < s_2 \le \sigma_2$, and we consider the following rectangular filter function
\begin{align}\label{eqn:rectangular-function}
    f(x) = \begin{cases}
        1, & x \in [-s_1,s_1],\\
        0, & x \in [-1, -s_2]\cup [s_2, 1].
    \end{cases}
\end{align}
Rectangle functions are commonly used in QSVT, and they can be efficiently approximated by even polynomials with an additive error.

\begin{lem}[{\cite[Corollary 16]{gilyen2019quantum}}]\label{lem:rectangle-polynomial}
    Let $\delta, \epsilon \in (0, 1/2)$ and $t \in [-1,1]$. There exist an even polynomial $P(x) \in \RR[x]$ of degree $\mathcal{O}(\delta^{-1}\log(\epsilon^{-1}))$, such that $|P(x)| \le 1$ for all $x \in [-1,1]$ and 
    \begin{align*}
        \begin{cases}
            P(x) \in [0, \epsilon] & x \in [-1, -t-\delta] \cup [t + \delta, 1],\\
            P(x) \in [1 - \epsilon, 1] & x \in [-t + \delta, t - \delta].
        \end{cases}
    \end{align*}
\end{lem}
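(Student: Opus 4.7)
The plan is to reduce this rectangle approximation to the classical polynomial approximation of $\mathrm{sign}$, exploiting the elementary identity
\[
\mathbf{1}_{(-t,t)}(x) \;=\; \tfrac{1}{2}\bigl(\mathrm{sign}(x+t) + \mathrm{sign}(t-x)\bigr),
\]
valid for all $x \ne \pm t$. Because $\mathrm{sign}$ is odd, the right-hand side is visibly invariant under $x \mapsto -x$. Hence if $S$ is any odd polynomial approximating $\mathrm{sign}$, the symmetrized polynomial $P(x) := \tfrac{1}{2}\bigl(S(x+t) + S(t-x)\bigr)$ is automatically even, has the same degree as $S$, and inherits its approximation quality on the regions where both $x+t$ and $t-x$ lie outside a small ``gap'' around $0$.

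First I would invoke the standard sign-function approximation: for any $\delta, \epsilon' \in (0, 1/2)$ there exists an odd real polynomial $S$ of degree $\mathcal{O}(\delta^{-1}\log(1/\epsilon'))$ with $|S(y)| \le 1$ for every $y \in [-1,1]$ and $|S(y) - \mathrm{sign}(y)| \le \epsilon'$ for every $|y| \in [\delta, 1]$. To accommodate arguments $x+t, t-x \in [-2,2]$, I would work with the rescaled polynomial $\widetilde S(y) := S(y/2)$ taken with parameters $(\delta/2, \epsilon/2)$; this preserves the degree $\mathcal{O}(\delta^{-1}\log(1/\epsilon))$ and the bound $|\widetilde S|\le 1$ on $[-2,2]$. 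Setting $P(x) := \tfrac{1}{2}(\widetilde S(x+t) + \widetilde S(t-x))$, the uniform bound $|P| \le 1$ on $[-1,1]$ follows from the triangle inequality. For $x \in [-t+\delta, t-\delta]$ both $x+t \ge \delta$ and $t-x \ge \delta$, so each term lies in $[1-\epsilon/2, 1]$, giving $P(x) \in [1-\epsilon, 1]$. For $x \in [t+\delta, 1]$, $x+t \ge \delta$ forces $\widetilde S(x+t) \approx +1$ while $t-x \le -\delta$ forces $\widetilde S(t-x) \approx -1$, so the two terms cancel to within $\epsilon$ of zero; the symmetric region $x \in [-1, -t-\delta]$ follows by the evenness of $P$. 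The one-sided requirement $P(x) \ge 0$ in the outer region, not automatic from the cancellation, can be enforced by post-composing $P$ with a fixed low-degree rectifier sending $[-\epsilon, \epsilon]$ into $[0, \epsilon]$, at no asymptotic cost in degree.

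The main technical obstacle is the first step: producing an odd polynomial $S$ that is simultaneously of degree $\mathcal{O}(\delta^{-1}\log(1/\epsilon))$, exponentially close to $\mathrm{sign}$ off the gap, \emph{and} uniformly bounded by $1$ on $[-1,1]$. Naive truncation of the Chebyshev expansion of $\mathrm{erf}(\kappa y)$ with $\kappa = \Theta(\delta^{-1}\log(1/\epsilon))$ delivers the first two properties but can violate the uniform bound near $y = \pm 1$; the standard remedy is a multiplicative damping by a factor $1-\epsilon$, which restores the uniform bound without affecting the asymptotic degree. Once this building block is established (as in Corollary~14 of~\cite{gilyen2019quantum}), the remainder of the proof is the two-line symmetrization above together with the elementary region-by-region case analysis already performed.
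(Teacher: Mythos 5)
Your construction — symmetrizing two shifted sign-function approximants, $P(x) = \tfrac12\bigl(\widetilde S(x+t) + \widetilde S(t-x)\bigr)$, built from the odd $\mathcal{O}(\delta^{-1}\log(1/\epsilon))$-degree sign polynomial of \cite[Corollary 6]{gilyen2019quantum} — is precisely the construction underlying the cited Corollary~16, so your proposal reproduces the paper's intended argument. The only minor imprecisions are that evenness of $P$ already follows from the swap $(x+t,\,t-x)\mapsto(t-x,\,x+t)$ under $x\mapsto -x$ and does not actually need $\widetilde S$ odd, and that the $P\ge 0$ fix is cleaner via the degree-preserving affine renormalization $P\mapsto (P+\epsilon/4)/(1+\epsilon/4)$ (or simply $P\mapsto P^2$ with halved error parameter) than via an unspecified ``rectifier''.
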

This approximation is asymptotically optimal in the parameters $\delta$ and $\epsilon$~\cite{eremenko2006uniform}. In practice, the approximate polynomial can be explicitly constructed via convex optimization based methods (see~\cite[Section IV]{DongLinTong2022}). 
Based on Lemma~\ref{lem:rectangle-polynomial}, we can construct an even polynomial that approximates the filter function $f$, which leads to an efficient quantum implementation of the singular value thresholding algorithm~\cite[Theorem 19]{gilyen2019quantum}. In what follows, we provide a generalized version that applies to rectangular matrices, together with a proof illustrating the algorithmic procedure.

\begin{prop}[Singular value thresholding]\label{prop:sv-filter}
    Let $A \in \CC^{2^n\times 2^p}$ be encoded by its $(\alpha,m)$-block-encoding $U_A$. Let $\sigma_1, \sigma_2$ be the first two singular values of $A$ and $0 \le \sigma_1 \le s_1 < s_2 \le \sigma_2$. We denote $s = s_2 - s_1$.  
    Let $f(x)$ be the rectangular filter function given in~\cref{eqn:rectangular-function}.
    We can implement a $(1,m+1,\epsilon)$-block-encoding of the matrix function $f^{\triangleright}(A)$ using $U_A$, $U^\dagger_A$, $m$-qubit controlled NOT, and single-qubit rotation gates for $\mathcal{O}\left(\alpha s^{-1}\log(\epsilon^{-1})\right)$ times.
\end{prop}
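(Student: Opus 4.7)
The plan is to reduce the statement to a direct combination of Theorem~\ref{thm:qsvt} and Lemma~\ref{lem:rectangle-polynomial}. Since the block-encoding $U_A$ gives access to $A/\alpha$ rather than $A$ itself, the relevant normalized singular values are $\sigma_j/\alpha$, which lie in $[0,1]$. The task therefore becomes constructing a bounded even polynomial $P$ on $[-1,1]$ that approximates the rescaled filter function $\tilde f(x) := f(\alpha x)$, and then applying QSVT.

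First, I would choose the polynomial parameters so that the transition region of the approximating polynomial matches the gap $[s_1,s_2]$ of $f$. Concretely, set $t = (s_1+s_2)/(2\alpha)$ and $\delta = (s_2-s_1)/(2\alpha) = s/(2\alpha)$, so that $t-\delta = s_1/\alpha$ and $t+\delta = s_2/\alpha$. Lemma~\ref{lem:rectangle-polynomial} then furnishes an even polynomial $P$ with $|P| \le 1$ on $[-1,1]$, $P(x) \in [1-\epsilon,1]$ on $[-s_1/\alpha, s_1/\alpha]$, and $P(x) \in [0,\epsilon]$ on $[-1,-s_2/\alpha] \cup [s_2/\alpha,1]$, of degree
\[
\deg(P) = \mathcal{O}\!\left(\delta^{-1}\log(\epsilon^{-1})\right) = \mathcal{O}\!\left(\alpha s^{-1}\log(\epsilon^{-1})\right).
\]

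Next, I would apply Theorem~\ref{thm:qsvt} to $P$ and $U_A$. This yields a $(1,m+1)$-block-encoding of $P^{\triangleright}(A/\alpha)$ using $\mathcal{O}(\deg(P)) = \mathcal{O}(\alpha s^{-1}\log(\epsilon^{-1}))$ queries to $U_A, U_A^\dagger$, and ancilla primitives, matching the claimed cost. It remains to verify that this also serves as a $(1,m+1,\epsilon)$-block-encoding of the \emph{exact} filter $f^{\triangleright}(A)$. Writing $A = W\Sigma V^\dagger$, we have $P^{\triangleright}(A/\alpha) - f^{\triangleright}(A) = V\bigl(P(\Sigma/\alpha) - f(\Sigma)\bigr)V^\dagger$, whose operator norm equals $\max_j |P(\sigma_j/\alpha) - f(\sigma_j)|$. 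The hypothesis $\sigma_1 \le s_1 < s_2 \le \sigma_2$ guarantees that \emph{every} singular value $\sigma_j$ lies outside the open transition interval $(s_1,s_2)$, so each $\sigma_j/\alpha$ falls into one of the two regions where Lemma~\ref{lem:rectangle-polynomial} controls $P$ to within $\epsilon$ of $f$.

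The argument is essentially a template assembly, so I do not expect a substantial technical obstacle; the only point requiring care is the bookkeeping around the rescaling by $\alpha$, in particular verifying that the choice $t \pm \delta = s_{1,2}/\alpha$ is legal (i.e.\ $t+\delta \le 1$, which holds since $s_2 \le \sigma_2 \le \|A\| \le \alpha$) and that the gap hypothesis truly excludes all singular values from the transition band so that the pointwise $\epsilon$-bound lifts to an operator-norm bound on $f^{\triangleright}(A)$.
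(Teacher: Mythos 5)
Your proof is correct and follows essentially the same route as the paper's: identical choice of $t = (s_1+s_2)/(2\alpha)$ and $\delta = s/(2\alpha)$, invocation of Lemma~\ref{lem:rectangle-polynomial}, and application of Theorem~\ref{thm:qsvt}. Your final step of converting the pointwise bound $|P(\sigma_j/\alpha) - f(\sigma_j)|\le\epsilon$ to an operator-norm bound via the SVD, using the fact that the hypothesis $\sigma_1 \le s_1 < s_2 \le \sigma_2$ keeps every singular value out of the transition band, is actually slightly more careful than the paper's closing remark (which loosely asserts a pointwise bound "for all $x\in[-1,1]$" even though $f$ is not defined on the transition interval).
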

\begin{proof}
    Note that $f^{\triangleright}(A)$ is equivalent to $f^{\triangleright}_{1/\alpha}(A/\alpha)$, where $f_{1/\alpha}(x)$ is a filter function given in~\cref{eqn:rectangular-function} with $s_1$ and $s_2$ scaled by a factor of $1/\alpha$.
    By choosing $t = (s_1+s_2)/(2\alpha)$ and $\delta = (s_2-s_1)/(2\alpha)$ in Lemma~\ref{lem:rectangle-polynomial}, we can approximate the filter function $f_{1/\alpha}(x)$ up to an additive error $\epsilon$ with an even polynomial $P(x)$ of degree $d = \mathcal{O}(\alpha s^{-1}\log(\epsilon^{-1}))$.
    Then, by Theorem~\ref{thm:qsvt}, we can implement a $(1,m+1)$-block-encoding of $P^{\triangleright}(A/\alpha)$ using $U_A$, $U^\dagger_A$, $m$-qubit controlled NOT, and single-qubit rotation gates for $\mathcal{O}(\alpha s^{-1}\log(\epsilon^{-1}))$ times. This yields is a $(1,m+1,\epsilon)$-block-encoding of $f^{\triangleright}(A)$ since $|P(x) - f_{1/\alpha}(x)| \le \epsilon$ for all $x \in [-1,1]$.
\end{proof}

\subsection{Singular value thresholding for ground state preparation}\label{append:proof-main-qsvt}
\begin{thm}\label{thm:main-qsvt}
    Consider $\mathcal{H} = \mathbb{L}^\dagger \mathbb{L}$ satisfying ~\cref{eqn:H-abstract-spectra} with $\mathbb{L}$ given by~\cref{eqn:L-abstract}.
    Suppose that we have access to \textit{(i)} a $(\alpha, m)$-block-encoding of $\mathbb{L}$ (denoted by $\UU$), and \textit{(ii)} a warm start quantum state $\ket{\phi}$ with constant overlap with the ground state $\ket{g}$ of $\mathcal{H}$, i.e., $\abs{\braket{\phi}{g}} = \Omega (1)$.
    Then for any $\varepsilon > 0$, we can prepare a quantum state $\ket{\tilde{g}}$ such that $\|\tilde{g} - g\| \le \varepsilon$ using $\mathcal{O}(\alpha D^{-1/2}\log(\varepsilon^{-1}))$ queries to $\UU$ and $\Or(1)$ copies of the state $\ket{\phi}$.
\end{thm}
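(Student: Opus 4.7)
The plan is to instantiate the singular value thresholding routine of Proposition~\ref{prop:sv-filter} with a filter window calibrated to the spectral gap condition \cref{eqn:H-abstract-spectra}, apply the resulting approximate projector to the warm start, and then use amplitude amplification to produce a deterministic output.

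First, I would diagonalize the problem in the right singular basis of $\mathbb{L}$. By Lemma~\ref{lem:eigen-singular}, this basis also diagonalizes $\mathcal{H}$, with $\ket{g}$ being the unique right singular vector associated with $\sigma_1 \in [0,\sqrt{D}/4]$, while $\sigma_k \ge 3\sqrt{D}/4$ for all $k\ge 2$. Expanding the warm start as $\ket{\phi} = c_1 \ket{g} + \sum_{k \ge 2} c_k \ket{v_k}$, the assumption $|\braket{\phi}{g}| = \Omega(1)$ yields $|c_1| = \Omega(1)$.

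Second, I would invoke Proposition~\ref{prop:sv-filter} with the filter thresholds $s_1 = \sqrt{D}/4$ and $s_2 = 3\sqrt{D}/4$, so that $s_2 - s_1 = \sqrt{D}/2$. Given the $(\alpha,m)$-block-encoding $\UU$ of $\mathbb{L}$, this constructs a $(1, m+1, \varepsilon')$-block-encoding of the rectangular filter $f^{\triangleright}(\mathbb{L})$ using $\Or(\alpha D^{-1/2} \log(1/\varepsilon'))$ queries to $\UU$. Acting on $\ket{0^{m+1}}\otimes \ket{\phi}$ and reading off the $\ket{0^{m+1}}$ sector of the ancilla register yields an (unnormalized) system state that differs from $c_1 \ket{g}$ by at most $\Or(\varepsilon')$ in $\ell^2$-norm; this follows from a direct computation in the singular basis, using $P(\sigma_1/\alpha) \in [1-\varepsilon',1]$, $|P(\sigma_k/\alpha)| \le \varepsilon'$ for $k \ge 2$, and $\sum_{k \ge 2} |c_k|^2 \le 1$. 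In particular, the success probability of the $\ket{0^{m+1}}$ outcome is at least $(|c_1| - \Or(\varepsilon'))^2 = \Omega(1)$ whenever $\varepsilon'$ is chosen as a sufficiently small constant.

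Third, to convert this constant-success-probability preparation into a deterministic one, I would apply $\Or(1)$ rounds of fixed-point amplitude amplification, which is affordable precisely because the warm start guarantees an $\Omega(1)$ good amplitude. After renormalization, the resulting state $\ket{\tilde g}$ satisfies $\|\tilde g - g\| = \Or(\varepsilon'/|c_1|) = \Or(\varepsilon')$, so choosing $\varepsilon' = \Theta(\varepsilon)$ gives the target error at a total cost of $\Or(\alpha D^{-1/2} \log(1/\varepsilon))$ queries to $\UU$ and $\Or(1)$ copies of $\ket{\phi}$.

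The main bookkeeping step is to propagate the polynomial approximation error through both post-selection and amplitude amplification: one must verify that the $\Or(\varepsilon')$ filter error does not shrink the good amplitude below $\Omega(1)$, and that renormalization together with $\Or(1)$ reflections preserves the $\Or(\varepsilon')$ bound on $\|\tilde g - g\|$ (rather than inflating it by a factor like $1/|c_1|^2$). Both of these follow from elementary triangle-inequality manipulations combined with the constant lower bound $|c_1| = \Omega(1)$, so no new analytic input is required beyond Lemma~\ref{lem:eigen-singular} and Proposition~\ref{prop:sv-filter}.
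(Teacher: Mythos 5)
Your proposal is correct and follows essentially the same route as the paper: invoke Lemma~\ref{lem:eigen-singular} to identify $\ket{g}$ as the right singular vector at $\sigma_1$, instantiate Proposition~\ref{prop:sv-filter} with $s_1=\sqrt D/4$, $s_2=3\sqrt D/4$ to build a $(1,m+1,\varepsilon')$-block-encoding of the projector, apply it to $\ket{0^{m+1}}\otimes\ket{\phi}$, and use the warm-start overlap to argue that reading off the $\ket{0^{m+1}}$ branch succeeds with $\Omega(1)$ probability and yields an $O(\varepsilon')$-approximation of $\ket{g}$ after renormalization. The one small deviation---replacing the paper's repeat-and-post-select step (which needs only $\Or(1)$ \emph{copies} of $\ket{\phi}$) with fixed-point amplitude amplification (which implicitly requires a state-preparation unitary for $\ket{\phi}$ rather than just copies)---is unnecessary here and uses a marginally stronger access model, but does not change the query count; otherwise your normalization bookkeeping, which tracks that the unnormalized filtered state is close to $c_1\ket g$ rather than to $\ket g$ directly, is if anything slightly more careful than the paper's presentation.
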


\begin{proof}
    By Lemma~\ref{lem:eigen-singular}, the ground state of $\mathcal{H}$ is the same as the right singular vector of $\mathbb{L}$ associated with the smallest singular value $\sigma_1\le \sqrt{D}/4$.
    Therefore, we can implement a singular value thresholding by filtering out all (right) singular vectors associated with singular values greater than or equal to $3\sqrt{D}/4$. 
    Let $f(x)$ be the rectangular filter function defined in~\cref{eqn:rectangular-function} with $s_1 = \sqrt{D}/4$ and $s_2 = 3\sqrt{D}/4$. We can apply Proposition~\ref{prop:sv-filter} to construct a $(1, m+1, \epsilon)$-block-encoding of $f^{\triangleright}_s(\mathbb{L})$ with $\mathcal{O}(\alpha D^{-1/2}\log(\epsilon^{-1}))$ queries to $\UU$. In other words, this quantum circuit block-encodes an approximate projector $\tilde{\Pi}$ such that $\|\tilde{\Pi} - \ketbra{g}{g}\| \le \epsilon$.
    By applying this quantum circuit to an initial state $\ket{0^{m+1}}\otimes\ket{\phi}$, we will obtain a quantum state
    \begin{equation}
       \ket{0^{m+1}}\ket{\tilde{g}} + \ket{\perp},
    \end{equation}
    where $\tilde{g} = \tilde{\Pi}\ket{\phi}$. It is deduced from $\abs{\braket{g}{\phi}} = \Omega(1)$ that $\|\tilde{g} - g\| \le \epsilon$ and $\|\tilde{g}\| = \Omega(1)$.
    Therefore, by post-selecting the quantum state flagged by $0^{m+1}$ (with success probability equals to $\|\tilde{g}\|^2 = \Omega(1)$), we can obtain an $\epsilon$-approximate ground state $\ket{\tilde{g}}$.
\end{proof}

\section{Operator-Level Acceleration of Langevin dynamics: Details}\label{append:langevin-complexity-analysis}

The Witten Laplacian of Langevin dynamics takes the form:
\begin{equation}\label{eqn:spatial_discretize_Lj}
    \mathcal{H} = \sum^d_{j=1}L^\dagger_j L_j,\quad L_j \coloneqq -i\frac{1}{\sqrt{\beta}} \partial_{x_j} - i\frac{\sqrt{\beta}}{2}\partial_{x_j}V \quad \forall j \in [d]=\{1,2,\dots,d\}.
\end{equation}
In this section, we use the expression $\mathcal{H} = \mathbb{L}^\dagger\mathbb{L}$ with $\mathbb{L} \coloneqq [L^\top_1, L^\top_2,\dots, L^\top_d]^\top$.

\subsection{Spatial discretization}\label{append:spatial_discretize_Lj}

For $j = 1,\dots, d$, the operators $L_j$ are unbounded operators defined in the real space $\RR^d$. To compute these operators on a quantum computer, we need to perform spatial discretization and map them to finite-dimensional matrices. 

Suppose that the Gibbs measure $\sigma$ has a negligible probability mass outside a box $\Omega = [-a, a]^d \subset \RR^d$. For the potential $V$ with sufficient growth rate, we may set $a = \mathcal{O}(\log(d/\epsilon))$.\footnote{For example, if $V \ge \gamma \|x\|^2$ for some $\gamma > 0$, then we have $\mathbb{P}_{\sigma}[x \notin \Omega] \le \frac{1}{Z}\int_{\RR^d \backslash \Omega }e^{-\gamma\|x\|^2}\d x = \mathcal{O}(dC^{-a})$, where $C > 1$ independent of $d$. Therefore, to truncate the Gibbs measure up to an error $\epsilon$, it is sufficient to choose $a = \mathcal{O}(\log(d/\epsilon))$.}
Since the Gibbs measure is effectively zero on the boundary of $\Omega$, we assume the numerical domain $\Omega$ is induced with a periodic boundary. 
For simplicity, we assume the problem is further rescaled and translated to the unit box $\Omega = [0,1]^d$, which will only incur an $\mathcal{O}(a) = \mathcal{O}(\log(d/\epsilon))$ multiplicative overhead in the normalization factor of the block-encoding of $\mathbb{L}$.
Let $\mathcal{M} = \{\bm{x} = (x_1,\dots, x_d)\colon x_j \in \{0, h, \dots, 1-h\}\}$ be a regular mesh in $\Omega$, where $h = 1/N$ and $N$ is the number of grid points used for each dimension.
In what follows, we will refer to $a$ as the \textit{truncation length} and $N$ the \textit{discretization number}.

\paragraph{Discretization of $L_j$.}
Each component operator $L_j$ can be represented as a \textit{pseudo-differential} operator  
\begin{align}
    L_j(u)(\bm{x}) = \int_{\RR^d} e^{2\pi i \bm{x}\cdot \bm{\xi}}a_j(\bm{x},\bm{\xi}) \hat{u}(\bm{\xi}) d \bm{\xi},\quad a_j(\bm{x},\bm{\xi}) = \frac{2\pi}{\sqrt{\beta}}\xi_j - i\frac{\sqrt{\beta}}{2} \partial_{x_j}V(\bm{x}) \quad \forall j \in [d],
\end{align}
where $a_j(\bm{x},\bm{\xi})$ is called the \textit{symbol} of $L_j$ and $\hat{u}(\bm{\xi})$ is the Fourier transform of $u(\bm{x})$. 
For every $j\in [d]$, the symbol $a_j(\bm{x},\bm{\xi})$ naturally splits into two components, where $2\pi\xi_j$ corresponds to the differential operator $-i\partial_{x_j}$, and $-i\beta \partial_{x_j}V/2$ is a multiplicative operator in the computational basis.

By truncating the Euclidean space $\RR^d$ to the numerical domain $\Omega=[0,1]^d$ and imposing a periodic boundary condition, we can represent the truncated operator as:
\begin{align}\label{eqn:pseudp-spec-Lj}
    \hat{L}_{j}(u)(\xx) = \mathcal{F}^{-1}\left(\frac{2\pi}{\sqrt{\beta}}\xi_j\mathcal{F}(u)(\bm{\xi})\right)(\xx) + \partial_{x_j}V(\xx) u(\xx), \quad \xx \in \Omega = [0,1]^d.   
\end{align}
Here, $\mathcal{F}$ and $\mathcal{F}^{-1}$ represents the Fourier transform and inverse Fourier transform in $\Omega$, respectively:
\begin{equation*}
    \mathcal{F}(u)(\bm{\xi}) = \int_\Omega u(\xx)e^{-2\pi i \bm{\xi}\cdot \xx} \d \xx, \quad \xi = (\xi_1,\dots, \xi_d) \in \ZZ^d 
\end{equation*}
\begin{equation*}
    \mathcal{F}^{-1}(w)(\xx) = \sum_{\bm{\xi}\in \ZZ^d} w(\bm{\xi})e^{2\pi i \bm{\xi}\cdot \xx}, \quad \xx \in [0,1]^d.
\end{equation*}
The Fourier transformation can be approximated by a quadrature over the uniform mesh $\mathcal{M}$, and then carried out via the Discrete Fourier Transform (DFT). We denote $\mathcal{K}$ as an indexing set: $\mathcal{K} = \{\bm{k} = (k_1,\dots, k_d)\colon 0 \le k_j \le N-1$, $j \in [d]\}$, and the Fourier transform can be approximately computed through the $d$-dimensional DFT:
\begin{equation}\label{eqn:d-dim-dft}
    \mathcal{F}(u)(\bm{\xi}) \approx \frac{1}{N^{d/2}}\sum_{\bm{k}\in \mathcal{K}} u(\xx_{\bm{k}}) e^{-2\pi i \xx_{\bm k}\cdot \bm{\xi}}.
\end{equation}
In the 1-dimensional case, DFT is equivalent to the Quantum Fourier Transform (QFT), given by the following unitary operator:
\begin{equation}
    U_{\rm FT}\ket{j} = \frac{1}{\sqrt{N}}\sum^{N-1}_{k=0} e^{2\pi i kj/N}\ket{k},
\end{equation}
which can be implemented on a quantum computer using $\mathcal{O}(\log^2(N))$ elementary gates and no ancilla qubits~\cite{coppersmith2002approximate}. In general, a $d$-dimensional DFT (denoted by $\mathrm{DFT}_{N,d}$) can be implemented by concatenating $d$ 1-dimensional DFT/QFT:
\begin{equation}
    \mathrm{DFT}_{N,d} = \underbrace{\mathrm{DFT}_{N,1} \otimes \dots \otimes \mathrm{DFT}_{N,1}}_{d~\text{copies}} = U_{\rm FT}^{\otimes d},
\end{equation}
which can be implemented on a quantum computer using $\mathcal{O}(d\log^2(N))$ elementary gates.

Based on the pseudo-differential operator representation~\cref{eqn:pseudp-spec-Lj}, we can write down the spatial discretization of the operator $L_j$ as an $N^d$-dimensional matrix $\tildeL_j$:
\begin{equation}\label{eqn:dft-discretized-Lj}
    \tildeL_j = \frac{2\pi}{\sqrt{\beta}}\mathrm{DFT}^{\dagger}_{N,d} \left(\Phi^j_{{\bm \xi}}\right) \mathrm{DFT}_{N,d} - i\frac{\sqrt{\beta}}{2} \Phi^j_{\xx},
\end{equation}
where $\Phi^j_{\bm \xi}$ and $\Phi^j_{\xx}$ are diagonal matrices defined as follows:
\begin{equation*}
    \Phi^j_{\bm \xi}\ket{\xi_1,\dots,\xi_d} = \xi_j \ket{\xi_1,\dots,\xi_d},\quad \Phi^j_{\xx}\ket{x_1,\dots, x_d} = \partial_{x_j}V(\xx)\ket{x_1,\dots, x_d}.
\end{equation*}
They represent the multiplicative operators corresponding to $\xi_j$ and $\partial_{x_j}V$ in~\cref{eqn:pseudp-spec-Lj}, respectively. Note that the frequency number must be in the range $-N/2 \le \xi_j \le N/2-1$ for all $j \in [d]$. As a result, the discretized $\mathbb{L}$ operator is as follows:
\begin{equation}
    \widetilde{\mathbb{L}} = [\tildeL^\top_1,\dots, \tildeL^\top_d]^\top.
\end{equation}

\paragraph{Discretization of the encoded Gibbs state.}
Next, we discuss the kernel of the discretized operator $\widetilde{\mathbb{L}}$. As expected, it should correspond to a discretized encoded Gibbs state. Here, we adopt a standard trigonometric interpolation to characterize the discretization error. 
For $k = 0,\dots, N-1$, recall that the quadrature point $x_k = kh = k/N$, and we define the following function:
\begin{equation}
    \psi_k(x) = \frac{1}{\sqrt{N}}\sum^{N/2-1}_{j=-N/2} e^{2\pi ij(x-x_k)}, \quad x \in [0,1],
\end{equation}
which is a trigonometric polynomial satisfying $\psi_k(x_l) = \sqrt{N}\delta_{k,l}$ for any $k,l = 0,\dots,N-1$. Moreover, it is readily verified that 
\begin{equation}
    \langle \psi_k , \psi_l \rangle_{L^2} = \delta_{k,l} \quad \forall k,l = 0,\dots,N-1,
\end{equation}
where $\langle f,g\rangle_{L^2}=\int \Bar{f}g\d x$ represents the standard $L^2$ inner product, and $\|f\|_{L^2} = \langle f,f\rangle_{L^2}$.
In~\cref{fig:trig-poly}, we demonstrate the real and imaginary parts of the function $\psi_0(x)$ with $N = 32$. Intuitively, the functions $\psi_k$ are continuous interpolations of the ``square root'' of delta functions, i.e., $\psi_k \approx \sqrt{\delta_{x_k}}$. 

\begin{figure}
    \centering
    \includegraphics[width=0.5\linewidth]{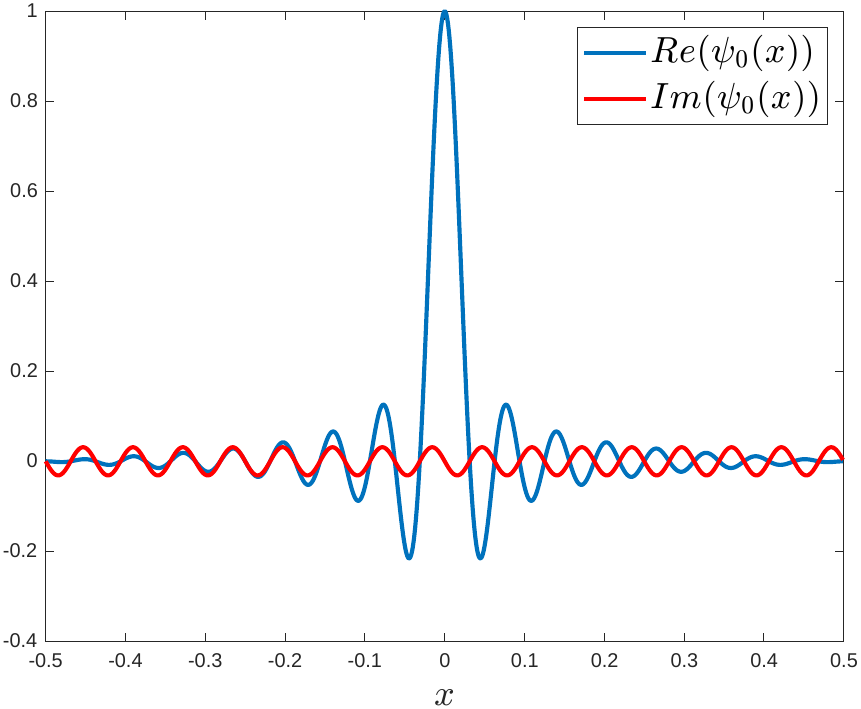}
    \caption{Real and imaginary parts of the trigonometric polynomial $\psi_0(x)$ (normalized such that $\psi_0(0) = 1$) over $[-1/2, 1/2]$ with $N=32$.}
    \label{fig:trig-poly}
\end{figure}

For a quantum state $\ket{u} = \sum^{N-1}_{k_1,\dots,k_d = 0} u(k_1,\dots,k_d)\ket{k_1,\dots,k_d}$, we define the following interpolation map, which can be regarded as an isometry that embeds $\CC^{N^d}$ into $L^2(\Omega)$:
\begin{equation}\label{eqn:interp-map}
    I_N\ket{u} = \sum^{N-1}_{k_1,\dots,k_d = 0} u(k_1,\dots,k_d)\psi_{k_1}(x_1)\dots \psi_{k_d}(x_d).
\end{equation}

Let $\ket{\widetilde{\sqrt{\sigma}}}$ be the right singular vector of $\widetilde{\mathbb{L}}$ associated with the smallest singular value (or equivalently, it is the ground state of the discretized Witten Laplacian $\widetilde{H} = \widetilde{\mathbb{L}}^\dagger \widetilde{\mathbb{L}}$). 
Through the interpolation~\cref{eqn:interp-map}, we can quantify the discretization error by analyzing $\left\|I_N\ket{\widetilde{\sqrt{\sigma}}} - \ket{\sqrt{\sigma}}\right\|_{L^2}$. This will be made rigorous in~\cref{assump:discretization}.

Moreover, the interpolation also provides a practical recipe for achieving high-accuracy Gibbs sampling. 
If we directly measure $\ket{\widetilde{\sqrt{\sigma}}}$ using the computational basis, we would realize a random variable $\tilde{X}$ whose distribution is completely supported on the grid points of $\mathcal{M}$. Since the law of $\tilde{X}$ is a finite mixture of Dirac measures, the TV distance between $\tilde{X}$ and the true Gibbs measure $\sigma$ is always $1$!
A naive fix could be to spread the probability mass at each mesh point to its neighborhood (like what we did in the classical post-processing in~\cref{lem:high-accuracy-interpolation}). In this case, the TV distance decays at a rate $\mathcal{O}(1/N)$ because the law of the new random variable is approximately a ``Riemann sum'' approximation of $\sigma$. This decay rate is unfavorable as it implies that the discretization number has to be $N = \mathcal{O}(1/\epsilon)$, which would inevitably lead to a sampling algorithm with a query complexity $\poly(1/\epsilon)$ (i.e., a low-accuracy sampler). 
In~\cref{lem:high-accuracy-interpolation}, we exploit the trigonometric interpolation to achieve high-accuracy sampling with a potentially small discretization number $N$.

\begin{lem}[Resolution booster]\label{lem:high-accuracy-interpolation}
    For a fixed $\epsilon > 0$, suppose that we have access to a state $\ket{g}\in \CC^{N^d}$ such that 
    \begin{equation}\label{eqn:assump-interpolation}
        \left\|I_N\ket{g} - \sqrt{\sigma}\right\|_{L^2} \le \epsilon/2,
    \end{equation}
    where $\ket{\sqrt{\sigma}} \in L^2(\RR^d)$ is the encoded Gibbs state.\footnote{Note that the interpolation function $I_N\ket{g}$ is a function defined in $\Omega \subset \RR^d$. It can be naturally extended to the whole space with the same $L^2$-norm.}
    Then, there is a quantum algorithm that outputs a random variable $X$ following the distribution $\eta$ such that $\rm{TV}(\eta, \sigma) \le \epsilon$ with $1$ copy of the state $\ket{g}$, and an additional $d\cdot \polylog(1/\epsilon)$ elementary gates.
\end{lem}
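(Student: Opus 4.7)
The overall plan is to sample (approximately) from the probability density $p(\xx) \coloneqq |I_N\ket{g}(\xx)|^2$, which I first show is already close to $\sigma$ in total variation by the hypothesis~\cref{eqn:assump-interpolation}, and then to construct a quantum circuit that generates samples from $p$ with only $d\cdot\polylog(1/\epsilon)$ additional elementary gates via a zero-padded Fourier-resampling trick. Since $I_N$ is an isometry from $\CC^{N^d}$ into $L^2(\Omega)$ and $\|\ket{g}\|=1$, the function $f \coloneqq I_N\ket{g}$ satisfies $\|f\|_{L^2(\Omega)} = 1$, so $p = |f|^2$ is a bona fide probability density on $\Omega$. Writing $\abs{|f|^2 - \sigma} = \abs{|f|-\sqrt{\sigma}}\cdot(|f|+\sqrt{\sigma})$, applying Cauchy--Schwarz, and using $\abs{|f|-\sqrt{\sigma}} \le |f - \sqrt{\sigma}|$ pointwise together with $\|\sqrt{\sigma}\|_{L^2}=1$, immediately yields $\mathrm{TV}(p,\sigma) \le \|f-\sqrt{\sigma}\|_{L^2} \le \epsilon/2$.

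For the sampling step, I would exploit that $f(\xx)=\sum_{\bm k}g_{\bm k}\psi_{k_1}(x_1)\cdots\psi_{k_d}(x_d)$ is a $d$-variate trigonometric polynomial whose Fourier coefficients in each variable are supported on $\{-N/2,\ldots,N/2-1\}$ and are exactly the amplitudes of the state obtained by applying the inverse QFT in each of the $d$ registers to $\ket{g}$. The idea is to evaluate $f$ on a much finer grid by zero-padding: choose an integer $M=\mathrm{poly}(d,N,1/\epsilon)$, apply the inverse QFT of size $N$ in each register, insert $M-N$ zeros between the positive- and negative-frequency halves of each register (so as to preserve the periodic-Fourier identification) to form a unit-norm state in $\CC^{M^d}$, and then apply the forward QFT of size $M$ in each register. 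A direct calculation using the Nyquist condition $M\ge 2N$ shows the resulting state equals $M^{-d/2}\sum_{\bm l}f(\bm l/M)\ket{\bm l}$. Measuring in the computational basis yields $\bm l\in\{0,\ldots,M-1\}^d$ with probability $|f(\bm l/M)|^2/M^d$, after which we classically sample $x_j$ uniformly in $[l_j/M,(l_j+1)/M]$ for each $j$, producing a random variable $X$ whose law $\eta$ is the piecewise-constant density taking the value $|f(\bm l/M)|^2$ on the cell indexed by $\bm l$.

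The main obstacle is bounding $\mathrm{TV}(\eta, p)$, which reduces to a Riemann-sum-type error controlled by $\|\nabla p\|_\infty$ and $M^{-1}$. Since $p=|f|^2$ is a trigonometric polynomial of coordinate-wise degree at most $N$, Bernstein's inequality gives $\|\partial_{x_j}p\|_\infty \le 4\pi N\|f\|_\infty^2$, and a Cauchy--Schwarz estimate on the Fourier side yields $\|f\|_\infty \le N^{d/2}$; these combine to give $\mathrm{TV}(\eta,p) \le \mathcal{O}(\sqrt{d}\,N^{d+1}/M)$, so choosing $M=\mathrm{poly}(d,N,1/\epsilon)$ makes this $\le\epsilon/2$. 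Since the discretization recipe in the main text gives $N=\polylog(d,1/\epsilon)$, we have $\log M=\polylog(d,1/\epsilon)$, and the circuit consists of $2d$ one-dimensional QFTs of size at most $M$ acting on $d\log M$ qubits for a total of $\mathcal{O}(d\log^2 M) = d\cdot\polylog(1/\epsilon)$ elementary gates. Combined with Step~1 via the triangle inequality, this gives $\mathrm{TV}(\eta,\sigma)\le\epsilon$. The delicate bookkeeping lies in verifying that zero-padding at the correct positions (low and high frequencies placed at the ends of each register) together with the Nyquist condition produces a state whose amplitudes exactly match $f(\bm l/M)/M^{d/2}$, and in tracking the $d$-dependence of the $\|f\|_\infty$ and Bernstein estimates cleanly enough that the overhead does not wash out the advertised gate count.
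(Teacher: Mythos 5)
Your overall plan---first show $\mathrm{TV}(|I_N\ket{g}|^2,\sigma)\le\epsilon/2$ via Cauchy--Schwarz, then Fourier-upsample by zero-padding, measure in the computational basis, and scatter uniformly within the fine cells---is exactly the route the paper takes, and your care about where to insert the padding (between the positive- and negative-frequency halves of each register) is a genuinely useful clarification that the paper elides. Step~1 and the circuit construction are correct and match the paper's proof.

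The gap is in your error bound for $\mathrm{TV}(\eta,p)$. You estimate $\|\partial_{x_j}p\|_\infty\lesssim N\|f\|_\infty^2$ and $\|f\|_\infty\le N^{d/2}$, concluding $\mathrm{TV}(\eta,p)=\mathcal{O}(\sqrt d\,N^{d+1}/M)$, and then assert that $M=\poly(d,N,1/\epsilon)$ makes this $\le\epsilon/2$. That assertion is false: $N^{d+1}$ is \emph{not} $\poly(d,N)$, since it grows exponentially in $d$ for any fixed $N\ge2$. Forcing $M\gtrsim N^{d+1}/\epsilon$ gives $\log M=\Omega(d\log N)$, so the $d$ one-dimensional QFTs cost $\Omega(d\,\log^2 M)=\Omega(d^3\,\mathrm{polyloglog})$ gates, not the lemma's stated $d\cdot\polylog(1/\epsilon)$. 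The fix is to control the Riemann-sum error in $L^2$ rather than through sup-norm Bernstein. Writing $f^D$ for the step function that equals $f$ at each fine grid point (so $\eta=|f^D|^2$), Cauchy--Schwarz gives
\begin{equation*}
    \int_\Omega\bigl||f|^2-|f^D|^2\bigr|\,\d x \;\le\; \bigl(\|f\|_{L^2}+\|f^D\|_{L^2}\bigr)\,\|f-f^D\|_{L^2}
    \;=\;2\,\|f-f^D\|_{L^2},
\end{equation*}
where $\|f^D\|_{L^2}=1$ because a uniform grid of size $M\ge2N$ is an exact quadrature for $|f|^2$ (it is a trigonometric polynomial of per-coordinate degree $<N$). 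To bound $\|f-f^D\|_{L^2}$, telescope over the $d$ coordinates, replacing $x_j$ by its grid value one coordinate at a time. On each cell in the $x_j$-direction, the fundamental theorem of calculus and Cauchy--Schwarz give $\int_{\mathrm{cell}}|h(x_j)-h(x_j^D)|^2\,\d x_j\le M^{-2}\int_{\mathrm{cell}}|\partial_{x_j}h|^2\,\d x_j$; summing over cells and invoking $L^2$-Bernstein $\|\partial_{x_j}h\|_{L^2}\le\pi N\|h\|_{L^2}$ bounds each telescoping term by $\pi N/M$, since each partially discretized intermediate keeps unit $L^2$-norm by the same exact-quadrature property. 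Hence $\|f-f^D\|_{L^2}\le\pi dN/M$, so $\mathrm{TV}(\eta,p)\le\pi dN/M$, and $M=\Theta(dN/\epsilon)$ suffices, giving $\log M=\polylog(d,1/\epsilon)$ and the claimed gate count. You rightly flagged the $d$-tracking here as the delicate step, but the $\|f\|_\infty$-based estimate is off by factors exponential in $d$ and has to be replaced by the $L^2$ argument.
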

\begin{proof}
    Without loss of generality, we assume the $\log_2(N) = q$ is a positive integer, i.e., $\ket{g}$ can be represented by $dq$ qubits. For a given integer $r \ge q$ (the choice of $r$ will be specified later), we consider the following unitary operator:
    \begin{equation}
        U\colon \ket{0^{r-q}}\otimes \ket{k} \mapsto \ket{\overline{\psi_k}},\quad \forall k = 0,\dots,2^q-1,
    \end{equation}
    where  
    \begin{equation}
        \ket{\overline{\psi_k}} = \frac{1}{\mathcal{N}_k} \sum^{2^r-1}_{j=0} \psi_k(j/2^r) \ket{j},\quad \mathcal{N}_k = \left(\sum^{2^r-1}_{j=0} |\psi_k(j/2^r)|^2\right)^{1/2}.
    \end{equation}
    This operation can be constructed by performing a discrete Fourier transform on $q$ qubits, ``padding'' the resulting state with a state $\ket{0^{r-q}}$, and performing an inverse discrete Fourier transform on the $q+(r-q)=r$ qubits.     Applying this argument to every dimension, we can add $d(r-q)$ ancilla qubits to the original register that stores $\ket{g}$ and apply $U^{\otimes d}$ to the extended register. The resulting quantum state reads:
    \begin{equation}
        U^{\otimes d}\ket{0^{d(r-q)}}\otimes \ket{g} = \sum^{N-1}_{k_1,\dots,k_d = 0} g(k_1,\dots,k_d)\ket{\overline{\psi_{k_1}}}\dots \ket{\overline{\psi_{k_d}}},
    \end{equation}
    which can be regarded as a spatial discretization of $I_N\ket{g}$ with discretization number $M = 2^r$.

    Now, by measuring the quantum state $U^{\otimes d}\ket{g}$ using the computational basis, we will obtain a random variable with $M^d$ possible outcomes. These outcomes can be identified with a finer regular mesh in $\Omega$ with $M^d$ quadrature points. Then, for an outcome $\mathbf{z} = (z_1,\dots,z_d)$ (where $z_k \in \{0,1/2^r,\dots, 1-1/2^r\}$ for $k \in [d]$), we uniformly sample a point in the box centered at $\mathbf{z}$ with an edge length $1/M$. Following this protocol, we realize a random variable $X \in \Omega$ whose probability density $\eta(x)$ is a piece-wise constant function. In particular, this $\eta$ can be regarded as a ``Riemann sum'' approximation of the function $\tilde{\sigma} \coloneqq |I_N\ket{u}|^2$. Therefore, by choosing $M = \poly(1/\epsilon)$, we can have
    \begin{equation}\label{eqn:final-1}
        \int_\Omega |\eta(x) - \tilde{\sigma}(x)|~\d x \le \epsilon.
    \end{equation}
    Moreover, by applying Cauchy-Schwarz inequality to~\cref{eqn:assump-interpolation}, we have
    \begin{equation}\label{eqn:final-2}
        \int |\tilde{\sigma}(x) - \sigma(x)|~\d x \le \left\|I_N\ket{u}+\sqrt{\sigma}\right\|_{L^2} \left\|I_N\ket{u}-\sqrt{\sigma}\right\|_{L^2} \le \epsilon.
    \end{equation}
    Combining~\cref{eqn:final-1} and~\cref{eqn:final-2} using the triangle inequality, we obtain the desired estimate on the TV distance:
    \begin{equation}
        \mathrm{TV}(\eta, \sigma) = \frac{1}{2}\int |\eta(x) - \sigma(x)|~\d x \le \frac{\epsilon}{2} + \frac{\epsilon}{2} = \epsilon.
    \end{equation}
    Note that gate complexity of $U^{\otimes d}$ is $\mathcal{O}(d\poly(r)) = d\cdot \polylog(1/\epsilon)$, where $r = \log_2(M) = \log_2(\poly(1/\epsilon))$.
\end{proof}

\begin{rem}
    A similar interpolation technique has been introduced in~\cite[Theorem 3.2]{motamedigibbs} to exponentially improve the sampling resolution. To ensure the Gibbs measure can be efficiently interpolated using a slightly different set of trigonometric polynomials,~\cite{motamedigibbs} requires the target Gibbs measure $\sigma$ to be ``semi-analytical'', and the purpose is similar to our~\cref{assump:discretization}, as detailed below.
\end{rem}

For a smooth $V$ with a sufficient growth rate, the state $\sqrt{\sigma}$ is also smooth and has a fast-decaying tail. 
In this case, the DFT-based discretization (also known as a pseudo-spectral method in numerical analysis) often exhibits \textit{spectral convergence}, i.e., the discretization error decays super-polynomially fast with an increasing discretization number $N$.
In this paper, we make the following assumption that ensure the efficiency of the spatial discretization of Langevin dynamics:

\begin{assump}[Spatial discretization of Langevin dynamics]\label{assump:discretization}
    Let $V\colon \RR^d \to \RR$ be a smooth potential function, and $\sigma \propto e^{-\beta V}$ be the Gibbs measure. 
    For an arbitrary $\epsilon>0$, we assume that we can choose $a = \mathcal{O}(\log(d/\epsilon))$ and $N = a\cdot\poly\log(d/\epsilon)$ such that the followings hold:
    \begin{enumerate}
        \item The discretized Witten Laplacian $\widetilde{\mathcal{H}} \coloneqq \sum^d_{j=1} \tildeL^\dagger_j\tildeL_j$ has a ground state $\ket{\widetilde{\sqrt{\sigma}}}$ that satisfies $\left\|I_N \ket{\widetilde{\sqrt{\sigma}}} - \sqrt{\sigma}\right\|_{L^2} \le \epsilon/4$, 
        \item Compared to the Witten Laplacian $\mathcal{H}$ in~\cref{eqn:spatial_discretize_Lj}, the smallest eigenvalue of $\widetilde{\mathcal{H}}$ is no greater than $\mathrm{Gap}\left(\mathcal{H}\right)/16$, and the second smallest eigenvalue of $\widetilde{\mathcal{H}}$ is no smaller than $9 \mathrm{Gap}\left(\mathcal{H}\right)/ 16$. 
    \end{enumerate}
\end{assump}

We numerically implemented our DFT-based discretization scheme, and the results suggest that such a discretization scheme achieves an accuracy $\epsilon$ with only poly-logarithmically large truncation length $a$ and discretization number $N$ for several non-convex potentials.

\subsection{Block-encoding of $\mathbb{L}$}\label{append:pseudo-diff-discretize}

In this subsection, we discuss how to efficiently block-encode the operator $\mathbb{L}$ using a quantum computer. 
We assume access to the gradient (i.e., first-order) oracle of $V$:
\begin{align*}
    O_{\nabla V}\colon \ket{\xx}\ket{\zz}\mapsto \ket{\xx}\ket{\nabla V(\xx) + \zz} = \ket{\xx}\ket{\partial_{1} V(\xx)+z_1,\dots, \partial_d V(\xx)+z_d}
\end{align*}
and its inverse $O^\dagger_{\nabla V}$.
Each sub-register (i.e, $\ket{\xx}$ and $\ket{\zz}$, respectively) consists of $bd$ qubits, where $b$ is the number of bits for the fixed point number representation and $d$ is the number of components in the vector. 
For simplicity, we assume $b$ is a fixed constant (e.g., $32$), and the problem dimension $d$ is a \textit{power of $2$}, i.e., $\log_2(d)$ is a positive integer.

The block-encoding of each $\tildeL_j$ is based on the spatial discretization~\cref{eqn:dft-discretized-Lj}, which requires us to block-encode the following matrices:
\begin{equation}\label{eqn:P_Q_j}
    \tilde{P}_j = \frac{2\pi}{\sqrt{\beta}}\left(U^\dagger_{\rm FT}\right)^{\otimes d} \left(\Phi^j_{{\bm \xi}}\right) U_{\rm FT}^{\otimes d}, \quad \tilde{Q}_j =  - i\frac{\sqrt{\beta}}{2} \Phi^j_{\xx}.
\end{equation}

\begin{lem}[Select oracle]\label{lem:select-oracle}
Let 
$$\mathrm{SEL} = \sum^d_{j=1}\ketbra{j}{j}\otimes U_{\tilde{P}_j},$$
where $U_{\tilde{P}_j}$ is a $(\pi N/\sqrt{\beta}, 1, 0)$-block-encoding of $\tilde{P}_j$. We can implement the unitary $\mathrm{SEL}$ with $d\cdot \polylog(d, N)$ elementary gates.
\end{lem}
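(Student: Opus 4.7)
The plan is to exploit the sandwich structure of $\tilde{P}_j$ in~\cref{eqn:P_Q_j}: only the middle diagonal factor $\Phi^j_{\bm \xi}$ depends on $j$, while the outer Quantum Fourier Transforms $U_{\rm FT}^{\otimes d}$ are independent of $j$. Accordingly, the first step is to rewrite
\begin{equation*}
    \mathrm{SEL} = \bigl(I_{\rm ctrl}\otimes I_{\rm anc}\otimes (U_{\rm FT}^\dagger)^{\otimes d}\bigr)\cdot \mathrm{DIAG}\cdot \bigl(I_{\rm ctrl}\otimes I_{\rm anc}\otimes U_{\rm FT}^{\otimes d}\bigr),
\end{equation*}
where $\mathrm{DIAG} = \sum_{j=1}^d \ketbra{j}{j}\otimes V_j$, and each $V_j$ is required to be a $(N/2, 1, 0)$-block-encoding of $\Phi^j_{\bm \xi}$ (so that, after pre/post-multiplying by the QFT layers and accounting for the prefactor $2\pi/\sqrt{\beta}$, the whole thing becomes a $(\pi N/\sqrt{\beta}, 1, 0)$-block-encoding of $\tilde P_j$). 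The two outer QFT layers act on the frequency register alone and can be implemented with $\mathcal{O}(d\log^2 N)$ elementary gates by composing $d$ one-dimensional QFTs~\cite{coppersmith2002approximate}, contributing $d\cdot\polylog(N)$ to the total.

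The main technical step is building the controlled diagonal $\mathrm{DIAG}$. Here I would use a SWAP-based trick: introduce a scratch register of $b=\lceil\log_2 N\rceil$ qubits and, controlled on $\ket{j}$ in the control register, swap the $j$-th sub-register of the frequency register (which holds $\xi_j$) into the scratch. This is a controlled coordinate permutation, which can be implemented by iterating $j=1,\dots,d$ and performing a controlled block-SWAP of $b$ qubits at each step; since each block-SWAP uses $\mathcal{O}(b)$ Toffoli gates, the total cost is $\mathcal{O}(db) = d\cdot\polylog(N)$. After the SWAP, the scratch contains a single integer $\xi_j\in\{-N/2,\dots,N/2-1\}$, independent of $j$ on the circuit level. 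One then block-encodes the \emph{scalar} diagonal $\xi\mapsto \xi/(N/2)\in[-1,1]$ on the scratch by a standard arithmetic routine: compute $\theta = \arccos(\xi/(N/2))$ into another ancilla with $\polylog(N)$ reversible arithmetic gates, apply an ancilla-controlled $R_y(2\theta)$ on the single block-encoding qubit, and uncompute $\theta$. Finally the controlled SWAP is reversed, returning the scratch to $\ket{0^b}$, at an additional $d\cdot\polylog(N)$ cost.

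Combining the three layers yields a total gate count of $d\cdot\polylog(d,N)$, matching the claim. Correctness follows because, conditioned on $\ket{j}$, the sequence (SWAP in $\to$ scalar block-encoding of $\xi/(N/2)$ $\to$ SWAP out) implements precisely a $(N/2,1,0)$-block-encoding of $\Phi^j_{\bm\xi}$ on the frequency register, and conjugation by $U_{\rm FT}^{\otimes d}$ turns this into a block-encoding of $\tilde P_j$ with the correct normalization $\pi N/\sqrt{\beta}$. The main obstacle I anticipate is bookkeeping the ancilla qubits: the lemma advertises a single block-encoding ancilla, so one needs to verify that the scratch qubits and any arithmetic workspace are returned to $\ket{0}$ before the measurement/projection that defines block-encoding, which is ensured by the uncompute step. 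Other than this, each primitive — controlled block-SWAP, reversible arithmetic for $\arccos$, and controlled $R_y$ — is standard, and all cost bounds combine multiplicatively to give the advertised $d\cdot\polylog(d,N)$.
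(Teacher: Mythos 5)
Your proof is correct and follows essentially the same high-level decomposition as the paper's: conjugate a controlled diagonal by the outer QFT layers, cost out the $d$-fold QFT at $\mathcal{O}(d\log^2 N)$, and implement the diagonal as a single-ancilla rotation driven by the $b$-bit representation of the normalized frequency $\xi_j/(N/2)$, with reversible arithmetic uncomputed so that only one block-encoding ancilla is post-selected. The one place where your construction differs in implementation detail is the controlled diagonal: the paper iterates $j=1,\dots,d$ and, for each $j$, applies an arithmetic-plus-rotation gadget controlled on $\ket{j}$ that reads $\xi_j$ in place (cost $\poly(b)\log d$ per gadget, $d$ gadgets total), whereas you first route $\xi_j$ into a fixed scratch register via a controlled SWAP network, run a single $j$-independent arithmetic gadget there, and route back. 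Your SWAP-based variant is cleaner in that the arithmetic subroutine is invoked once rather than $d$ times with different addressing, and it makes the cost accounting ($2\times d\cdot\polylog(d,N)$ for routing plus $\polylog(N)$ for arithmetic) transparent; the paper's version avoids the extra $b$-qubit scratch register. Both give the advertised $d\cdot\polylog(d,N)$ gate count and the same normalization $\pi N/\sqrt{\beta}$, and your remark about returning all workspace qubits to $\ket{0}$ before the block-encoding projection is exactly the bookkeeping needed to justify the single-ancilla claim.
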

\begin{proof}
    The block-encoding of the select oracle can be implemented as follows.
    We consider the following register:
    \begin{align}
        \underbrace{\ket{000\dots 0}}_{\text{dim. index}}\otimes \underbrace{\ket{{\bm \xi}}}_{\text{frequency number}} \otimes \underbrace{\ket{0}}_{\text{ancilla}}.
    \end{align}
    The first sub-register (for dimension indexing) consists of $ \log_2(d)$ qubits\footnote{Here, we assume $d$ is a power of $2$ so all computational basis in the indexing register is used.}, the second sub-register (for frequency number) consists of $db$ qubits, and the last (ancilla) sub-register has $1$ qubit.
    
    To compute the diagonal differential coefficient matrix $\Phi^j_{\bm \xi}$, we can implement a sequence of $b$ controlled rotations, each controlled on the index sub-register, and perform a rotation on the ancilla qubit. These controlled rotations use classical arithmetic operators\footnote{Any classical arithmetic operations expressed as logical circuits can be implemented on a quantum computer using the same number of elementary gates, see~\cite[Chapter 4]{nielsen2010quantum}.} to compute the normalized frequency $2\xi_j/N \in [-1,1]$, and can be executed by $\poly(b)\cdot\log(d)$ elementary gates, where $\poly(b)$ is the cost for implementing classical arithmetic operators, and $\log(d)$ is the overhead for the controlled rotation on $\ket{j}$ basis.
    We obtain the following state:
    \begin{equation}
        \ket{j}\otimes \ket{{\bm \xi}}\otimes \ket{0} \mapsto \ket{j}\otimes \ket{{\bm \xi}}\otimes \left(\frac{\xi_j}{N/2}\ket{0} + \sqrt{1 - \frac{ \xi^2_j}{(N/2)^2}}\ket{1}\right).
    \end{equation}
    In other words, this implements a $(N/2,1)$-block-encoding of $\ketbra{j}{j}\otimes \tilde{\Phi}^j_{\bm \xi_j}$. By sandwiching this unitary operator between the $d$-dimensional QFT and its inverse, we obtain a controlled unitary $U'_{\tilde{P}_j} = \ketbra{j}{j}\otimes U_{\tilde{P}_j} + \sum_{k\neq j}\ketbra{k}{k}\otimes I$, where $U_{\tilde{P}_j}$ is a $(\pi N/\sqrt{\beta}, 1)$-block-encoding of $\tilde{P}_j$. 
    Note that the $d$-dimensional QFT (or its inverse) can be implemented using $\mathcal{O}(d\log^2(N))$ elementary qubits.
    By concatenating $d$ such controlled unitaries in a sequence, we can implement the select oracle $\mathrm{SEL}$ with $1$ ancilla qubit and $d\cdot \poly\log(d, N)$ elementary gates.
\end{proof}

Let $\mathcal{A} \in \CC^{dN^d\times dN^d}$ be the following block matrix: 
\begin{equation}\label{eqn:cal-A}
    \mathcal{A} = [\tilde{L}^\top_1, \tilde{L}^\top_2,\dots, \tilde{L}^\top_d]^\top
\end{equation}
We can write $\mathcal{A} = \mathcal{A}_1 + \mathcal{A}_2$ with ($\tilde{P}_j$, $\tilde{Q}_j$ are the same as in~\cref{eqn:P_Q_j})
\begin{equation*}
    \mathcal{A}_1 = [\tilde{P}^\top_1, \tilde{P}^\top_2,\dots, \tilde{P}^\top_d]^\top,\quad \mathcal{A}_2 = [\tilde{Q}^\top_1, \tilde{Q}^\top_2,\dots, \tilde{Q}^\top_d]^\top.
\end{equation*}

\begin{lem}\label{lem:A1-block-encode}
    We can implement a $(\pi N\sqrt{d/\beta}, 1)$-block-encoding of the matrix $\mathcal{A}_1$ with $d\cdot \polylog(d,N)$ elementary gates.
\end{lem}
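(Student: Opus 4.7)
The plan is to apply a standard PREPARE--SELECT construction, combining the select oracle from Lemma~\ref{lem:select-oracle} with a uniform superposition on the indexing register. The key observation is that since $\mathcal{A}_1$ is a tall block matrix of size $dN^d\times N^d$, the $\log_2(d)$ qubits used to index the blocks should be viewed as part of the \emph{output} register of the block-encoding, not as additional block-encoding ancilla --- this is exactly the asymmetry built into \cref{defn:be-general} for rectangular matrices.

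First, I would fix the qubit layout: one ancilla qubit $\ket{0}_{\mathrm{anc}}$ (inherited from the block-encodings $U_{\tilde P_j}$), a $\log_2(d)$-qubit indexing register $\ket{\cdot}_{\mathrm{idx}}$ that together with the system register forms the $n$-qubit output space, and a $d\log_2(N)$-qubit system register $\ket{\psi}_{\mathrm{sys}}$ holding the input. Starting from $\ket{0}_{\mathrm{anc}}\ket{0^{\log_2 d}}_{\mathrm{idx}}\ket{\psi}_{\mathrm{sys}}$, I would apply Hadamards to the indexing register to produce the uniform superposition $\tfrac{1}{\sqrt d}\sum_j \ket{j}_{\mathrm{idx}}$, then apply $\mathrm{SEL}$. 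Because each $U_{\tilde P_j}$ is a $(\pi N/\sqrt\beta,1)$-block-encoding of $\tilde P_j$, tracking amplitudes shows that the component on $\ket{0}_{\mathrm{anc}}\ket{j}_{\mathrm{idx}}$ after these two operations is exactly $\tfrac{1}{\sqrt d}\cdot\tfrac{\sqrt\beta}{\pi N}\,\tilde P_j\ket{\psi}_{\mathrm{sys}}$.

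Projecting onto $\ket{0}_{\mathrm{anc}}$ (the single block-encoding ancilla) then yields $\tfrac{\sqrt\beta}{\sqrt d\,\pi N}\,\mathcal{A}_1\ket{\psi}$ in the joint idx-sys register, which by \cref{defn:be-general} is a $(\pi N\sqrt{d/\beta},1)$-block-encoding of $\mathcal{A}_1$. A quick sanity check confirms this normalization is valid: $\|\mathcal{A}_1\psi\|^2 = \sum_j \|\tilde P_j\psi\|^2 \le d\,(\pi N/\sqrt\beta)^2\,\|\psi\|^2$, so $\|\mathcal{A}_1\|\le \pi N\sqrt{d/\beta}$. The gate count is dominated by $\mathrm{SEL}$, which by Lemma~\ref{lem:select-oracle} costs $d\cdot\polylog(d,N)$ elementary gates, with the Hadamards adding a negligible $\log_2(d)$.

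There is no real analytical obstacle here; the whole argument is bookkeeping. The only subtlety --- and the one place where an error would change the claimed normalization --- is the qubit accounting: if the indexing register were treated as ancilla rather than as part of the output register, one would need an additional unpreparation (inverse Hadamards) and a post-selection on $\ket{0^{\log_2 d}}_{\mathrm{idx}}$, which would inflate the normalization factor by an extra $\sqrt d$. Correctly exploiting the rectangular block-encoding definition is therefore the main point to get right.
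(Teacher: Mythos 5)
Your proposal is correct and follows essentially the same route as the paper: apply a layer of Hadamards on the $\log_2 d$-qubit index register to prepare the uniform superposition, then apply the $\mathrm{SEL}$ oracle from \cref{lem:select-oracle}, treating the index register as part of the tall output register (so no unprepare step), which yields the $(\pi N\sqrt{d/\beta},1)$-block-encoding. The subtlety you flag about not counting the index register as ancilla is exactly the point the paper's terse proof is implicitly relying on.
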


\begin{proof}
Recall that we assume the problem dimension $d$ is a power of $2$. Therefore, we can prepare a uniform superposition state $\ket{\Psi} = \frac{1}{\sqrt{d}}\sum^{d-1}_{j=0}\ket{j}$ using $\log_2(d)$ Hadamard gates.
Then, by applying the layer of Hadamard gates followed by the select oracle $\mathrm{SEL}$, we prepared a $(\pi N\sqrt{d/\beta}, 1)$-block-encoding of $\mathcal{A}_1$. The total number of elementary gates is $d\cdot \polylog(d,N)$.
\end{proof}

\begin{rem}
    Note that the $\mathcal{O}(\sqrt{d})$ overhead in the normalization factor of the block-encoding of $\mathcal{A}_1$ is inevitable because $\|\mathcal{A}_1\| = \Omega(\sqrt{d})$.
\end{rem}

\begin{defn}
    A differentiable function $V \colon \RR^d \to \RR$ is $\ell$-smooth if for any $x, y \in \RR^d$, 
    \begin{equation*}
        \|\nabla V(x) - \nabla V(y)\| \le \ell \|x -y\|,
    \end{equation*}
    where $\|\cdot\|$ stands for the standard Euclidean distance in $\RR^d$. In other words, $V$ is $\ell$-smooth if its gradient is $\ell$-Lipschitz continuous.
\end{defn}

In the following lemma, we show that a $(\sqrt{\beta}R/2, 1)$-block-encoding of $\mathcal{A}_2$ can be constructed using 2 queries to $O_{\nabla V}$. Here, $R = \max_{\xx \in \Omega} \|\nabla V(\xx)\|$.
For an $\ell$-smooth potential $V$ with at least 1 local minimum in the numerical domain $\Omega$, it is clear that $R \le a\ell \sqrt{d} = \mathcal{O}(\ell \sqrt{d}\log(d/\epsilon))$ because $\Omega$ has a diameter $a\sqrt{d}= \mathcal{O}(\sqrt{d}\log(d/\epsilon))$.

\begin{lem}\label{lem:A2-block-encode}
    Let $R = \max_{\xx \in \Omega} \|\nabla V(\xx)\|$. 
    We can implement a $(\sqrt{\beta}R/2, 1)$-block-encoding of $\mathcal{A}_2$ with 2 queries to the gradient oracle $O_{\nabla V}$ or its inverse, and an additional $\tilde{\mathcal{O}}(d^2)$ elementary gates.
\end{lem}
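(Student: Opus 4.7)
Since $\mathcal{A}_2$ is a block-column whose $j$-th block is the diagonal $-i(\sqrt{\beta}/2)\Phi^j_\xx$, applying it to a basis state $\ket{\xx}$ yields $-i(\sqrt{\beta}/2)\sum_{j=1}^d \partial_j V(\xx)\ket{j}\ket{\xx}$, whose Euclidean norm is bounded by $(\sqrt{\beta}/2)\|\nabla V(\xx)\|\le \sqrt{\beta}R/2$. The plan is therefore to build a unitary that, on input $\ket{0}_{\mathrm{flag}}\ket{0}_{\mathrm{idx}}\ket{\xx}_{\mathrm{pos}}$ (together with ancilla registers returned to $\ket{0}$), produces a state whose $\ket{0}_{\mathrm{flag}}$-flagged component is $\sum_j (-i\partial_j V(\xx)/R)\ket{j}_{\mathrm{idx}}\ket{\xx}_{\mathrm{pos}}$. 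Because $\sum_j |\partial_j V(\xx)/R|^2\le 1$, a single flag qubit suffices to absorb the residual amplitude, matching the parameter $m=1$ in the claim.

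I will proceed in three stages. First, I will query $O_{\nabla V}$ once on $\ket{\xx}\ket{0^{bd}}$ to coherently load the gradient into a fresh ancilla register, producing $\ket{\xx}\ket{\nabla V(\xx)}$. Second, using this gradient-data register as input, I will apply a coherent, amplitude-loaded state-preparation circuit on the $\log_2 d$-qubit index register plus the single flag qubit, producing $\sum_j(-i\partial_j V(\xx)/R)\ket{j}_{\mathrm{idx}}\ket{0}_{\mathrm{flag}}+\ket{\perp}\ket{1}_{\mathrm{flag}}$. Third, I will uncompute the gradient register by one application of $O_{\nabla V}^\dagger$, so that every ancilla except the flag returns to $\ket{0}$.

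The main obstacle is Stage 2: one must exploit the $\ell^2$ bound $R$ on $\|\nabla V\|$ rather than a componentwise bound, since the naive ``Hadamard on the index register plus controlled single-qubit rotation'' strategy that worked for $\mathcal{A}_1$ in \cref{lem:A1-block-encode} would inflate the subnormalization by a factor of $\sqrt{d}$ and destroy the claim. My plan for sidestepping this loss is to use the standard binary-tree state-preparation circuit on the $\log_2 d$ index qubits: for each tree level $k=1,\dots,\log_2 d$, apply a layer of multi-controlled $Y$-rotations whose angles split the amplitude between the two halves of the remaining index range, with each rotation angle computed on the fly by reversible arithmetic that sums the appropriate partial sum of $|\partial_j V(\xx)|^2$ read from the gradient register. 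The signs of the real components $\partial_j V(\xx)$ are injected by controlled $Z$ gates, and the overall $-i$ factor is absorbed into a single global phase.

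Counting resources: Stages 1 and 3 together use exactly $2$ queries to $O_{\nabla V}$ (one forward and one inverse). In Stage 2, the binary tree contains $O(d)$ conditional rotations; each one requires $\tilde{\mathcal{O}}(d)$ reversible-arithmetic gates to form its partial-sum-of-squares angle from the gradient register, together with $\polylog(1/\epsilon)$ gates to realize the quantum-controlled rotation via a phase-gradient gadget. Summing over all $O(d)$ rotations yields the claimed $\tilde{\mathcal{O}}(d^2)$ additional elementary-gate overhead, completing the construction.
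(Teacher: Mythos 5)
Your proposal is correct and, in its resource accounting, lands exactly on the claimed $(\sqrt{\beta}R/2,1)$ normalization with $2$ gradient queries and $\tilde{\mathcal{O}}(d^2)$ elementary gates, but the state-preparation inside Stage~2 is genuinely different from the paper's. The paper builds the amplitude vector $\nabla V(\xx)/R$ by a \emph{sequential staircase of Givens rotations}: the first rotation puts $\partial_1 V/R$ on the basis state $\ket{0}\ket{0}$ and the remaining $\sqrt{R^2-|\partial_1 V|^2}/R$ on the residual $\ket{0}\ket{1}$; the $j$-th rotation then peels off $\partial_j V/R$ from the residual, with the rotation angle computed from $\partial_1 V,\dots,\partial_j V$. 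Since step $j$ requires arithmetic on $j$ components, the total costs $1+2+\dots+d=\mathcal{O}(d^2)$. You instead use a \emph{Grover--Rudolph / binary-tree multiplexed loader}: one flag-level rotation with angle $\arccos(\|\nabla V(\xx)\|/R)$ to set the overall success amplitude, followed by $\log_2 d$ levels of controlled $Y$-rotations whose angles are partial sums of squares of the gradient. Both are standard coherent amplitude-loading strategies that crucially exploit the $\ell^2$ normalization $R=\max_{\xx}\|\nabla V(\xx)\|$ (avoiding the $\sqrt{d}$ blowup of the na\"ive Hadamard approach used for $\mathcal{A}_1$); both load the gradient with one forward call to $O_{\nabla V}$, use reversible arithmetic on the gradient register to determine the rotation angles, and uncompute with one call to $O^\dagger_{\nabla V}$. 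The Givens-staircase version has the advantage of not needing multiplexed arithmetic (each rotation is a two-level rotation conditioned only on the gradient register), while the tree version naturally allows shallower structure and arguably a tighter gate count ($\mathcal{O}(d\log d)$ arithmetic is attainable if the per-level partial sums are computed hierarchically), though your generous $\tilde{\mathcal{O}}(d^2)$ bound suffices. One small thing worth stating explicitly in a full write-up: the intermediate arithmetic ancillas holding each partial-sum angle must be uncomputed immediately after each multiplexed rotation, so that only the single flag qubit counts toward $m=1$; you gesture at this by writing that all ancilla registers return to $\ket{0}$.
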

\begin{proof}
    We introduce an indexing register with $d$ qubits (i.e., ``dim. index'') to achieve this goal. Consider the following register:
    \begin{align}
        \underbrace{\ket{0^{\otimes m_1}}}_{\text{dim. index}}\otimes \underbrace{\ket{\xx}}_{\text{mesh point}} \otimes \underbrace{\ket{0^{\otimes bd}}}_{\text{anc. grad.}} \otimes \underbrace{\ket{0}}_{\text{ancilla}}.
    \end{align}
    The sizes of the sub-registers are: $m_1 = \lceil \log_2(d)\rceil $, $d\log_2(N)$, $bd$ (recall that $b$ is fixed bit-width precision), and $1$. 
    First, we apply the oracle $O_{\nabla V}$ on the second and the third sub-register, and the state becomes: 
    \begin{align}\label{eqn:gradient-query-1}
        \ket{0^{\otimes m_1}}\otimes \ket{\xx}\otimes \ket{\partial_1 V(\xx), \partial_2 V(\xx),\dots,\partial_d V(\xx)} \otimes \ket{0}.
    \end{align}
    Next, we consider a sequence of $d$ ``controlled'' rotation gadgets, while using the first register as a counting register. 
    The first rotation is controlled on the third register and rotates the basis in the (combined) first and last sub-register:
    \begin{equation}\label{eq:first-givens-rotation}
        \ket{0^{m_1}}\otimes \ket{0} \mapsto \left(\frac{\partial_1 V(\xx)}{R}\ket{0}\ket{0} + \frac{\sqrt{R^2-|\partial_1V(\xx)|^2}}{R} \ket{0}\ket{1}\right).
    \end{equation}
    Note that the $\ket{1}\ket{0}$ state in the right-hand side of~\cref{eq:first-givens-rotation} represents the tensor product of $\ket{1}$ in the first sub-register and $\ket{0}$ in the last sub-register.
    The interpretation of $\ket{0}\ket{1}$ and other product states that follow is similar.
    In other words, this is a controlled Givens rotation operation (i.e., a rotation acting on 2 entries in the computational basis) that encodes the first partial derivative $\partial_1 V(\xx)$ to the basis $\ket{0}\ket{0}$ and the remainder in the basis $\ket{0}\ket{1}$. This rotation gadget can be efficiently implemented by controlling on the first $b$ qubits in the third sub-register.

    The second rotation is again controlled on the gradient sub-register and acts non-trivially on two basis states:
    \begin{equation}
        \ket{0}\ket{1} \mapsto \sin(\theta) \ket{0}\ket{1} + \cos(\theta) \ket{1}\ket{0}, 
    \end{equation}
    where
    \begin{equation}
        \sin(\theta) = \frac{\sqrt{R^2-|\partial_1V(\xx)|^2 - |\partial_2V(\xx)|^2}}{\sqrt{R^2-|\partial_1V(\xx)|^2}}, \quad \cos(\theta) = \frac{\partial_2 V(\xx)}{\sqrt{R^2-|\partial_1V(\xx)|^2}}.
    \end{equation}
    By applying this rotation to the state~\cref{eq:first-givens-rotation}, we encode the second partial derivative $\partial_2 V$ to the amplitude of $\ket{1}\ket{0}$:
    \begin{equation}
        \left(\frac{\partial_1 V(\xx)}{R}\ket{0}\ket{0} + \frac{\partial_2 V(\xx)}{R}\ket{1}\ket{0} +\frac{\sqrt{R^2-|\partial_1V(\xx)|^2 - |\partial_2V(\xx)|^2}}{R} \ket{0}\ket{1}\right).
    \end{equation}
    It is worth noting that the rotation angle has to be carefully computed (via classical arithmetic circuits) to keep track of the previous subnormalization factors, which requires reading the first $2b$ qubits in the third sub-register. 
    
    Iterating this process for all $\ket{j}\ket{0}$ basis for $j = 0, \dots, d-1$, we will prepare a quantum state that encodes the $j$-th partial derivative of $V$ in the amplitude of the computational basis $\ket{j-1}\ket{0}$. Since each Givens rotation has to compute the subnormalization factor using all previous gradients, the overall number of operations scales as $1 + 2+ \dots + d = \mathcal{O}(d^2)$.
    Finally, we apply the inverse of $O_{\nabla V}$ to uncompute the third register (i.e., $\ket{\nabla V(\xx)}$) and discard it thereafter.
    The resulting state is
    \begin{align}\label{eqn:gradient-query-3}
        \sum^d_{j=1}\frac{\partial_j V(\xx)}{R}\ket{j}\ket{\xx}\ket{0} + \ket{\perp},
    \end{align}
    which implements a $(\sqrt{\beta}R/2, 1)$-block-encoding of $\mathcal{A}_2$. 
\end{proof}

Finally, we can perform LCU to construct a block-encoding of $\mathcal{A}$ (the discretized $\mathbb{L}$).

\begin{prop}[Block-encoding of $\mathbb{L}$]\label{prop:block-encode-A} 
    Let $N$ and $R$ are the same as above, and $\alpha=\pi N\sqrt{d/\beta}+\sqrt{\beta}R/2$.
    We can implement an $(\alpha, 3)$-block-encoding of the matrix $\mathcal{A}$ with 2 queries to the gradient oracle $O_{\nabla V}$ (or its inverse) and an additional $\widetilde{\mathcal{O}}(d^2)$ elementary gates. Here, the $\widetilde{\mathcal{O}}(\cdot)$ notation suppresses poly-logarithmic factors in $d$ and $N$.
\end{prop}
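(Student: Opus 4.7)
The plan is to use the Linear Combination of Unitaries (LCU) technique to block-encode $\mathcal{A}=\mathcal{A}_1+\mathcal{A}_2$, combining the two sub-block-encodings already furnished by Lemmas~\ref{lem:A1-block-encode} and~\ref{lem:A2-block-encode}. Let $U_{\mathcal{A}_1}$ denote the $(\alpha_1,1)$-block-encoding of $\mathcal{A}_1$ with $\alpha_1=\pi N\sqrt{d/\beta}$, and $U_{\mathcal{A}_2}$ the $(\alpha_2,1)$-block-encoding of $\mathcal{A}_2$ with $\alpha_2=\sqrt{\beta}R/2$. Since the claimed normalization factor is $\alpha=\alpha_1+\alpha_2$, a standard LCU with two terms is exactly the right tool.

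Concretely, I would proceed as follows. First, introduce a single LCU control qubit and apply a one-qubit rotation $R_G$ that prepares $\ket{G}=\sqrt{\alpha_1/\alpha}\,\ket{0}+\sqrt{\alpha_2/\alpha}\,\ket{1}$; this costs $\mathcal{O}(1)$ gates. Next, build the selection unitary
\begin{equation*}
\mathrm{SEL}_{\mathcal{A}}=\ketbra{0}{0}\otimes U_{\mathcal{A}_1}+\ketbra{1}{1}\otimes U_{\mathcal{A}_2},
\end{equation*}
which invokes each sub-block-encoding once, controlled on the LCU qubit. Finally, apply $R_G^\dagger$ to uncompute the control register. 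A direct calculation (projecting onto $\ket{0}$ on the LCU qubit and onto $\ket{0}$ on each block-encoding ancilla) verifies that the composite circuit is a block-encoding of $(\alpha_1\mathcal{A}_1/\alpha_1+\alpha_2\mathcal{A}_2/\alpha_2)/\alpha\cdot\alpha=\mathcal{A}/\alpha$, i.e., an $(\alpha,m_{\rm tot})$-block-encoding of $\mathcal{A}$. Allocating a separate single-qubit block-encoding ancilla to each sub-block-encoding, so that the two controlled calls act cleanly without register collision, gives $m_{\rm tot}=1+1+1=3$, matching the statement.

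The resource counts follow immediately. The gradient oracle $O_{\nabla V}$ appears only inside $U_{\mathcal{A}_2}$, which invokes it exactly twice (once forward, once for uncomputation), so the total number of queries to $O_{\nabla V}$ (or $O_{\nabla V}^\dagger$) is $2$. The gate overhead decomposes as $d\cdot\polylog(d,N)=\widetilde{\mathcal{O}}(d)$ from $U_{\mathcal{A}_1}$, $\widetilde{\mathcal{O}}(d^2)$ from the Givens-rotation machinery and arithmetic inside $U_{\mathcal{A}_2}$, and $\mathcal{O}(1)$ from the LCU rotations; summing gives the claimed $\widetilde{\mathcal{O}}(d^2)$ bound.

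The main obstacle, such as it is, will be purely bookkeeping rather than anything conceptually new. Because $\mathcal{A}_1,\mathcal{A}_2\in\mathbb{C}^{dN^d\times N^d}$ are rectangular, each sub-block-encoding carries an implicit ``padding'' register of $\log_2 d$ qubits (the dimension-index register) which must be aligned consistently across both controlled calls so that the two LCU branches share the same input and output Hilbert spaces. Similarly, the block-encoding ancillae of each sub-block-encoding must be initialized in $\ket{0}$ and returned to $\ket{0}$ by the respective unitary, so that the final projection onto $\ket{0}^{\otimes 3}$ on the three ancilla qubits picks out the correct ``upper-left block.'' Once these registers are properly laid out, the LCU normalization factor is simply $\alpha_1+\alpha_2$, and no further analysis is needed.
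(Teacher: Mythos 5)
Your proposal is correct and follows essentially the same route as the paper: both decompose $\mathcal{A} = \mathcal{A}_1 + \mathcal{A}_2$, take the block-encodings from Lemmas~\ref{lem:A1-block-encode} and~\ref{lem:A2-block-encode}, and combine them with a two-term LCU (prepare--select--unprepare) to get an $(\alpha_1+\alpha_2,\,3)$-block-encoding, with the two gradient queries coming from $U_{\mathcal{A}_2}$ and the $\widetilde{\mathcal{O}}(d^2)$ gate bound dominated by that same subroutine. Your explicit description of the PREPARE state $\ket{G}$ and the SELECT unitary is a spelled-out version of the paper's invocation of the LCU lemma of Gily\'en et al., and your ancilla accounting ($1+1+1=3$) matches the paper's.
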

\begin{proof}
    Note that $\mathcal{A} = \mathcal{A}_1 +\mathcal{A}_2$, and the block-encoding of $\mathcal{A}_1$ and $\mathcal{A}_2$ can be constructed using Lemma~\ref{lem:A1-block-encode} and Lemma~\ref{lem:A2-block-encode}, respectively.
    Effectively, they can be regarded as block-encodings of $\mathcal{A}_1/\alpha_1$ and $\mathcal{A}_2/\alpha_2$, each with subnormalization factors $1$.
    Now, we invoke the Linear Combination of Unitaries (LCU) technique~\cite[Lemma 29]{gilyen2019quantum} to construct a block-encoding of $\mathcal{A}$. Due to the different subnormalization factors, we use an LCU coefficient pair $y = (\alpha_1/(\alpha_1+\alpha_2), \alpha_2/(\alpha_1+\alpha_2)$. As a result, we obtain a $(\|y\|_1, 3)$-block-encoding of
    \begin{equation}
        y_1 \left(\frac{\mathcal{A}_1}{\alpha_1}\right) + y_2 \left(\frac{\mathcal{A}_2}{\alpha_2}\right) = \frac{\mathcal{A}}{(\alpha_1+\alpha_2)},
    \end{equation}
    which corresponds to a block-encoding of $\mathcal{A}$ with a normalization factor
    \begin{equation}
        \alpha = \pi N \sqrt{d/\beta} + \sqrt{\beta}R/2.
    \end{equation}
    Note that the number of ancilla qubits in the block-encoding of $\mathcal{A}$ is $3$ since $\mathcal{A}_1$ and $\mathcal{A}_2$ each require $1$ ancilla qubit, and LCU requires an additional ancilla qubit.
\end{proof}

\subsection{Proof of Theorem~\ref{thm:main}}\label{append:proof-main}

The following result is a rigorous version of~\cref{thm:main}. 

\begin{thm}\label{thm:main-formal}
    Suppose that~\cref{assump:discretization} holds and the potential $V$ is $\ell$-smooth. 
    Assume access to a state $\ket{\phi}$ (i.e., warm start) such that $\abs{\braket{\phi}{\widetilde{\sqrt{\sigma}}}} = \Omega(1)$.
    Then, there exists a quantum algorithm that outputs a random variable $X\in \RR^d$ distributed following a probability distribution $\eta$ such that $\mathrm{TV}(\eta, \sigma) \le \epsilon$ using 
    \begin{equation}\label{eqn:main-1-query-complexity}
        \sqrt{\beta d C_{\rm PI}} \cdot \left(\beta^{-1}+\ell\right) \cdot \polylog(d, \epsilon^{-1})
    \end{equation}
    quantum queries to the gradient $\nabla V$, and $\mathcal{O}(1)$ copies of the state $\ket{\phi}$.
\end{thm}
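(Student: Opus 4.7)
}
The plan is to assemble three ingredients that have already been established earlier in the paper: the spatial discretization provided by~\cref{assump:discretization}, the block-encoding of the discretized $\mathbb{L}$ given by~\cref{prop:block-encode-A}, and the singular-value-thresholding ground state preparation of~\cref{thm:main-qsvt}, followed by the resolution booster~\cref{lem:high-accuracy-interpolation} to convert the prepared quantum state into a genuine classical sample in total variation distance. No new analytic estimate is required beyond careful bookkeeping of the normalization factors.

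First, I would truncate the problem to $\Omega = [-a,a]^d$ with $a = \mathcal{O}(\log(d/\epsilon))$ and use a uniform grid with $N^d$ nodes, $N = a\cdot\polylog(1/\epsilon)$. Under~\cref{assump:discretization}, the discretized Witten Laplacian $\widetilde{\mathcal{H}} = \widetilde{\mathbb{L}}^\dagger \widetilde{\mathbb{L}}$ satisfies the spectral condition~\cref{eqn:H-abstract-spectra} with $D = \mathrm{Gap}(\mathcal{H}) = 1/C_{\rm PI}$, and its ground state $\ket{\widetilde{\sqrt{\sigma}}}$ satisfies $\|I_N\ket{\widetilde{\sqrt{\sigma}}}-\sqrt{\sigma}\|_{L^2}\le \epsilon/4$. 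Next,~\cref{prop:block-encode-A} yields an $(\alpha,3)$-block-encoding of $\widetilde{\mathbb{L}}$ using $2$ queries to $\nabla V$, with
\[
\alpha \;=\; \pi N\sqrt{d/\beta}\;+\;\tfrac{\sqrt{\beta}}{2}R,\qquad R = \max_{\xx\in\Omega}\|\nabla V(\xx)\|.
\]
Because $V$ is $\ell$-smooth and $\Omega$ has diameter $\mathcal{O}(a\sqrt{d})$, we can bound $R \le a\ell\sqrt{d} = \widetilde{\mathcal{O}}(\ell\sqrt{d})$, which gives $\alpha = \widetilde{\mathcal{O}}\bigl(\sqrt{d}\,(\beta^{-1/2}+\ell\sqrt{\beta})\bigr)$.

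With the block-encoding and the warm start $\ket{\phi}$ in hand, I would invoke~\cref{thm:main-qsvt} directly to prepare, with $\mathcal{O}(1)$ success probability, a state $\ket{\tilde g}$ satisfying $\|\tilde g - \widetilde{\sqrt{\sigma}}\|\le \epsilon/4$ using $\mathcal{O}\bigl(\alpha\sqrt{C_{\rm PI}}\log(1/\epsilon)\bigr)$ queries to the block-encoding, i.e.\
\[
\widetilde{\mathcal{O}}\!\left(\sqrt{d\,C_{\rm PI}}\,(\beta^{-1/2}+\ell\sqrt{\beta})\right) \;=\; \widetilde{\mathcal{O}}\!\left(\sqrt{d\beta C_{\rm PI}}\,(\beta^{-1}+\ell)\right)
\]
queries to $\nabla V$ after factoring out $\sqrt{\beta}$. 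Since $I_N$ is an isometry, the triangle inequality gives $\|I_N\ket{\tilde g}-\sqrt{\sigma}\|_{L^2}\le \epsilon/2$, so~\cref{lem:high-accuracy-interpolation} converts $\ket{\tilde g}$ into a sample of a random variable $X$ with law $\eta$ satisfying $\mathrm{TV}(\eta,\sigma)\le \epsilon$, at an additional cost of only $d\cdot\polylog(1/\epsilon)$ elementary gates. This yields the stated query complexity~\cref{eqn:main-1-query-complexity}.

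The one subtle point, and the main place where care is needed, is matching the warm start assumption to the \emph{discretized} ground state $\ket{\widetilde{\sqrt{\sigma}}}$ (since~\cref{thm:main-qsvt} requires overlap with that vector, not with $\ket{\sqrt{\sigma}}$). This is routine under~\cref{assump:discretization}: the second part of the assumption guarantees $\ket{\widetilde{\sqrt{\sigma}}}$ is a well-separated eigenvector whose isometric image $I_N\ket{\widetilde{\sqrt{\sigma}}}$ is $\mathcal{O}(\epsilon)$-close to $\sqrt{\sigma}$, so the warm start hypothesis $|\langle\phi|\widetilde{\sqrt{\sigma}}\rangle|=\Omega(1)$ is consistent with the usual $L^2$ warm start with respect to $\sqrt{\sigma}$. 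Everything else — the gate counts, the $\mathcal{O}(1)$ success probability of the QSVT projection, and the $\polylog$ overhead from the resolution booster — has already been encapsulated in the cited lemmas, so the proof is essentially a clean composition.
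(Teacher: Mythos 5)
Your proposal is correct and follows essentially the same route as the paper's own proof: block-encode the discretized $\widetilde{\mathbb{L}}$ via~\cref{prop:block-encode-A}, bound the subnormalization factor $\alpha$ using the $\ell$-smoothness of $V$ and the $\mathcal{O}(\sqrt{d}\log(d/\epsilon))$ diameter of $\Omega$, invoke~\cref{thm:main-qsvt} with $D = \mathrm{Gap}(\mathcal{H}) = 1/C_{\rm PI}$ to prepare an $\epsilon/4$-close approximation of $\ket{\widetilde{\sqrt{\sigma}}}$, and finish with the isometry $I_N$, the triangle inequality, and~\cref{lem:high-accuracy-interpolation}. The algebra identifying $\sqrt{dC_{\rm PI}}\,(\beta^{-1/2}+\ell\sqrt{\beta}) = \sqrt{d\beta C_{\rm PI}}\,(\beta^{-1}+\ell)$ is right, and your closing remark about the warm start is unnecessary since the hypothesis is already stated directly in terms of $\ket{\widetilde{\sqrt{\sigma}}}$.
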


\begin{proof}
    For $j = 1,\dots,d$, we denote $\tildeL_j$ as the spatially discretized $L_j$ as given in~\cref{eqn:dft-discretized-Lj}.
    We write $\widetilde{\mathbb{L}} \coloneqq [\tildeL^\top_1, \tildeL^\top_2,\dots, \tildeL^\top_d]^\top$.
    By Definition~\ref{defn:be-general}, the block-encoding of $\mathcal{A}$ constructed in Proposition~\ref{prop:block-encode-A} provides a $(\alpha, 3)$-block-encoding of $\widetilde{\mathbb{L}}$ with the normalization factor $\alpha = \pi N\sqrt{d/\beta}+\sqrt{\beta}R/2$. 
    For a fixed accuracy parameter $\epsilon > 0$, due to~\cref{assump:discretization}, we can truncate the numerical domain to $\Omega = [-a,a]^d$ with an edge length $a=\mathcal{O}(\log(d/\epsilon))$, which implies that the diameter of $\Omega$ is $\mathcal{O}(\sqrt{d}\log(d/\epsilon))$. Then, by the $\ell$-smoothness of $V$, we have $R = \max_{x\in \Omega} \|\nabla V(x)\| = \mathcal{O}(\ell \sqrt{d}\log(d/\epsilon))$.
    It follows that
    \begin{equation}
        \alpha = \pi N\sqrt{d/\beta}+ \sqrt{\beta}R/2 = \mathcal{O}\left(N\sqrt{\beta d}\cdot \log(d/\epsilon) \cdot (\beta^{-1}+ \ell)\right).
    \end{equation}
    Recall that the discretized Witten Laplacian $\widetilde{\mathcal{H}} = \sum^d_{j=1}\tildeL^\dagger_j\tildeL_j$ has a unique ground state $\ket{\widetilde{\sqrt{\sigma}}}$.
    Given access to a warm start state $\ket{\phi}$ such that $\abs{\braket{\phi}{\widetilde{\sqrt{\sigma}}}} = \Omega(1)$, by~\cref{thm:main-qsvt}, we can prepare a quantum state $\ket{g}$ such that $\left\|g - \widetilde{\sqrt{\sigma}}\right\|\le \epsilon/4$ with
    \begin{equation}\label{eqn:main-query-pre}
        \mathcal{O}\left(\frac{\alpha}{\sqrt{\mathrm{Gap}\left(\mathcal{H}\right)}}\right) =  \sqrt{\beta d C_{\rm PI}} \cdot (\beta^{-1}+ \ell) \cdot \poly\log(d, \epsilon^{-1})
    \end{equation}
    queries to the block-encoding of $\mathcal{A}$, which amounts to the same query complexity to $O_{\nabla V}$ due to~\cref{prop:block-encode-A}. Since the success probability is $\Omega(1)$ due to warm start, the total number of copies of $\ket{\phi}$ required is $ \mathcal{O}(1)$.
    By~\cref{assump:discretization}, $\left\|I_N \ket{\widetilde{\sqrt{\sigma}}} - \sqrt{\sigma}\right\|_{L^2}\le\epsilon/4$ and $I_N$ is an isometry.
    Thus, due to the triangle inequality,
    \begin{equation}
        \left\|I_N\ket{g} - \sqrt{\sigma}\right\|_{L^2} \le \epsilon/2.
    \end{equation}
    Now, by invoking~\cref{lem:high-accuracy-interpolation}, we can realize a random variable $X \sim \eta$ such that $\mathrm{TV}(\eta, \sigma) \le \epsilon$ with the number of queries to $\nabla V$ given by~\cref{eqn:main-1-query-complexity}.
\end{proof}

\begin{rem}
For logconcave sampling, our query complexity exhibits scaling invariance with respect to the temperature $T = 1/\beta$. In particular, when the potential $V$ is $\gamma$-strongly convex, the Poincar\'e constant always reads $C_{\rm PI} = 1/\gamma$. Then, the query complexity of our quantum algorithm is 
\begin{equation}
    \sqrt{\frac{d}{\beta \gamma}} \cdot (1+ \ell \beta) \cdot \polylog(d, \epsilon^{-1}).
\end{equation}
\end{rem}

\begin{rem}\label{rem:warm-start} For the warm start condition, we have two comments:
\begin{itemize}
\item Our warm start condition $\abs{\braket{\sqrt{\rho_0}}{\sqrt{\sigma}}}=\Omega(1)$ does not require that $\rho_0$ has mass on every region where $\sigma$ does. For example, let $\sigma(x)$ be a mixture of two Gaussians: $\sigma(x)\propto \exp(-|x|^2/(2R^2))+\exp(-|x-a|^2/(2R^2))$, and $\rho(x)\propto \exp(-|x|^2/(2R^2))$. When $R\ll 1$, we have $\abs{\braket{\sqrt{\sigma}}{\sqrt{\rho_0}}} \approx 1/2=\Omega(1)$. However, it is straightforward to see that $\sigma(B_a(R))=\Omega(1)$ and $\rho_0(B_a(R))\ll 1$, showing that $\rho_0$ does not ``cover'' the entire support of $\sigma$.

\item We note that the warm-start assumption in~\cite{childs2022quantum} is strictly stronger than ours. 
In~\cite[(C.15)]{childs2022quantum}, it is assumed that the initial state $\ket{\sqrt{\rho_0}}\in L^2(\RR^d)$ satisfying $\|\rho_0\|_{L^2}=1$ is $\beta$-\textit{warm}, i.e.,
\begin{equation}\label{eqn:warm_start_stronger}
    \sup_{x} \rho_0(x)/\sigma(x) \le \beta,
\end{equation}
where $\beta > 0$ is a constant independent of $C_{\rm PI}$ and $d$.
This condition immediately implies our warm start assumption: for a $\beta$-warm initial state $\ket{\psi}$, we have
\begin{equation}
    \braket{\sqrt{\rho_0}}{\sqrt{\sigma}} = \int_{\RR^d} \sqrt{\rho_0(x)\sigma(x)}~\d x \ge \beta\int_{\RR^d} \rho_0~\d x = \beta = \Omega(1).
\end{equation}
Furthermore, we note that it suffices to assume
\begin{equation}\label{eqn:warm_start_weaker}
\min\left\{\sup_{x} \frac{\rho_0(x)}{\sigma(x)}, \sup_{x} \frac{\sigma(x)}{\rho_0(x)}\right\} \le \beta,
\end{equation}
in order to guarantee that $\braket{\sqrt{\rho_0}}{\sqrt{\sigma}}$ is at least $\beta$. As a concrete example, let $\rho_0(x)$ be a mixture of two Gaussians: $\rho_0(x)\propto \exp(-|x|^2/(2R^2))+\exp(-|x-a|^2/(2R^2))$, and let $\rho_1(x)\propto \exp(-|x|^2/(2R^2))$. When $R\ll 1$, we have $\abs{\braket{\sqrt{\rho_0}}{\sqrt{\rho_1}}} \approx 1/2$, even though $\rho_0$ is not a constant-warm start initial state for $\rho_1$ according to~\cref{eqn:warm_start_stronger}, as the warmness parameter $\beta = \Omega\left(\exp(|a|^2/R^2)/R^d\right)$ becomes exponentially large as $R\rightarrow 0$.
\end{itemize}
\end{rem}

\section{Operator-Level Acceleration of Replica Exchange: Details}\label{append:re-complexity-analysis}
\subsection{Derivation of the generalized Witten Laplacian}\label{append:symmetrize-reld}

In this section, we derive the generalized Witten Laplacian for replica exchange Langevin diffusion (RELD). Recall that, for two inverse temperatures $\beta > \beta' > 0$, the forward Kolmogorov equation of RELD is given by
\begin{equation}\label{eqn:full-generator-reld}
    \begin{aligned}
    \partial_t \rho = \mathcal{L}(\rho) \coloneqq &\underbrace{\nabla_x \cdot (\nabla_x V(x) \rho(t,x,y)) + \beta^{-1} \Delta_x \rho(t,x,y)}_{\mathcal{L}_1(\rho(t,x,y))} + \underbrace{\nabla_y \cdot (\nabla_y V(x) \rho(t,x,y)) + \beta'^{-1} \Delta_y \rho(t,x,y)}_{\mathcal{L}_2(\rho(t,x,y)} \\
    &+ \underbrace{\mu \left(s(y,x)\rho(t,y,x) - s(x,y)\rho(t,x,y)\right)}_{\mathcal{L}_s(\rho(t,x,y))}.
\end{aligned}
\end{equation}
The first two operators $\mathcal{L}_1$ and $\mathcal{L}_2$ correspond to the low- and high-temperature Langevin dynamics, respectively. The third operator $\mathcal{L}_s$ corresponds to a Metropolis-Hasting type swapping operations between the two continuous-time Markov chains, with the swapping probability given by
\begin{equation}
    s(x,y) = \min\left(1, \frac{\bsigma(y,x)}{\bsigma(x,y)}\right).
\end{equation}
For more details on the generator of the swapping mechanism ($\mathcal{L}_s$), we refer the readers to~\cite{DongTong2022}.

The generalized Witten Laplacian of RELD is obtained via the similarity transformation 
$$\mathcal{H} = - \bsigma^{-1/2}\circ \mathcal{L} \circ \bsigma^{1/2},$$ where the joint Gibbs measure is the invariant measure of RELD dynamics:
\begin{equation*}
    \bsigma(x,y) \propto \exp\left(- \beta V(x) - \beta'V(y)\right).
\end{equation*}
We can compute the generalized Witten Laplacian for each component operator in $\mathcal{L}$. When applying the similarity transformation to the operator $\mathcal{L}_1$, we obtain the Witten Laplacian corresponding to the low-temperature Langevin dynamics:
\begin{align}
    \mathcal{H}_1 = - \bsigma^{-1/2}\circ \mathcal{L}_1 \circ \bsigma^{1/2} = - \frac{1}{\beta} \Delta_x + \left(\frac{\beta}{4}\|\nabla_x V\|^2 - \frac{1}{2}\Delta_x V\right).
\end{align}
The similarity transformation also maps $\mathcal{L}_2$ to the Witten Laplacian of the high-temperature Langevin dynamics:
\begin{align}
    \mathcal{H}_2 = - \bsigma^{-1/2}\circ \mathcal{L}_2 \circ \bsigma^{1/2} = - \frac{1}{\beta'} \Delta_y + \left(\frac{\beta'}{4}\|\nabla_y V\|^2 - \frac{1}{2}\Delta_y V\right).
\end{align}
These are frustration-free Hamiltonians and can be factorized in the same way as the standard Witten Laplacian:
\begin{align}
    \mathcal{H}_1 = \sum^d_{j=1} L^{\dagger}_j L_j,\quad L_j \coloneqq -i \frac{1}{\sqrt{\beta}} \partial_{x_j} - i\frac{\sqrt{\beta}}{2}\partial_{x_j} V,
\end{align}
\begin{align}
    \mathcal{H}_2 = \sum^d_{j=1} L^{',\dagger}_j L'_j,\quad L'_j \coloneqq -i \frac{1}{\sqrt{\beta'}} \partial_{y_j} - i\frac{\sqrt{\beta'}}{2}\partial_{y_j} V.
\end{align}
The similarity transformation of the swap operator $\mathcal{L}_s$ is more involved and has been discussed in the main text (see~\cref{eqn:L_s_similarity}). The corresponding self-adjoint operator reads:
\begin{align}
    \mathcal{H}_s \coloneqq - \bsigma^{-1/2}\circ \mathcal{L}_s \circ \bsigma^{1/2} = L^\dagger_s L_s,\quad L_s = \sqrt{\frac{\mu}{2}}(I - W) S^{1/2},
\end{align}
where $\mu$ is the swapping intensity, $S$ is the point-wise multiplication defined by $S[\psi(x,y)] \coloneqq s(x,y)\psi(x,y)$ with
\begin{align}\label{eqn:s-func-multiplicative}
    s(x, y) = \min\left(1, \frac{\bsigma(y,x)}{\bsigma(x,y)}\right) = \exp\left(0 \wedge (\beta - \beta')\cdot (V(x) - V(y))\right),
\end{align}
and $W[\psi(x,y)] = \psi(y,x)$ swaps the $x$ and $y$ variables in a test function $\psi$.
It can be readily verified that the encoded Gibbs state $\ket{\sqrt{\bsigma}}$ is annihilated by $L_s$:
\begin{equation}
    \begin{aligned}
    L_s \ket{\sqrt{\bsigma}} &= \sqrt{\frac{\mu}{2}}(I-W)S^{1/2} \sqrt{\bsigma(x,y)}\\
    &= \sqrt{\frac{\mu}{2}}(I-W)\left[\sqrt{\bsigma(x,y)}1_{A_1} + \sqrt{\bsigma(y,x)}1_{A_2}\right]\\
    &= \sqrt{\frac{\mu}{2}}\left[\sqrt{\bsigma(x,y)}1_{A_1} + \sqrt{\bsigma(y,x)}1_{A_2} - \sqrt{\bsigma(y,x)}1_{A_2} - \sqrt{\bsigma(x,y)}1_{A_1}\right]\\
    &= 0,
\end{aligned}
\end{equation}
where $A_1 = \{(x,y)\in \RR^{2d}\colon \bsigma(y,x) > \bsigma(x,y)\}$ and $A_2 = \RR^d \backslash A_1$. By symmetry, $W I_{A_1} = I_{A_2}$ and $W I_{A_2} = I_{A_1}$.
It follows that the generalized Witten Laplacian of RELD can be written as
\begin{align}
    \mathcal{H} = \mathbb{L}^\dagger_{\mathrm{RE}}\mathbb{L}_{\mathrm{RE}},
    \label{eq:H_scr_RE}
\end{align}
where $\mathbb{L}_{\mathrm{RE}}$ is a block matrix consisting of $(2d+1)$ operators:
\begin{align}\label{eq:L_bb_RE}
    \mathbb{L}_{\mathrm{RE}} = [L^\top_1,\dots, L^\top_d,{L'_1}^\top,\dots, {L'_d}^\top,L^\top_s]^\top.
\end{align}
Since the generalized Witten Laplacian $\mathcal{H}$ is frustration-free and the encoded Gibbs state $\ket{\sqrt{\bsigma}}$ is annihilated by all operators $L_j$, $L'_j$, and $L_s$. It is clear that $\ket{\sqrt{\bsigma}}$ is the ground state of $\mathcal{H}$ and is annihilated by the block matrix $\mathbb{L}_{\mathrm{RE}}$. Hence, we can employ the algorithmic recipe developed in Appendix~\ref{append:meta-algorithm-qsvt} to prepare the encoded Gibbs state $\ket{\sqrt{\bsigma}}$.

\subsection{Block-encoding of $\mathbb{L}_{\rm RE}$}

In this subsection, we discuss the cost of building a block-encoding of $\mathbb{L}_{\rm RE}$. 
The operator matrix $\mathbb{L}_{\mathrm{RE}}$ can be efficiently block-encoded using both the zeroth- and first-order information $V$. To compute the swapping rate $s(x,y)$, we require access to the function value (i.e., zeroth-order) oracle of $V$:
\begin{align*}
    O_{V}\colon \ket{\xx}\ket{z}\mapsto \ket{\xx}\ket{V(\xx)},
\end{align*}
and its inverse. Here, $\ket{V(\xx)}$ is a bit-string state that encodes a $b$-bit representation of $V(\xx)$.

Similar to the Langevin dynamics case, we first perform a spatial discretization of  $\mathbb{L}_{\rm RE}$. Note that each component operator $L_j$ in $\mathbb{L}_{\rm RE}$ acts on functions in $\RR^{2d}$. We first truncate the space $\RR^{2d}$ into a finite-sized domain $\Omega = [-a,a]^{2d}$ and discretize it using a uniform grid with $N^{2d}$ points. The discretized $L_j$ operator is a matrix of dimension $N^{2d}$. For a smooth potential with certain growth conditions, to achieve a target sampling accuracy $\epsilon$ in the TV distance, we can choose $a = \mathcal{O}(\log(d/\epsilon))$ and $N = a\cdot \polylog(d/\epsilon)$ We denote $R = \max_{x\in \Omega}\|\nabla V(x)\|$ as the sub-normalization factor of the gradient $\nabla V$ in the numerical domain $\Omega$.

\begin{prop}[Block-encoding of $\mathbb{L}_{\rm RE}$]\label{prop:block-encoding-L-RE}
    Let 
    \begin{equation}\label{eqn:normalization-L-RE}
        \alpha = \left(\pi^2 N^2 d(\beta^{-1}+\beta'^{-1}) + (\beta+\beta')R^2/4 + 2\pi N \sqrt{d}R + 2\mu\right)^{1/2},
    \end{equation}
    where $N$ and $R$ are the same as above.
    We can implement an $(\alpha,10)$-block-encoding of the (discretized) $\mathbb{L}_{\mathrm{RE}}$ using $2$ queries to the gradient oracle $O_{\nabla V}$ or its inverse, $4$ queries to the function value oracle $O_V$ or its inverse, and an additional $\widetilde{\mathcal{O}}(d^2)$ elementary gates.
\end{prop}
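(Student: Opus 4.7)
The plan is to decompose $\mathbb{L}_{\mathrm{RE}} = [\mathcal{A}^\top, (\mathcal{A}')^\top, L_s^\top]^\top$ into three vertical blocks, where $\mathcal{A} = [L_1^\top,\dots,L_d^\top]^\top$ and $\mathcal{A}' = [(L_1')^\top,\dots,(L_d')^\top]^\top$ collect the low- and high-temperature Langevin pieces, and $L_s$ is the swap operator. The first two blocks are structurally identical to the $\mathcal{A}$ matrix handled by \cref{prop:block-encode-A}, applied to the $x$- and $y$-sub-registers with inverse temperatures $\beta$ and $\beta'$, respectively. This directly yields block-encodings with normalization factors $\alpha_1 = \pi N\sqrt{d/\beta}+\sqrt{\beta}R/2$ and $\alpha_2 = \pi N\sqrt{d/\beta'}+\sqrt{\beta'}R/2$, each using $3$ ancilla qubits and a constant number of gradient queries.

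For the swap block $L_s = \sqrt{\mu/2}(I-W)S^{1/2}$, I would build the block-encoding in two layers. The exchange operator $W$ is realized by $d\lceil\log_2 N\rceil$ transversal SWAP gates between the $x$- and $y$-sub-registers; combining with the identity via a single-qubit LCU block-encodes $(I-W)/2$. For the multiplicative factor $S^{1/2}$, note from~\cref{eqn:s-func-multiplicative} that $\sqrt{s(x,y)} = \exp\left(\tfrac12\min(0,(\beta-\beta')(V(x)-V(y)))\right)\in[0,1]$; by querying $O_V$ at $x$ and $y$, performing classical arithmetic to compute this scalar in an ancilla, controlling a single-qubit rotation to load it into amplitude, and finally uncomputing the intermediate registers with inverse oracle calls, one obtains a $(1,\cdot)$-block-encoding of $S^{1/2}$. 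Composing with the encoding of $(I-W)$ and absorbing the prefactor $\sqrt{\mu/2}$ into the subnormalization yields an overall normalization $\sqrt{2\mu}$ for $L_s$, consuming $4$ function-value queries.

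The final step is to stitch the three blocks into $\mathbb{L}_{\mathrm{RE}}$ via a column-vector block-encoding. Unlike standard LCU (which would give an $\ell_1$-normalization), a column stacking naturally yields an $\ell_2$-normalization: prepend a $2$-qubit index register initialized in the state $\frac{1}{\alpha}(\alpha_1\ket{0}+\alpha_2\ket{1}+\sqrt{2\mu}\ket{2})$, apply a select oracle that routes to each of the three inner block-encodings controlled on the index, and retain the index register as part of the $(2d+1)$-dimensional output rather than un-preparing it. Projecting onto the all-zero ancilla state then produces $\mathbb{L}_{\mathrm{RE}}/\alpha$ with $\alpha = \sqrt{\alpha_1^2+\alpha_2^2+2\mu}$, and a direct expansion of the squares recovers~\cref{eqn:normalization-L-RE}. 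The ancilla counts sum to $3+3+2+2=10$ as claimed. The main obstacle I expect is the careful bookkeeping of the $\ell_2$-style column stitching—verifying that the state-preparation amplitudes exactly compensate the disparate inner subnormalizations—combined with implementing the $S^{1/2}$ arithmetic reversibly (in particular the $\min$ and $\exp$ primitives) while keeping the non-oracle overhead at $\widetilde{\mathcal{O}}(d^2)$ elementary gates.
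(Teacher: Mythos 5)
Your construction follows the same path as the paper: decompose $\mathbb{L}_{\mathrm{RE}}$ into the two Langevin blocks and the swap block, block-encode each with normalizations $\alpha_1 = \pi N\sqrt{d/\beta}+\sqrt{\beta}R/2$, $\alpha_2 = \pi N\sqrt{d/\beta'}+\sqrt{\beta'}R/2$, $\alpha_3=\sqrt{2\mu}$, assemble them via a $2$-qubit select register, and account $3+3+2+2=10$ ancillas; expanding $\alpha_1^2+\alpha_2^2+\alpha_3^2$ indeed reproduces the stated $\alpha$. Your explicit remark that the select register should be \emph{retained} as part of the $(2d+1)$-row index (producing the $\ell_2$-type normalization $\alpha=\sqrt{\alpha_1^2+\alpha_2^2+\alpha_3^2}$) is actually cleaner than the paper's phrasing, which writes the assembled circuit as $\bm{P}^\dagger\UU\bm{P}$ and speaks of ``uncomputing'' the preparation state; if one literally un-prepared and projected that register onto $\ket{00}$, one would obtain the weighted sum $\sum_c\alpha_c A_c/\alpha^2$, not the column-stacked $\mathbb{L}_{\rm RE}/\alpha$. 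Note, however, that once you retain that register as output it should not also be counted in the $m$ of Definition~\ref{defn:be-general} (which only counts qubits projected onto $\ket{0^m}$); the same bookkeeping wrinkle is present in the paper, and neither affects the complexity claim. One small point you gloss over: the stated count of $2$ queries to $O_{\nabla V}$ (rather than $4$) holds because the $x$- and $y$-replica rotation gadgets are routed by the select register and never fire simultaneously, so the gradient oracle and its uncomputation can be applied once controlled on the select index; your phrase ``a constant number of gradient queries'' hides but does not contradict this.
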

\begin{proof}
    For $j \in [d]$, the block-encoding of $L_j$ and $L'_j$ can be constructed using a similar method as described in~\cref{prop:block-encode-A}. It is worth noting that, while we have two replicas and each of them requires an independent gradient $\nabla V$, they can still be achieved by 2 queries of $P_{\nabla V}$. This is because we can apply two rounds of ``controlled'' rotation gadgets, each for a replica system, with the same gradient information in the ancilla sub-register (see Eq.~\cref{eqn:gradient-query-1}-\cref{eqn:gradient-query-3}). The block-encodings of $L_j$ and $L'_j$ have subnormalization factors
    \begin{equation}
        \alpha_1 = \pi N \sqrt{d}\beta^{-1/2}+\beta^{1/2}R/2,\quad \alpha_2 = \pi N \sqrt{d}\beta'^{-1/2}+\beta'^{1/2}R/2,
    \end{equation}
    respectively, where $R = \max_{x\in \Omega}\|V(x)\| \le \ell \sqrt{d}\log(d)$. Each set of operators $\{L_j\}^d_{j=1}$ and $\{L'_j\}^d_{j=1}$ requires $m_1 = 3$ ancilla qubits. According to~\cref{prop:block-encode-A}, the number of elementary gates is $\tilde{\mathcal{O}}(d^2)$.

    Now, we discuss how to block-encode the operator $L_s = \sqrt{\mu/2}(I-W)S^{1/2}$. Recall that $S$ is the multiplication operator given by $[S\psi](x,y) \coloneqq s(x,y)\psi(x,y)$.
    Note that the function $s(x,y)$ in~\cref{eqn:s-func-multiplicative} can be computed by $2$ queries to the function value of $V$ (one for $V(x)$, another for $V(y)$). 
    We can construct an arithmetic circuit that computes
    \begin{align}
        \ket{x,y}\ket{0,0,0} \mapsto \ket{x,y}\ket{V(x),V(y),\sqrt{s(x,y)}}.
    \end{align}
    Since $|s(x,y)|\le 1$, we can introduce an extra ancilla qubit and apply a sequence of rotations to generate an amplitude for the $\ket{0}$ state that equals the value of $\sqrt{s(x,y)}$:
    \begin{equation}
        \ket{x,y}\ket{V(x),V(y),\sqrt{s(x,y)}}\ket{0} \mapsto \ket{x,y}\ket{V(x),V(y),\sqrt{s(x,y)}} \left(\sqrt{s(x,y)}\ket{0} + \sqrt{1-s(x,y)}\ket{1}\right).
    \end{equation}
    Finally, we uncompute the middle register (i.e., $\ket{V(x), V(y), \sqrt{s(x,y)}}$) by inverting the arithmetic circuit computing $\sqrt{s(x,y)}$ and another 2 uses of the inverse $O_V$. This realizes a $(1,1)$-block-encoding of the multiplication operator $S^{1/2}$: 
    \begin{equation}
        \ket{x,y}\ket{0} \mapsto \sqrt{s(x,y)}\ket{x,y}\ket{0} + \ket{\perp}\ket{1},
    \end{equation}
    with a total number of $\poly(b)$
    elementary gates, where $b$ is the fixed-point precision to represent real numbers (e.g., $V(x)$, $\sqrt{s(x,y)}$).
    The swap operator $W$ can be implemented by $d\log_2(N)$ swap gates without ancilla qubits, where $N$ is the discretization number and $\log_2(N)$ is the number of qubits representing the quadrature states $\ket{x}$ (or $\ket{y}$). Using the standard Linear Combination of Unitaries (LCU)~\cite[Lemma 52]{gilyen2019quantum} , we can implement a $(1,1)$-block-encoding of $I-W$. Overall, we can implement a block-encoding of $L_s$ with $4$ queries to $O_V$, $\mathcal{O}(d\log(N))$ elementary gates, and $m_2 = 2$ ancilla qubits (one for the phase $\sqrt{s(x,y)}$, another for LCU). The subnormalization factor of this block-encoding is 
    \begin{equation}
        \alpha_3 = \sqrt{2\mu}.
    \end{equation}
    We can make all elementary gates in the block-encoding of $L_s$ to be controlled on $\ket{2d+1}$, then by concatenating this ``select-($2d+1$)'' oracle with a basis change that maps $\ket{0}$ to $\ket{2d+1}$, we obtain a unitary that maps $\ket{0^{m_3}}\ket{\psi}$ to $\ket{2d+1}(L_s/\alpha_3)\ket{\psi}+\ket{\perp}$, i.e., it block-encodes $L_s$ (with subnormalization factor $\alpha_3$) in the $(2d+1)$-th block in the first column.
    
    The last step is to construct the full block-encoding of $\mathbb{L}_{\mathrm{RE}}$ by stacking the above three parts together. We denote the the block-encodings for $\{L_j\}^d_{j=1}$, $\{L'_j\}^d_{j=1}$, and $L_s$ by $U_1$, $U_2$, and $U_s$, respectively.
    Now, we add $2$ ancilla qubits as the control register and form a ``select'' oracle of the form:
    \begin{equation}
        \UU = \ketbra{00}{00}\otimes U_1 + \ketbra{01}{01}\otimes U_2 + \ketbra{10}{10}\otimes U_s + \ketbra{11}{11}\otimes I.
    \end{equation}
    Now, we define a state $\ket{\bm{\alpha}} = \frac{\alpha_1}{\alpha}\ket{00}+\frac{\alpha_2}{\alpha}\ket{01}+\frac{\alpha_3}{\alpha}\ket{10}$ with a normalization factor $\alpha = \sqrt{\alpha^2_1+\alpha^2_2+\alpha^2_3}$, as given in~\cref{eqn:normalization-L-RE}.
    Note that $\ket{\bm{\alpha}}$ is a 2-qubit state and can be prepared by a circuit with $\mathcal{O}(1)$ elementary rotation gates. We denote this state preparation circuit as $\bm{P}$.  
    By applying the unitary operator $\UU$ on the state $\ket{\bm{\alpha}}\ket{0^{m}}\ket{\psi}$ with $m = 2m_1+m_2 = 8$, we end up with:
    \begin{equation}
        \UU \ket{\bm{\alpha}}\ket{0^{m}}\ket{\psi} = \ket{\bm{\alpha}}\ket{0^m}\left(\frac{1}{\alpha}\sum^{2d+1}_{j=1}A_j \ket{\psi}\right) + \ket{\perp},\quad A_j = \begin{cases}
            L_j, & j \in \{1,\dots,d\}\\
            L'_j, & j\in\{d+1,\dots, 2d\} \\
            L_s, & j \in \{2d+1\}.
        \end{cases}
    \end{equation}
    Finally, by uncomputing $\ket{\bm{\alpha}}$ by $\bm{P}^\dagger$, the circuit $\bm{P}^\dagger \UU \bm{P}$ implements a $(\alpha, 10)$-block-encoding of $\mathbb{L}_{\rm RE}$.
    In total, we have used $2$ queries to $O_{\nabla V}$, 4 queries to $O_V$, and an additional $\widetilde{\mathcal{O}}(d^2)$ elementary gates.
\end{proof}

\subsection{Proof of Theorem~\ref{thm:main-2}}\label{append:proof-main-2}

Similar to the previous section, we make the following assumption to ensure the efficiency of the spatial discretization of the generalized Witten Laplacian of RELD:

\begin{assump}[Spatial discretization of RELD]\label{assump:discretization-2}
    Let $V\colon \RR^d \to \RR$ be a smooth potential function, and $\bsigma \propto e^{-\beta V(x)-\beta'V(y)}$ be the joint Gibbs measure. 
    For an arbitrary $\epsilon>0$, we assume that we can choose $a = \mathcal{O}(\log(d/\epsilon))$ and $N = a\cdot\poly\log(d/\epsilon)$ such that the followings hold:
    \begin{enumerate}
        \item Let $\widetilde{\mathbb{L}}_{\rm RE}$ be discretized $\mathbb{L}_{\rm RE}$ (as in~\cref{eq:L_bb_RE}) described in~\cref{prop:block-encoding-L-RE}. The operator $\widetilde{\mathcal{H}}_{\rm RE} = \widetilde{\mathbb{L}}^\dagger_{\rm RE} \widetilde{\mathbb{L}}_{\rm RE}$ (i.e., discretized Witten Laplacian of RELD) has a ground state $\ket{\widetilde{\sqrt{\bsigma}}}$ that satisfies $\left\|I_N \ket{\widetilde{\sqrt{\bsigma}}} - \sqrt{\bsigma}\right\|_{L^2} \le \epsilon/4$, 
        \item Compared to the Witten Laplacian of RELD: $\mathcal{H}_{\rm RE} = \mathbb{L}^\dagger_{\rm RE}\mathbb{L}_{\rm RE}$, the smallest eigenvalue of $\widetilde{\mathcal{H}}_{\rm RE}$ is no greater than $\mathrm{Gap}\left(\mathcal{H}_{\rm RE}\right)/16$, and the second smallest eigenvalue of $\widetilde{\mathcal{H}}$ is no smaller than $9 \mathrm{Gap}\left(\mathcal{H}_{\rm RE}\right)/ 16$. 
    \end{enumerate}
\end{assump}

The following result is a detailed version of~\cref{thm:main-2}. 

\begin{thm}\label{thm:main-2-formal}
    Suppose that~\cref{assump:discretization-2} holds and the potential $V$ is $\ell$-smooth. 
    Assume access to a state $\ket{\bphi}$ (i.e., warm start) such that $\abs{\braket{\bphi}{\widetilde{\sqrt{\bsigma}}}} = \Omega(1)$.
    Let $\mathcal{L}$ be the generator of RELD as defined in~\cref{eqn:full-generator-reld}.
    Then, there exists a quantum algorithm that outputs a random variable $X\in \RR^d$ distributed according to $\eta$ such that $\mathrm{TV}(\eta, \sigma) \le \epsilon$ using 
    \begin{equation}\label{eqn:main-2-query-complexity}
        \sqrt{\frac{d}{\mathrm{Gap}\left(\mathcal{L}^\dagger\right)}} \cdot \left(1/\beta'+\ell^2\beta+\mu/d\right)^{1/2} \cdot \polylog(d, \epsilon^{-1})
    \end{equation}
    quantum queries to the function value $V$ and gradient $\nabla V$, respectively.
\end{thm}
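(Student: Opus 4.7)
The plan is to mirror the proof of \cref{thm:main-formal} almost verbatim, replacing the standard Witten Laplacian by the generalized Witten Laplacian $\mathcal{H}_{\mathrm{RE}} = \mathbb{L}^\dagger_{\mathrm{RE}}\mathbb{L}_{\mathrm{RE}}$ of RELD constructed in \cref{append:symmetrize-reld}. All three required ingredients are already at hand: \cref{prop:block-encoding-L-RE} supplies a block-encoding of the discretized $\widetilde{\mathbb{L}}_{\mathrm{RE}}$, \cref{assump:discretization-2} controls the spatial discretization, and \cref{thm:main-qsvt} prepares an approximate right singular vector at the zero singular value through singular value thresholding.

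First, I would invoke \cref{prop:block-encoding-L-RE} to obtain an $(\alpha,10)$-block-encoding of $\widetilde{\mathbb{L}}_{\mathrm{RE}}$ with
\[
\alpha^2 = \pi^{2} N^{2} d(\beta^{-1}+\beta'^{-1}) + (\beta+\beta')R^{2}/4 + 2\pi N\sqrt{d}\,R + 2\mu.
\]
Using $\ell$-smoothness and $a = \Or(\log(d/\epsilon))$, the gradient bound $R = \Or(\ell\sqrt{d}\log(d/\epsilon))$ holds on $\Omega=[-a,a]^{2d}$. Substituting $N = a\cdot\polylog(1/\epsilon)$, and using $\beta > \beta'$ to absorb the subdominant terms ($d/\beta \le d/\beta'$, $\beta'\ell^2 d \le \beta\ell^2 d$, and $2\pi N\sqrt{d}R = \widetilde{\Or}(\ell d)$ absorbed into $\ell^{2}\beta d + d/\beta'$ by AM--GM), one obtains
\[
\alpha \;=\; \sqrt{d}\cdot \bigl(1/\beta'+\ell^{2}\beta+\mu/d\bigr)^{1/2}\cdot \polylog(d,\epsilon^{-1}).
\]

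Next, \cref{lem:eigen-singular} identifies the ground state $\ket{\widetilde{\sqrt{\bsigma}}}$ of $\widetilde{\mathcal{H}}_{\mathrm{RE}}$ with the right singular vector of $\widetilde{\mathbb{L}}_{\mathrm{RE}}$ at the smallest singular value, and part (2) of \cref{assump:discretization-2} guarantees a singular value gap of order $\sqrt{\mathrm{Gap}(\mathcal{H}_{\mathrm{RE}})} = \sqrt{\mathrm{Gap}(\mathcal{L}^{\dag})}$ (the similarity transformation $\mathcal{H}_{\rm RE} = -\bsigma^{-1/2}\circ\mathcal{L}\circ\bsigma^{1/2}$ preserves the spectrum). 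Given the warm-start assumption $\abs{\braket{\bphi}{\widetilde{\sqrt{\bsigma}}}} = \Omega(1)$, \cref{thm:main-qsvt} prepares a state $\ket{g}$ with $\|g-\widetilde{\sqrt{\bsigma}}\|\le \epsilon/4$ using
\[
\Or\bigl(\alpha/\sqrt{\mathrm{Gap}(\mathcal{L}^{\dag})}\bigr) = \sqrt{d/\mathrm{Gap}(\mathcal{L}^{\dag})}\cdot\bigl(1/\beta'+\ell^{2}\beta+\mu/d\bigr)^{1/2}\cdot \polylog(d,\epsilon^{-1})
\]
queries to the block-encoded $\widetilde{\mathbb{L}}_{\mathrm{RE}}$. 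Each such query decomposes into $2$ uses of $O_{\nabla V}$ and $4$ uses of $O_V$ via \cref{prop:block-encoding-L-RE}, yielding exactly the counts in \cref{eqn:main-2-query-complexity}.

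Finally, I would convert the $L^{2}$-error on the encoded joint state into a TV bound on the marginal random variable. Part (1) of \cref{assump:discretization-2} together with the triangle inequality gives $\|I_{N}\ket{g}-\sqrt{\bsigma}\|_{L^{2}}\le\epsilon/2$, and the resolution booster of \cref{lem:high-accuracy-interpolation}, applied in ambient dimension $2d$, then produces a random variable $(X,Y)\sim\bbeta$ with $\mathrm{TV}(\bbeta,\bsigma)\le\epsilon$ using an additional $d\cdot\polylog(1/\epsilon)$ elementary gates. Since TV distance is monotone under Borel pushforwards, discarding the $Y$-register leaves a marginal law $\eta$ of $X$ satisfying $\mathrm{TV}(\eta,\sigma)\le\mathrm{TV}(\bbeta,\bsigma)\le\epsilon$. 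The only step that requires genuine attention (rather than citation) is the bookkeeping in the second paragraph: one must verify that the three disparate contributions to $\alpha^{2}$ (Langevin kinetic, Langevin potential, and the swap intensity $2\mu$) collapse, after the $\sqrt{d}$ factor is pulled out, into precisely $1/\beta'+\ell^{2}\beta+\mu/d$ up to logarithmic factors absorbed into $\polylog(d,\epsilon^{-1})$.
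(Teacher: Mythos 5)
Your proposal follows the paper's proof essentially verbatim: invoke \cref{prop:block-encoding-L-RE} for the block-encoding and its normalization factor $\alpha$, bound $R = \Or(\ell\sqrt{d}\log(d/\epsilon))$ by $\ell$-smoothness to collapse $\alpha$ to the stated form, apply \cref{thm:main-qsvt} under \cref{assump:discretization-2} and the warm-start hypothesis, then transfer to TV distance via \cref{lem:high-accuracy-interpolation} in dimension $2d$ and take the marginal. Your AM--GM absorption of the cross term $2\pi N\sqrt{d}R$ into the two quadratic terms is a slightly more explicit rendering of the same bookkeeping the paper performs, and is correct.
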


\begin{proof}
    By~\cref{prop:block-encoding-L-RE}, we can construct a $(\alpha, 10)$-block-encoding of $\widetilde{\mathbb{L}}_{\rm RE}$ with a normalization factor 
    \begin{equation}
        \alpha \le 2\pi N \sqrt{d}\beta^{-1/2}+\beta^{1/2}R +\sqrt{2\mu}.
    \end{equation}
    Since $V$ is $\ell$-smooth and is restricted to the box-shaped numerical domain $\Omega$, we have $R = \max_{x\in \Omega} \|\nabla V(x)\| = \mathcal{O}(\ell \sqrt{d}\log(d))$. It turns out that
    \begin{equation}
        \alpha = \mathcal{O}\left(N\sqrt{d}\cdot \log(d) \cdot \left(1/\beta'+\ell^2\beta +\mu/d\right)^{1/2}\right).
    \end{equation}
    Recall that the right singular vector of $\widetilde{\mathbb{L}}_{\rm RE}$ associated with the smallest singular value is denoted by $\ket{\widetilde{\sqrt{\bsigma}}}$.
    Given access to a warm start state $\ket{\bphi}$ such that $\abs{\braket{\bphi}{\sqrt{\bsigma}}} = \Omega(1)$,
    by~\cref{thm:main-qsvt}, we can prepare a quantum state $\ket{\mathbf{g}}$ that is $\epsilon/4$-close to $\ket{\widetilde{\sqrt{\bsigma}}}$ with
    \begin{equation}\label{eqn:main-query-pre-re}
        \mathcal{O}\left(\frac{\alpha}{\sqrt{\mathrm{Gap}\left(\mathcal{L}^\dagger\right)}}\right) =  \sqrt{\frac{d}{\mathrm{Gap}\left(\mathcal{L}^\dagger\right)}} \cdot \left(1/\beta'+\ell^2\beta+\mu/d\right)^{1/2} \cdot \poly\log(\epsilon^{-1})\cdot \log(d)
    \end{equation}
    queries to the block-encoding of $\widetilde{\mathbb{L}}_{\rm RE}$, which amounts to the same query complexity to $O_V$ and $O_{\nabla V}$ due to~\cref{prop:block-encoding-L-RE}. By~\cref{assump:discretization-2} and the fact that $I_N$ is an isometry, we invoke the triangle inequality to show that
    \begin{equation}
        \|I_N \ket{\mathbf{g}} - \sqrt{\bsigma}\|_{L^2} \le \epsilon/2.
    \end{equation}
    Finally, by~\cref{lem:high-accuracy-interpolation}, we can realize a random variable $Z = (X,Y) \sim \bbeta$ 
    such that $\mathrm{TV}(\bbeta, \bsigma) \le \epsilon$ with the number of queries to $\nabla V$ and $V$ given by~\cref{eqn:main-2-query-complexity}. This immediately implies that the distribution of $X \sim \eta$, given by the marginal law of $\bbeta$, satisfies $\mathrm{TV}(\eta, \sigma) \le \epsilon$.
    This concludes the proof.
\end{proof}

\section{Details of Lindbladian-Based Warm-Start Preparation}\label{append:lindblad-warm-start}
In this section, we establish a direct connection between the Fokker--Planck equation~\cref{eqn:FKPK} and the Lindblad master equation~\cref{eqn:lindbladian}. In particular, by representing the density operator as a function in the form $\rho(t,x,y) \coloneqq \langle x |\rho(t)|y\rangle$, we show that the ``diagonal element'' of this function, namely, $\rho(t,x,x)$, solves the Fokker--Planck equation.

To prove this result, we first introduce some technical notations.
Let $w(x)\colon \RR^d \to \RR$ be a real-valued function, we denote $\hat{w}$ as the corresponding multiplicative operator acting on a test function $\varphi$ pointwisely, i.e., $(\hat{w}\varphi)(x) = w(x)\varphi(x)$.

\begin{lem}\label{lem:weak-convergence}
    Let $\rho(t)$ be the solution to the Lindblad master equation~\cref{eqn:lindbladian}, and $p(t,x)$ be the solution to~\cref{eqn:FKPK} with initial condition $p(0,x)$ given by the diagonal of $\rho(0)$. 
    Then, for any smooth function $w(x)\colon \RR^d \to \RR$ and $t \ge 0$, we have
    \begin{equation}\label{eqn:observable-equation}
    \Tr[\hat{w}\rho(t)] =   \int w(x) p(t, x)\d x.
\end{equation}
\end{lem}
\begin{proof}
    Since $p(0,x)$ matches the distribution given by $\rho(0)$, we have 
    \begin{align}
        \Tr[\hat{w}\rho(0)]=\int w(x)\rho(0,x,x)\d x = \int w(x)p(0,x)\d x,
    \end{align}
    which implies that~\cref{eqn:observable-equation} holds at $t=0$. Next, we note that
\begin{equation}
    \partial_t \Tr[\hat{w}\rho(t)] = \Tr\left[\hat{w} \mathfrak{L} [\rho(t)]\right] = \Tr\left[ \mathfrak{L} ^\dagger [\hat{w}]\rho(t)\right],
\end{equation}
where $\mathfrak{L}^\dagger$ denotes the adjoint of the Lindbladian operator $\mathfrak{L}$: for any observable $O$, we have
\[\mathfrak{L}^\dagger [O] = \sum^d_{j=1} \left(2 L^\dagger_j O L_j - \{L^\dagger_j L_j, O\}\right).\]
Recall that $ L_j =  -i\frac{1}{\sqrt{\beta}} \partial_{x_j} - i\frac{\sqrt{\beta}}{2}\partial_{x_j}V $,  $ L_j^\dagger  =  -i\frac{1}{\sqrt{\beta}} \partial_{x_j} + i\frac{\sqrt{\beta}}{2}\partial_{x_j}V $. 
For a fixed $j \in [d]$, direct calculation shows that
\begin{align}
    L^\dagger_j \hat{w} L_j - \frac{1}{2}\{L^\dagger_j L_j, \hat{w}\} 
    = 
    - \frac{1}{2}(\partial_{x_j} V)(\partial_{x_j} {w}) + \frac{1}{2\beta}(\partial_{x_j}^2 {w}).
\end{align}
Therefore, we have
\begin{equation}
    \mathfrak{L} ^\dagger [\hat{w}] = - \nabla V \cdot \nabla w + \frac{1}{\beta }\Delta w.
\end{equation}
It follows that 
\begin{align}\label{eqn:weak-sol}
    \partial_t \Tr[\hat{w}\rho(t)] &
    = \int \left(- \nabla V \cdot \nabla w + \frac{1}{\beta }\Delta w  \right)\rho(t,x,x)\d x \\
    &=  \int w(x)\left(\nabla \cdot ( p(t,x)\nabla V) + \frac{1}{\beta} \Delta p(t,x)\right)\d x,
\end{align}
where we write $p(t,x) = \rho(t,x,x)$ and the last step uses integration by parts. Comparing~\cref{eqn:weak-sol} with the Fokker--Planck equation~\cref{eqn:FKPK}, we find that $p(t,x)$ solves the Fokker--Planck equation and thus \cref{eqn:observable-equation} holds.
\end{proof}

\section{Details of Numerical Experiments} \label{append:detail_num}

\subsection{Numerical simulation of quantum algorithms} 
\label{append_meth_1}

In this subsection, we provide some details of the numerical simulation of our quantum algorithms on a classical computer. All experiments were conducted using MATLAB 2024b on a machine equipped with Intel(R) Core(TM) i7-14700K with 64 GB of memory.

\paragraph{Spatial discretization of $\mathbb{L}$.}
In our experiments, differential operators (e.g., $\partial_x$) are represented as pseudo-differential operators and computed using Discrete Fourier Transform (DFT).
With a regular mesh grid, operators defined as point-wise multiplication (e.g., $\partial_j V$ in LD or $S^{1/2}$ in RELD) are discretized and represented as diagonal matrices. For details, see the discussions in~\cref{append:spatial_discretize_Lj}.

\paragraph{Numerical implementation of singular value thresholding.}
The numerical implementation of the quantum singular value transformation (QSVT) involves approximating the desired operator transformations via polynomial approximations. Specifically, we construct a suitable Chebyshev polynomial summation $P(x)$ to approximate the relevant spectral filtering function by Fourier-Chebyshev expansion method~\cite[Section III.3]{dong2021efficient}:
To approximate a smooth real-valued function \( F(x) \) defined on the interval \([-1, 1]\), we express it in terms of Chebyshev polynomials of the first kind, i.e., 
\begin{align}
    F(x) \approx P(x) = \sum_{j=0}^{d} c_j T_j(x),\
    \label{eqn:FFT_cheby}
\end{align}
where \( T_j(x) \) denotes the Chebyshev polynomial of degree \( j \). The coefficients \( c_j \) can be efficiently computed using the fast Fourier transform (FFT) and a quadrature-based formula:
\begin{align}
    c_j \approx \frac{(2 - \delta_{j0})}{2K} (-1)^j \sum_{l=0}^{2K - 1} F(-\cos\theta_l) e^{ij\theta_l} 
\end{align}
Here, the quadrature nodes are given by \( \theta_l = \pi l / K \), where \( l \) ranges from \( 0 \) to \( 2K - 1 \), and \( K \) represents the total number of quadrature points. In our case, the spectral filtering function is a non-smooth rectangle function given by~\cref{eqn:rectangular-function}. Since the Fourier-Chebyshev expansion method applies to smooth functions, we first pre-process the rectangular filter function using cosine-based smoothing near transition points, yielding a smoothened filter function $F(x)$:
\begin{equation}
    \begin{aligned}
        F(x) = 
        \begin{cases}
        0, &x\in[-1,-s_1-\delta ] \cup [s_2+\delta ,1]\\
        \frac{1}{2} \left[ 1 - \cos\left( \pi \frac{x - (-s_1 - \delta)}{\delta} \right) \right], & x \in [-s_1 - \delta, -s_1], \\[6pt]
        1, & x \in [-s_1, s_2], \\[6pt]
        \frac{1}{2} \left[ 1 + \cos\left( \pi \frac{x - s_2}{\delta} \right) \right], & x \in [s_2, s_2 + \delta].
        \end{cases}
    \end{aligned}
\end{equation}
This new filter function $F(x)$ and its polynomial approximation $P(x)$ given by~\cref{eqn:FFT_cheby} are good approximations to the rectangular filter when $\delta $ is small. Meanwhile, the polymomial $P(x)$ satisfies~\cref{lem:rectangle-polynomial} after appropriately rescaling and shifting $P(x)$ to ensure $|P(x)|\leq 1$. 

To numerically implement singular value thresholding, we first compute the singular value decomposition of $\mathbb{L} = W\Sigma V^\dagger$ and apply the polynomial filtering function $P(x)$ to the singular values $\Sigma$. The desired (approximate) projector onto the encoded Gibbs state is given by $\tilde{\Pi} = V P(\Sigma) V^\dagger$, as described in~\cref{append:proof-main}. Finally, we apply the operator $\tilde{\Pi}$ to the warm start state $\ket{\phi}$ to obtain the desired state, which is a close approximation of the true Gibbs state $\ket{\sqrt{\sigma}}$.

\paragraph{Limitations of classical simulation.}
Despite these numerical strategies, classical simulations of the quantum algorithm remain fundamentally limited by the exponential growth of memory as dimension increases. In our experiments, the size of the component operators in the discretized $\mathbb{L}$ is $N^d$ for LD and $N^{2d}$ for RELD. In both cases, the matrix size grows exponentially with the problem dimension $d$. The numerical simulation becomes memory-intensive when involving (pseudo)differential operators of dimension $d \ge 3$. Thus, we implement numerical simulations of our quantum algorithms for two test problems: the 2-dimensional M\"uller-Brown potential (for LD) and a 1-dimensional non-convex potential (for RELD).

\subsection{Details of  quantum-accelerated Langevin dynamics} \label{append:detail_QALD}
The analytic expression of M\"uller-Brown potential~\cite[Footnote 7]{MuellerBrown1979} is given by
\begin{equation}
    V(x,y) = \sum_{k=1}^{4} A_k \exp\left[a_k(x - x_k^0)^2 + b_k(x - x_k^0)(y - y_k^0) + c_k(y - y_k^0)^2\right],
\end{equation}
with the constants:
\begin{align*}
A &= (-200, -100, -170, 15), & a &= (-1, -1, -6.5, 0.7), \\
b &= (0, 0, 11, 0.6), & c &= (-10, -10, -6.5, 0.7), \\
x^0 &= (1, 0, -0.5, -1), & y^0 &= (0, 0.5, 1.5, 1).
\end{align*}
The landscape of the potential is shown in~\cref{fig:muller-brown}. 

\begin{figure}[ht!]
    \centering
    \includegraphics[width=0.5\linewidth]{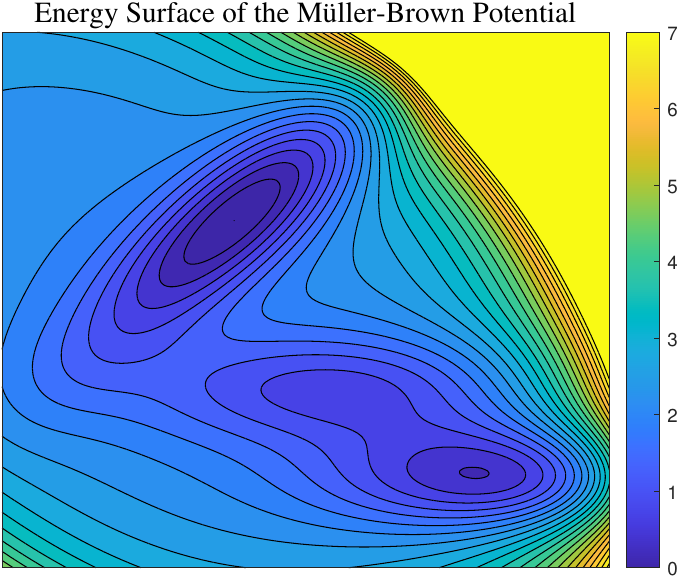}
    \caption{Energy surface of M\"uller-Brown potential  (with truncated color scale).}
    \label{fig:muller-brown}
\end{figure}

In numerical experiments of MALA, we select a time step size \(\Delta t = 10^{-3}\) to ensure numerical stability and adequate accuracy in sampling trajectories. The total number of generated samples is set to \(3\times10^4\), sufficiently large to approximate equilibrium distributions accurately. We explore various inverse temperature parameters $\beta \in \{0.4, 0.6, 0.8\}$. 
The MATLAB \texttt{histcounts2} function is utilized here to analyze the sampled data, which partitions the sampled values into $N=50$ discrete bins and counts the frequency of samples within each bin to get distribution. The frequency is used as an estimation of the sample distribution.

For quantum-accelerated Langevin dynamics, we set the number of grids to \(N = 50\) (or 50 Fourier modes) and try different degrees of polynomial to approximate the rectangle function~\cref{eqn:rectangular-function}. 
A subtle but important point is that we found $\mathrm{Gap}(\mathbb{L})/\alpha$ to be too small to be effectively filtered. This is because the normalization factor $\alpha$ becomes very large due to the influence of $R = \max_{\xx \in \Omega} \|\nabla V(\xx)\|$. As shown in~\cref{fig:muller-brown}, $R$ is primarily determined by the gradient in the top-right yellow region of the M\"uller-Brown potential. However, it is unlikely that MALA or quantum-accelerated Langevin dynamics would explore this region, as the potential values there are extremely high. To mitigate the effect of the normalization factor, we slightly modifiy the potential to $\max(V(x),5)$. This effectively reduces $\alpha$, which reduces the polynomial degree of the filter function without affecting the quality of the Gibbs state. Then, we follow the methodology described in~\cref{append_meth_1} to implement quantum-accelerated Langevin dynamics. The results are presented in~\cref{fig:qsvt-muller-brown}.

\subsection{Details of quantum-accelerated replica exchange} \label{append:detail_QARE}

In our numerical implementation of quantum-accelerated RELD, we set the number of grids (or Fourier modes) to be $N=150$.
We use two replicas: one corresponds to a low-temperature chain with inverse temperature \(\beta\), and another corresponds to a high-temperature chain with inverse temperature \(\beta'\). We fix the inverse temperature of the high-temperature chain \(\beta' = 1\) and vary \(\beta\) for the low-temperature chain from $2$ to $10$. The swapping intensity is also fixed as \(\mu = 1\). To compute the spectral gaps, we perform singular value decomposition on the operator \(\mathcal{H}\) (as in~\cref{eq:reld-witten-laplacian}) for classical RELD, and on \(\mathbb{L}_{\mathrm{RE}}\) (as in~\cref{eqn:L_RE}) for quantum-accelerated RELD.

\subsection{Details of Lindbladian-based warm-start preparation}

This part numerically solves \cref{eqn:lindbladian} when the potential function is \cref{eq:1D_hard_2850}. We adopt the spatial discretization method described in \cref{append_meth_1} with the number of grids (or Fourier modes) $N=50$. The time step size is $\Delta t=1\times 10^{-4}$, and the time propagation is conducted using the 4th-order Runge-Kutta method. The initial distribution is taken to be a very sharp Gaussian centered at $x=-1.7$, whose initial overlap with the Gibbs distribution is smaller than $3\times 10^{-2}$. Here, we set the inverse temperature $\beta$ from $2$ to $10$, which covers the range of the numerical experiment shown in~\cref{fig:re-spec-gap}. The overlap is calculated using~\cref{eqn:mixed-state-overlap}, and the numerical values are shown in~\cref{fig:warmVsMix}.

\end{document}